\let\cclass\textrm
\let\P\relax
\newcommand{\P}{\cclass{P}}
\newcommand{\BQP}{\cclass{BQP}}
\newcommand{\BQNC}{\cclass{BQNC}}
\newcommand{\PSPACE}{\cclass{PSPACE}}
\newcommand{\BPP}{\cclass{BPP}}
\newcommand{\HQC}{\cclass{HQC}}
\newcommand{\FewTierHQC}{\cclass{FewTierHQC}}
\newcommand{\JC}{\cclass{JC}}
\newcommand{\NC}{\cclass{NC}}
\let\exp\relax
\newcommand{\exp}{\cclass{exp}}
\newcommand{\polylog}{\cclass{polylog}}
\newcommand{\poly}{\cclass{poly}}
\newcommand{\subexp}{\cclass{subexp}}
\newtheorem{theorem}{Theorem}[section]
\newtheorem{lemma}[theorem]{Lemma}
\newtheorem{conjecture}[theorem]{Conjecture}
\newtheorem{problem}[theorem]{Problem}
\theoremstyle{definition}
\newtheorem{definition}[theorem]{Definition}
\theoremstyle{remark}
\newtheorem{remark}[theorem]{Remark}
\newcommand{\NN}{\ensuremath{\mathbb{N}}}
\newcommand{\fid}{\operatorname{F}}
\newcommand{\bigabs}[1]{\bigl\lvert #1 \bigr\rvert}
\newcommand{\Bigabs}[1]{\Bigl\lvert #1 \Bigr\rvert}
\newcommand{\out}{\mathtt{OUT}}
\newcommand{\vout}{\mathtt{VKNOWN}}
\newcommand{\vhout}{\mathtt{VHKNOWN}}
\newcommand{\tnorm}[1]{\norm{#1}_1}
\newcommand{\cdist}[2]{\tnorm{#1 - #2}}
\newcommand{\entrance}{\textsc{entrance}}
\newcommand{\exit}{\textsc{exit}}
\newcommand{\invalid}{\texttt{INVALID}}
\newcommand{\Vknown}{V_{\text{known}}}
\newcommand{\Viknown}{V_{\text{knowninit}}}
\newcommand{\Vfknown}{V_{\text{knownfinal}}}
\newcommand{\Vhknown}{V_{\text{known}}^{\textit{hist}}}
\newcommand{\Vcknown}{V_{\text{known}}^{\textit{current}}}
\newcommand{\Vnknown}{V_{\text{known}}^{\textit{new}}}
\newcommand{\Outliers}{\text{Outliers}}
\newcommand{\abort}{\texttt{ABORT}}
\newcommand{\emptydict}{\{\}}
\newcommand{\while}{\texttt{while}}
\newcommand{\A}{\mathcal{A}}
\let\C\relax
\newcommand{\C}{\mathcal{C}}
\newcommand{\J}{\mathcal{J}}
\renewcommand{\H}{\mathcal{H}}
\newcommand{\M}{\mathcal{M}}
\newcommand{\T}{\mathcal{T}}
\newcommand{\QuantumTierSimulator}{\textrm{QuantumTierSimulator}}
\newcommand{\ClassicalTierSimulator}{\textrm{ClassicalTierSimulator}}
\newcommand{\ClassicalSimulationWrapper}{\textrm{ClassicalSimulationWrapper}}
\newcommand{\JozsaQuantumTierSimulator}{\textrm{JozsaQuantumTierSimulator}}
\newcommand{\BottleneckQuantumTierSimulator}{\textrm{BottleneckQuantumTierSimulator}}
\newcommand{\SimulateOracle}{\textrm{SimulateOracle}}
\newcommand{\bottleneck}{\textrm{Bottleneck}}
\newcommand{\merge}{\textrm{Merge}}
\newcommand{\QuantumLayerSimulator}{\textrm{QuantumLayerSimulator}}
\DeclareMathOperator*{\dsPr}{\mathbb{P}}
\let\Pr\relax
\DeclareMathOperator{\Pr}{\mathbb{P}}
\title{\bfseries\Large Computations with Greater Quantum Depth Are Strictly More Powerful (Relative to an Oracle)}
\author{Matthew Coudron\thanks{%
        NIST/QuICS, University of Maryland, USA.
        Email: \texttt{mcoudron@umd.edu}.
        This work was completed while at the IQC, University of Waterloo, Canada.
    }
    \and
    Sanketh Menda\thanks{%
        University of Waterloo, Canada.
        Email: \texttt{sgmenda@uwaterloo.ca}.
    }
}
\date{April 22, 2020}
\begin{document}
% I am
\sloppy

\maketitle
\begin{abstract}
    A conjecture of Jozsa (arXiv:quant-ph/0508124) states that any
    polynomial-time quantum computation can be simulated by
    polylogarithmic-depth quantum computation interleaved with polynomial-depth
    classical computation.  Separately, Aaronson conjectured that there exists
    an oracle $\mathcal{O}$ such that $\textrm{BQP}^{\mathcal{O}} \neq
    (\textrm{BPP}^\textrm{BQNC})^{\mathcal{O}}$.  These conjectures are
    intriguing allusions to the unresolved potential of combining classical and
    low-depth quantum computation.  In this work we show that the Welded Tree
    Problem, which is an oracle problem that can be solved in quantum polynomial
    time as shown by Childs et al. (arXiv:quant-ph/0209131), cannot be solved in
    $\textrm{BPP}^{\textrm{BQNC}}$, nor can it be solved in the class that Jozsa
    describes.  This proves Aaronson's oracle separation conjecture and provides
    a counterpoint to Jozsa's conjecture relative to the Welded Tree oracle
    problem.  More precisely, we define two complexity classes, $\textrm{HQC}$
    and $\textrm{JC}$ whose languages are decided by two different families of
    interleaved quantum-classical circuits. $\textrm{HQC}$ contains
    $\textrm{BPP}^\textrm{BQNC}$ and is therefore relevant to Aaronson's
    conjecture, while $\textrm{JC}$ captures the model of computation that Jozsa
    considers. We show that the Welded Tree Problem gives an oracle separation
    between either of $\{\textrm{JC}, \textrm{HQC}\}$ and $\textrm{BQP}$.
    Therefore, even when interleaved with arbitrary polynomial-time classical
    computation, greater ``quantum depth'' leads to strictly greater
    computational ability in this relativized setting.
\end{abstract}

%------------------------------------------------------------------------------%
\section{Introduction}
%------------------------------------------------------------------------------%

%------------------------------------------------------------------------------%
\subsection{The Power of Hybrid Quantum Computation}
%------------------------------------------------------------------------------%

Our work is inspired by the following conjecture in quantum computing folklore.
Variants of this conjecture have been considered by Jozsa~\cite{Jozsa06} and
Aaronson~\cite{Aaronson05, Aaronson11, Aaronson14}.

\begin{conjecture}[Folklore]
    \label{conj:folklore}
    Any polynomial-time quantum computation can be simulated by a polynomial-size
    classical computation interleaved with polylogarithmic-depth quantum
    computation.
\end{conjecture}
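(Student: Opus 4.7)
The plan is to attempt to prove this by constructing an explicit interleaved classical-quantum simulation of an arbitrary polynomial-time quantum circuit $C$. The most natural vehicle is measurement-based quantum computation (MBQC). First, I would rewrite $C$ as a measurement pattern on a cluster state whose graph has $\poly(n)$ vertices; the cluster state itself can be prepared from $\ket{+}^{\otimes \poly(n)}$ by a single layer of commuting controlled-$Z$ gates, so the entangled resource is generated in constant quantum depth. Simulating $C$ then reduces to performing an adaptive sequence of single-qubit measurements, where the basis for each qubit is determined from prior outcomes by a polynomial-time classical computation that tracks Pauli byproducts.

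The crucial second step is to argue that these adaptive rounds can be organized into only $\polylog(n)$ \emph{waves}, so that all quantum operations between consecutive pieces of classical processing have polylog depth. Standard constructions yield roughly one wave per layer of non-Clifford gates in $C$, which is too many. So I would attempt to compress these waves by: (a) applying a Clifford$+T$ decomposition and scheduling all Clifford byproduct corrections classically via Gottesman--Knill tracking, so that only $T$-dependencies remain genuinely adaptive; (b) using magic state distillation with many magic states prepared in parallel at constant depth per wave; (c) exploiting Cleve--Watrous-style parallelization to flatten arithmetic-like subroutines such as modular exponentiation or the quantum Fourier transform.

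The hard part, and the reason this conjecture remains open, is step (c) in full generality: the adaptive dependency DAG of a generic MBQC simulation has depth proportional to the $T$-depth of $C$, which can be polynomial. Collapsing this to $\polylog$ requires finding parallel structure that need not exist in an arbitrary $\BQP$ computation --- in particular, the adaptive basis of a wave-$k$ measurement can functionally depend on the outcome of a wave-$(k-1)$ measurement that the classical controller cannot predict, forcing a sequential chain. I would expect any attempted proof along these lines to break down precisely on problems in which quantum information must flow sequentially through many queries to a structured oracle, such as a quantum walk on an expander-like graph --- which, given the abstract, is exactly the regime the rest of this paper turns around to exploit as evidence that the conjecture may well fail relativized.
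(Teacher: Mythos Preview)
The statement you are addressing is a \emph{conjecture}, not a theorem, and the paper does not prove it. Quite the opposite: the entire paper is devoted to giving evidence against it, by exhibiting an oracle (the Welded Tree oracle) relative to which $\BQP$ strictly contains both $\HQC$ and $\JC$. There is therefore no ``paper's own proof'' to compare your proposal against.

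To your credit, your write-up is internally honest about this: you correctly locate the obstruction at step~(c), namely that the adaptive dependency depth of a generic MBQC pattern tracks the $T$-depth of the original circuit, which can be polynomial, and no known technique collapses this to $\polylog$ for arbitrary $\BQP$ computations. Your closing remark that the failure mode is precisely sequential quantum information flow through an oracle --- as in a quantum walk --- is exactly the intuition the paper formalizes. But what you have written is a sketch of why a natural attack stalls, not a proof; labeling it a ``proof proposal'' for Conjecture~\ref{conj:folklore} is a category error. The conjecture remains open unrelativized, and the paper's results show it is false relative to the Welded Tree oracle.
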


Intriguingly, this conjecture is known to hold for some of the most influential
quantum algorithms.  For example, Cleve and Watrous~\cite{CleveW00} showed that
Shor's algorithm for \textsc{Factoring} can be implemented using log-depth
polynomial-size quantum circuits with polynomial-time classical pre- and
post-processing.  Indeed, one might be able to use a similar methodology to
parallelize many quantum algorithms that rely on the quantum Fourier transform.

Similarly, most oracle separations that show a quantum speedup seem to be
consistent with Conjecture~\ref{conj:folklore}. The prototypical problems that
exhibit an exponential quantum speedup---\textsc{Simon's problem}~\cite{Simon97}
and \textsc{Forrelation}~\cite{Aaronson10, AaronsonA18}---can both be solved using
\emph{constant-depth} quantum circuits with oracle access, and polynomial-time
classical pre- and post-processing.  Thus, while they constitute oracle
separations between $\P$ and $\BQP$, they do not, on their own, suggest an
oracle separation between $\BQP$ and the sort of class that
Conjecture~\ref{conj:folklore} considers.  All of this could be taken as an
indication that the class described in Conjecture~\ref{conj:folklore} is very
powerful.

There are at least two seemingly incomparable interpretations of the model of
computation considered in Conjecture~\ref{conj:folklore}.  One was proposed by
Jozsa~\cite{Jozsa06}, and could be described in shorthand as version of
$\BQNC^\BPP$.  Another, $\BPP^\BQNC$, was considered by
Aaronson~\cite{Aaronson05, Aaronson11, Aaronson14}, who conjectured an oracle
separation between this class and $\BQP$.  In our work we will define the first
class under the name $\JC$ and the second class under the name $\HQC$ in order
to avoid confusion about the oracle access model for these classes, and because
of other technicalities.  These complexity classes may be considered
hardware-motivated mathematical models in the sense that they endeavor to
capture the computational problems which can be solved by a quantum device of
limited depth (due either to limited coherence time, or some other restriction)
when combined, in one of two reasonable ways, with classical side-processing of
arbitrary polynomial depth.

Note that, in order to disprove either interpretation of
Conjecture~\ref{conj:folklore}, one would necessarily need to separate $\P$ from
$\BQP$ as a prerequisite, and such a statement may be very difficult to prove as
an unconditional mathematical fact.  For example, it would require separating
$\P$ from $\PSPACE$.  In this work we will prove a separation between $\BQP$ and
both of $\{\JC, \HQC\}$ relative to the Welded Tree oracle.  At the end, we also
remark on the possibility of extending this result to a separation based on a
cryptographic assumption.

%------------------------------------------------------------------------------%
\subsection{Results and Organization}
%------------------------------------------------------------------------------%

In this subsection we will give more detailed (yet still informal) statements of
our results, and provide pointers to the sections of this paper which state and
prove each result formally.  Section \ref{sec:preliminaries} of this paper
covers preliminaries which are essential to formalizing our results, including
the definitions of complexity classes $\JC$ and $\HQC$ referred to in the
abstract, as well as a discussion of the oracle-access model for these classes,
and background on the Welded Tree Problem.

%------------------------------------------------------------------------------%
\subsubsection{Hybrid Quantum Computation with Few Tiers}
\label{sec:hybr-quant-comp}
%------------------------------------------------------------------------------%

Section \ref{sec:few-tiers} of this paper is a warm-up meant to build intuition
for the techniques used in our main results.  In Section \ref{sec:few-tiers} we
show how our techniques can easily prove that a limited class of Hybrid Quantum
Circuits, those with ``Few Tiers'', cannot be used to solve the Welded Tree
Problem with high probability.  The result can be summed up in the following
theorem.

\begin{theorem}[Informal]\label{thm:few-tier-informal}
    No quantum algorithm with oracle access to the Welded Tree oracle, and using
    only $O(\polylog(n))$-depth quantum circuits, alternated with polynomial time
    classical computations at most $\sqrt{n}$ times, can solve the Welded Tree
    Oracle problem with probability higher than
    $2^{-\Omega(n)}$.
\end{theorem}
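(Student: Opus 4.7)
The plan is to formalize the intuition that a hybrid algorithm with few tiers can only learn about vertices within small graph-distance of the starting vertex $\entrance$. I would introduce a (random) set $\Vknown$ of vertex labels that the algorithm has ``discovered,'' initialized as $\{\entrance\}$, and show by induction on the number of tiers that, with probability $1 - 2^{-\Omega(n)}$, after $k$ alternations of classical and quantum computation every vertex in $\Vknown$ lies within graph-distance $O(k \cdot \polylog(n))$ of $\entrance$ in the welded tree.

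For the classical phases, because vertex labels are uniformly random $2n$-bit strings, a polynomial-time classical computation (even with oracle access) cannot produce a valid label it has not already seen, except with probability $\poly(n)\cdot 2^{-\Omega(n)}$. Hence a classical tier extends $\Vknown$ only by oracle queries on already-known labels, adding at most a polynomial number of new vertices, each at graph-distance one from $\Vknown$. For the quantum phases, I would use a BBBV-style hybrid argument in the spirit of the original Welded Tree analysis of Childs et al.: compare the execution on the true oracle against a ``truncated'' oracle that agrees with the true oracle only on vertices within distance $d=\polylog(n)$ of $\Vknown$ and otherwise returns a generic response. A depth-$d$ quantum circuit places at most $2^{-\Omega(n)}$ total amplitude on basis states corresponding to vertex labels outside this $d$-ball, since each of the $d$ query layers can grow the ``reached'' vertex set by only one graph edge, and escaping otherwise requires guessing a random $2n$-bit string. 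A standard hybrid bound then shows the two executions are $2^{-\Omega(n)}$-close in trace distance, and the labels revealed by the subsequent measurement are (whp) contained in the $d$-extension of $\Vknown$.

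Combining these, after $\sqrt{n}$ tiers of polylog-depth quantum computation $\Vknown$ lies within graph-distance $\sqrt{n}\cdot\polylog(n) = o(n)$ of $\entrance$, so the exit vertex---which sits at graph-distance roughly $2n$---is not in $\Vknown$ except with probability $2^{-\Omega(n)}$. Because the label of $\exit$ is then a uniformly random $2n$-bit string independent of the algorithm's view, outputting $\exit$ succeeds with probability at most $2^{-\Omega(n)}$.

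The main obstacle I expect is the quantum-tier step: one must carefully define ``amplitude on unknown labels'' when the algorithm's workspace holds arbitrary superpositions over vertex registers, ancillas, and classical side-state, and push through a BBBV-type hybrid even though the truncated and true oracles can differ on an exponentially large set of genuine edges. A secondary subtlety is that classical side-processing can in principle extract partial information from measurement outcomes in ways that blur the definition of $\Vknown$; I would handle this by defining $\Vknown$ in terms of labels literally written in the classical register, which suffices because any label absent from that register remains, conditionally, uniformly random and hence ``useless'' as an oracle query.
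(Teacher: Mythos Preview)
Your proposal has a genuine gap in the treatment of classical tiers. You claim that after $k$ alternations every vertex in $\Vknown$ lies within graph-distance $O(k\cdot\polylog(n))$ of $\entrance$, and you justify the classical part by saying a classical tier ``adds at most a polynomial number of new vertices, each at graph-distance one from $\Vknown$.'' But $\Vknown$ grows \emph{during} the classical tier: a polynomial-depth classical circuit with oracle access can chain queries sequentially (query $\entrance$, learn neighbor $v_1$, query $v_1$, learn $v_2$, \ldots) and thereby discover vertices at graph-distance $\poly(n)$ from $\entrance$ after a single classical tier. Since $\exit$ sits at graph-distance only $2n$, your distance invariant is already violated after one classical tier, and the final conclusion ``$\Vknown$ lies within graph-distance $\sqrt{n}\cdot\polylog(n)=o(n)$'' does not follow. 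Indeed, graph-distance is \emph{not} the obstruction for classical algorithms on the Welded Tree; what stops them is the exponential branching in the middle columns, not the diameter.

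The paper's argument is structured differently and avoids this pitfall. Rather than tracking graph-distance, it tracks the \emph{cardinality} $\abs{\Vknown}$: each quantum layer at most quadruples $\abs{\Vknown}$, and each classical tier adds at most $g(n)\cdot d$ vertices, so after $\eta$ tiers of depth-$d$ quantum computation $\abs{\Vknown}\le 4^{\eta(d+1)}\cdot g(n)\cdot d$, which is subexponential when $\eta\le\sqrt{n}$ and $d=\polylog(n)$. The paper then exhibits a classical simulation of the hybrid circuit that makes only $\abs{\Vknown}$ oracle queries and whose output distribution is $2^{-\Omega(n)}$-close to the true one, and finishes by invoking the \emph{classical} query lower bound of Childs et al.\ (and Fenner--Zhang) for the Welded Tree Problem as a black box. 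Your proposal never appeals to that classical lower bound, and without it (or some substitute that actually handles polynomial-depth classical exploration) the argument does not close.
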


\paragraph{Proof Overview.} Our analysis for
Theorem~\ref{thm:few-tier-informal}, and for the main theorems in this paper as
well, works by exhibiting a classical simulation algorithm for the hybrid
quantum oracle algorithm in question.  Our classical simulation produces an
exponentially close approximation of the output of the original hybrid quantum
algorithm for this oracle problem, and yet it uses only sub-exponentially many
classical oracle queries to do so.  It then follows by the known classical lower
bound for the Welded Tree Problem, due to Childs et al.~\cite{ChildsCDFGS03}
(see Theorem \ref{thm:class-lower-bound}), that the hybrid quantum algorithm
cannot possibly be solving the Welded Tree Problem with better than
(sub-)exponentially small probability.

Our classical simulation algorithm makes use of the fact that, in the Welded
Tree Problem, the $2n$-bit labels of valid vertices of the welded tree are
randomly chosen from among the exponentially larger set of all $2n$-bit strings.
See Subsection \ref{subsec:weldedtree} for more details on the definition the
Welded Tree Problem.  Therefore, while a quantum algorithm can query the welded
tree oracle in superposition for information about every $2n$-bit string, it is
intuitive that all but an exponentially small fraction of the mass of that
quantum query will be supported on either invalid $2n$-bit strings, or on valid
$2n$-bit strings which correspond to vertices that are adjacent to previously
known vertices in the welded tree graph.  If this were not so it would imply a
classical query algorithm which can non-trivially guess a valid-but-unknown
vertex label using sub-exponentially many queries, and that contradicts known
results of \cite{ChildsCDFGS03}.  Therefore, a classical query algorithm can
produce a classical description of a quantum state which is exponentially close
to the true output of a quantum oracle query in exponential time and using
sub-exponentially many classical oracle queries.  This is done simply by
assuming that the oracle returns $\invalid$ on any input in the superposition
which is not a previously encountered valid vertex label.  Moreover, since the
number of valid vertex labels in the first $\sqrt{n}$ levels of the welded tree
is sub-exponential, this classical simulation can be performed for
$\sqrt{n}$-depth relativized quantum circuits using only sub-exponentially many
classical queries.  See Section \ref{sec:few-tiers} for a rigorous discussion.

%------------------------------------------------------------------------------%
\subsubsection{Jozsa's Conjecture}
%------------------------------------------------------------------------------%

In Section \ref{sec:relat-jozs-conj} of this paper we observe that the
techniques of Section \ref{sec:few-tiers} can be augmented to handle a class of
circuits considered by Jozsa.

\begin{quote}
    ``\textbf{Conjecture:} Any polynomial time quantum algorithm can be implemented
    with only $O(\log n)$ quantum layers interspersed with polynomial time classical
    computations.''

    Richard Jozsa~\cite[Section 8]{Jozsa06}
\end{quote}

Our result is summarized in the following statement, which is formalized and
proven as Theorem \ref{thm:jozsa-formal} in this paper.

\begin{theorem}[Informal] \label{thm:jozsa-informal}
    No quantum algorithm with oracle access to the Welded Tree oracle, and using
    only $O(\polylog(n))$ quantum layers of polynomial width, interspersed with
    polynomial time classical computations, can solve the Welded Tree Problem
    with probability higher than $2^{-\Omega(n)}$.
\end{theorem}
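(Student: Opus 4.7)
The plan is to mirror the classical-simulation strategy developed for Theorem~\ref{thm:few-tier-informal} and exhibit a classical algorithm that approximates the output distribution of any Jozsa-style hybrid algorithm to within $2^{-\Omega(n)}$ total variation distance while using only sub-exponentially many classical oracle queries. Combined with the classical query lower bound of Childs et al.\ (Theorem~\ref{thm:class-lower-bound}), this immediately yields the claimed success-probability bound: any algorithm whose output is $2^{-\Omega(n)}$-close in total variation to some sub-exponential-query classical algorithm inherits the latter's hopelessly low success probability on the Welded Tree Problem.

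Concretely, I would model a Jozsa-style computation as a sequence of $O(\polylog(n))$ polynomial-width quantum layers (each a gate slice possibly containing parallel oracle query gates) separated by polynomial-time classical computations which may themselves query the oracle and adaptively choose the next quantum layer. The simulator carries a growing set $\Vknown$ of valid vertex labels observed so far, initialized to the entrance label. To simulate a classical block, the simulator runs it verbatim, forwarding each of its $\poly(n)$ oracle queries to the real oracle and inserting every successfully queried label into $\Vknown$. To simulate a quantum layer, the simulator maintains an explicit classical description of the current state and applies the layer's gates, but replaces each oracle query gate by its \emph{truncation}: on every basis input outside $\Vknown$ (extended to its immediate graph neighborhood) the truncated oracle returns \invalid. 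Since only classical blocks enlarge $\Vknown$, and there are only polylogarithmically many of them each making polynomially many queries, both $|\Vknown|$ and the simulator's total classical-query count remain polynomial, hence sub-exponential.

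The core analytical step, inherited from the welded-tree arguments of Childs et al., is that because valid vertex labels are uniformly random $2n$-bit strings drawn from an exponentially larger space, any quantum state whose support is confined to $\Vknown$ (with $|\Vknown|$ sub-exponential) places at most $2^{-\Omega(n)}$ mass on valid-but-unknown labels under a single oracle query; swapping in the truncation therefore introduces at most $2^{-\Omega(n)}$ error per query gate, and summing over the $\poly(n)$ query gates per layer and $O(\polylog(n))$ layers keeps the accumulated error at $2^{-\Omega(n)}$. The main obstacle, and the augmentation beyond Section~\ref{sec:few-tiers}, is maintaining an induction invariant that guarantees the per-query error bound at \emph{every} layer: after each truncated query and each classical update to $\Vknown$, the simulated quantum state must still be (exponentially close to) one supported on $\Vknown$, so that the uniform-random-label argument applies uniformly across all layers. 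Making this precise requires choosing the right notion of ``support'' for a quantum state relative to $\Vknown$, verifying that truncated oracle queries preserve it up to $2^{-\Omega(n)}$ error, and verifying that classical updates can only shrink the unknown-mass fraction; once this invariant is in place, iterating the one-layer error bound through the polylog layers closes the argument.
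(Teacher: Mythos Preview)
Your high-level strategy---exhibit a classical simulation with sub-exponentially many queries and $2^{-\Omega(n)}$ simulation error, then invoke Theorem~\ref{thm:class-lower-bound}---is exactly the paper's approach, and the per-layer error analysis you sketch is also the right idea. But there is a genuine gap in your accounting for $\Vknown$.

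The claim ``since only classical blocks enlarge $\Vknown$, \ldots\ both $|\Vknown|$ and the simulator's total classical-query count remain polynomial'' cannot be made to work. The Jozsa model permits several consecutive quantum layers before a classical intermission (each quantum tier has depth $q>1$). After one truncated oracle query acts on a vertex $v\in\Vknown$, the simulated state acquires amplitude on $v$'s neighbors in $N(\Vknown)\setminus\Vknown$. If $\Vknown$ is not updated before the next quantum layer, that layer's truncation returns $\invalid$ on those neighbors even though they are valid and now carry $\Theta(1)$ amplitude; the per-layer error is then constant, not $2^{-\Omega(n)}$, and your induction invariant (``state supported on $\Vknown$'') fails immediately. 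Your parenthetical ``extended to its immediate graph neighborhood'' hints at the fix, but extending by one neighborhood per quantum layer \emph{is} enlarging $\Vknown$ during quantum simulation, and after $d$ quantum layers the extension has size $\Theta(3^d)$.

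The paper's simulator (Algorithms~\ref{alg:SimulateOracle} and \ref{alg:QuantumLayerSimulator}, reused inside Algorithm~\ref{alg:JozsaQuantumTierSimulator}) does exactly this: at every quantum layer, whenever the truncated oracle encounters a label already appearing as a neighbor in $\Vknown$, it issues a fresh classical query and grows $\Vknown$ accordingly. The result is $|\Vknown|\le 4^{d}\cdot\bigl(|\Vknown^{\mathrm{init}}|+c(n)g(n)\bigr)$ after $d$ quantum layers (Lemma~\ref{lem:jozsa-composition}), so with $d=O(\polylog(n))$ one gets $|\Vknown|=2^{O(\polylog(n))}$---quasi-polynomial, not polynomial, but still sub-exponential and hence still subject to Theorem~\ref{thm:class-lower-bound}. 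With this correction to the growth of $\Vknown$, the rest of your argument goes through and matches the paper's.
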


Since the Welded Tree Problem can be solved in $\BQP$, as shown in
\cite{ChildsCDFGS03}, our theorem implies an oracle separation between $\BQP$
and the class that Jozsa describes.  That class will be defined precisely under
the name $\JC$ in Section \ref{sec:preliminaries}.

\paragraph{Proof Overview.} The principal difference between a Jozsa circuit and
a hybrid quantum circuit is the way interleaving works. A Jozsa circuit is a
polylogarithmic depth quantum circuit with polynomial size classical circuits
embedded in it. Moreover, the input to the embedded classical circuits is a
classical bitstring, produced by measuring a subset of the output of the
previous quantum layer in the classical basis. (See
Section~\ref{sec:jozsas-class} for a formal definition.) Observe that the
classical circuits can be decoupled from this interleaving. Therefore we can
repurpose the analysis from Subsubsection~\ref{sec:hybr-quant-comp}.

%------------------------------------------------------------------------------%
\subsubsection{Full Hybrid Quantum Computation}
%------------------------------------------------------------------------------%

In Section \ref{sec:hqc} of this paper we extend the analysis from Section
\ref{sec:few-tiers} to handle Hybrid Quantum Circuits with any polynomial number
of tiers.  Doing so requires introducing a new concept which we refer to as an
``Information Bottleneck'', together with new techniques to formalize the use of
this concept. Our main result, which is summarized in the statement below, is
formalized and proven as Theorem \ref{thm:mainthm-formal} in this paper.

\begin{theorem}[Informal]\label{thm:mainthm-informal}
    No quantum algorithm with oracle access to the Welded Tree oracle, and using
    only $O(\polylog(n))$-depth quantum circuits, alternated with polynomial
    time classical computations, polynomially many times, can solve the Welded
    Tree Oracle problem with probability higher than $2^{-\Omega(n)}$.  
\end{theorem}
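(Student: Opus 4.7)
The plan is to extend the classical-simulation strategy from Theorem~\ref{thm:few-tier-informal} from $\sqrt{n}$ tiers to an arbitrary polynomial number of tiers. Concretely, I would exhibit a classical algorithm that makes only sub-exponentially many queries to the welded tree oracle and whose output distribution is within $2^{-\Omega(n)}$ total-variation distance of the output of the given hybrid quantum computation. Since the Welded Tree Problem requires $2^{\Omega(n)}$ classical queries by Theorem~\ref{thm:class-lower-bound}, any hybrid algorithm that succeeds with probability $2^{-o(n)}$ would yield a contradiction.

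As in Section~\ref{sec:few-tiers}, the simulator would maintain a \emph{knowledge set} $\Vknown$ consisting of all valid $2n$-bit vertex labels observed so far, and would process the hybrid circuit tier by tier. Classical tiers are executed honestly, with any newly revealed valid vertices appended to $\Vknown$. A quantum tier is simulated (in exponential time but with only sub-exponentially many true oracle queries) by first classically querying the oracle at the $\polylog(n)$-step neighbourhood of $\Vknown$, and then running the polylog-depth quantum circuit with a ``fake'' oracle that returns $\invalid$ on every input outside that neighbourhood; the simulated measurement outcomes then update $\Vknown$ before the next tier.

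The new ingredient that replaces the crude depth-$\sqrt{n}$ counting from Theorem~\ref{thm:few-tier-informal} is an \emph{Information Bottleneck} argument. Although polynomially many tiers of polylog depth could in principle entangle with exponentially many vertices inside the quantum registers, only the polynomially many measurement outcomes survive to the classical side between tiers, so $\Vknown$ grows by at most $\poly(n)$ labels per tier and remains polynomially bounded throughout the whole run. Combined with the polylog-depth boundary extension inside each tier, this keeps the total number of true oracle queries sub-exponential, so Theorem~\ref{thm:class-lower-bound} is applicable to the resulting classical algorithm.

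The main obstacle, and the key technical content, is proving that the fake-oracle substitution inside a single quantum tier changes the output state by at most $2^{-\Omega(n)}$ in trace distance whenever $\Vknown$ has been correctly accumulated from the preceding history. I would prove this by a hybrid argument over the layers within the tier: if the true state had amplitude $2^{-o(n)}$ on an ``unknown'' valid vertex outside the boundary extension, then exhaustively simulating the tier classically (in exponential time but using only the sub-exponentially many oracle queries already issued) and measuring would produce a previously unknown valid vertex label with non-negligible probability, contradicting the Childs et al.\ classical lower bound. Summing this per-tier error over $\poly(n)$ tiers preserves the $2^{-\Omega(n)}$ bound, and combining the resulting classical simulator with Theorem~\ref{thm:class-lower-bound} completes the separation.
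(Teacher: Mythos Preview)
Your high-level structure is right, and you have correctly identified the Information Bottleneck as the missing ingredient over Theorem~\ref{thm:few-tier-informal}. But there is a genuine gap in how you propose to implement and justify that bottleneck.

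You write that ``only the polynomially many measurement outcomes survive to the classical side between tiers, so $\Vknown$ grows by at most $\poly(n)$ labels per tier.'' This is an intuition, not an argument: you never specify \emph{which} $\poly(n)$ labels to retain in the pruned $\Vknown$, nor why that choice suffices for the fake-oracle substitution in the \emph{next} tier. The $g(n)$-bit classical string passed between tiers need not literally contain any vertex labels; it may instead be correlated with many labels in the (potentially huge) history $\Vhknown$ in a way that the next quantum tier can exploit. So ``keep the labels that appear in the measurement output'' is not a well-defined rule, and ``$g(n)$ bits can encode at most $g(n)/2n$ labels'' is not the right counting.

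Relatedly, your final paragraph reduces the per-tier fake-oracle error to Childs et al.\ (Lemma~\ref{lem:discovery-probability}). That reduction works when ``unknown'' means ``outside the set of all vertices ever queried,'' i.e.\ outside $\Vhknown$. But once you prune $\Vknown \subsetneq \Vhknown$, a label can be outside $\Vknown$ yet already in $\Vhknown$, and Lemma~\ref{lem:discovery-probability} says nothing about guessing such a label. The paper closes this gap by defining the pruned $\Vknown$ \emph{semantically}: it contains every label $b$ that is valid with probability greater than $2^{-n/100}$ over black-box trees $P$ consistent with the current classical transcript (the set $\T^i_{\Vknown,x,r_{\leq i}}$ in Definition~\ref{def:TVsr}). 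The fake-oracle error bound then follows directly from this definition, and the polynomial size of $\Vknown$ is proved by a separate counting argument (Lemma~\ref{lem:keylemma1}): each label added to $\Vknown$ shrinks the set of consistent black-boxes by at least a factor $2^{-n/100}$ from below but at most $2^{-n}$ from above, forcing termination after $O(g(n)+\abs{r})$ additions. Lemma~\ref{lem:discovery-probability} is then invoked only to bound the probability that this greedy process ever selects a label outside $\Vhknown$ (the $\abort$ case). Your proposal conflates these two roles and is missing the counting argument entirely.
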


Theorem \ref{thm:mainthm-informal} implies Aaronson's conjecture that there
exists an oracle $\mathcal{O}$, namely, the Welded Tree oracle, such that
$\BQP^{\mathcal{O}} \neq (\BPP^\BQNC)^{\mathcal{O}}$ \cite{Aaronson05,
Aaronson11, Aaronson14}.

\paragraph{Proof Overview.} Let us begin by explaining why the proof of Theorem
\ref{thm:mainthm-informal} seems to require a new idea from that of Theorem
\ref{thm:few-tier-informal}.  The reason that the classical simulation Algorithm
\ref{alg:classicalsim1} in Section \ref{sec:few-tiers} is sufficient to prove
Theorem \ref{thm:adaptive-query-lower-bound} is that the set of known vertices,
$\Vknown$, constructed by Algorithm \ref{alg:classicalsim1}, grows only by a
multiple of $n$ for each quantum tier encountered.  This means that, when
simulating a circuit with only $\sqrt{n}$ quantum tiers, the set $\Vknown$ only
has subexponential size by the end of Algorithm \ref{alg:classicalsim1} and the
Algorithm is, therefore, still subject to the classical lower bound
Theorem~\ref{thm:class-lower-bound}.  However, this, while encouraging, is not
sufficient to prove our new Theorem \ref{thm:mainthm-formal}, because $\HQC$ allows
circuits in which the number of tiers is some arbitrary polynomial $\poly(n)$ in
the input size $n$.  So, the set $\Vknown$ constructed by Algorithm
\ref{alg:classicalsim1} could grow to exponential size when attempting to
simulate an arbitrary circuit in $\HQC$, and this would not meet the
prerequisites for applying Theorem \ref{thm:class-lower-bound}.

To prove Theorem \ref{thm:mainthm-formal} we employ a new idea which we will
refer to informally as the ``Information Bottleneck''.  The intuition is that,
while a circuit in $\HQC$ may have $\poly(n)$ tiers, its width is also bounded
by some polynomial $g(n)$.  Therefore, while Algorithm \ref{alg:classicalsim1}
tracks a set $\Vknown$ of known vertices that grows exponentially large as it
increases through $\poly(n)$ tiers, it seems intuitive that after the end of
each quantum tier, only $g(n)$ (or, say, $\poly(n,g(n))$ at the most) of those
known vertices should ``actually matter'' to the $\HQC$ circuit being simulated.
This is because the width of the circuit bounds the amount of classical
information that can be passed from one tier of the circuit to the next. Note
that the information passed between tiers is necessarily a classical bit string
by definition.  In Section \ref{sec:hqc} will make this intuition more formal
and use it to prove an oracle separation between $\HQC$ and $\BQP$.

%------------------------------------------------------------------------------%
\subsection{Concurrent Work}
%------------------------------------------------------------------------------%

Independent and concurrent to our work Chia, Chung, and Lai \cite{Chia19} also
investigated Conjecture~\ref{conj:folklore} in its multiple manifestations, and
proved that the conjecture is false relative to an oracle.  In particular their
work also proves the oracle separation conjectured by Aaronson \cite{Aaronson05,
Aaronson11, Aaronson14}, and gives an oracle separation against Jozsa's
conjecture.  To do this they use a different oracle problem, of their own
construction, and so their analysis is very different from ours. This provides a
thought-provoking alternative approach to studying computations which require
large quantum depth.

The starting point for the oracle construction of \cite{Chia19} is
\textsc{Simon's problem}~\cite{Simon97}, but since that problem can be solved
with just $O(1)$ quantum depth, the authors of \cite{Chia19} construct a
\emph{lifted} version of Simon's problem by using ideas from cryptography to
augment the problem in such a way that it requires higher quantum depth.  In
particular, they use techniques of \emph{pointer chasing} and \emph{domain
hiding} to construct a variant of Simon's problem in which the valid domain of
the candidate Simon function is hidden in an exponentially larger set of
strings.  The intuition is that only a high-depth computation could continue
pointer chasing for long enough to identify a valid element of the domain with
high probability.  Furthermore, since one requires a uniform superposition over
valid domain elements in order to implement Simon's algorithm, one intuitively
needs high \emph{quantum} depth to solve their oracle problem.  Interleaving
classical computation and low-depth quantum computation is not sufficient
because the required superposition over valid domain elements cannot be obtained
in such a model.  Considerable technical work is required to formalize this
argument, see \cite{Chia19}.

Chia, Chung, and Lai~\cite{Chia19} also go a step further, showing that their
oracle problem separates computations with quantum depth $d$ from those with
quantum depth $2d+1$.  This represents a sharpening of the separation between
hybrid quantum circuits and $\BQP$.  We have not carefully considered whether
there is a modification of the Welded Tree Problem which obtains a similar
separation.

As for further alternative interpretations of Conjecture \ref{conj:folklore}
which have not yet been considered, we believe that the techniques in both works
extend to establish an oracle separation between the natural hierarchy of hybrid
models, $\BPP^{\BQNC^{\BPP^{\cdots}}}$ and $\BQP$ \cite{Chia19b}.

%------------------------------------------------------------------------------%
\section{Definitions, Background, and Notation}\label{sec:preliminaries}
%------------------------------------------------------------------------------%

%------------------------------------------------------------------------------%
\subsection{Hybrid Quantum Computation}
%------------------------------------------------------------------------------%

In order to make our investigation of Conjecture~\ref{conj:folklore} more
precise, in this subsection we define a hierarchy of complexity classes based on
hybrid quantum computation. We begin by setting notation, followed by a
definition of the hierarchy, and ending with a formal statement of
Conjecture~\ref{conj:folklore}.

%------------------------------------------------------------------------------%
\subsubsection{Quantum Circuits}\label{sec:quantum-circuits}
%------------------------------------------------------------------------------%

For an introduction to uniform circuit families, quantum circuits, and the
complexity classes not defined here see Watrous~\cite{Watrous09}.

The results in this paper are not sensitive to a choice of (reasonable,
universal) gate set. Nevertheless, for concreteness, we assume that our
classical circuits are composed of \emph{Toffoli} gates, and that our quantum
circuits are composed of \emph{Hadamard}, \emph{Toffoli}, and \emph{Phase}
gates. In addition, these circuits may contain \emph{query gates} (discussed in
Section~\ref{sec:oracl-quant-world}), \emph{auxiliary qubit gates} (which take
no input and produce a qubit in the $\ket{0}$ state), and \emph{garbage gates}
(which take an input and produce no output.) For an introduction to quantum
circuits, see Watrous~\cite{Watrous11}.

We assume, for simplicity and without loss of generality, that our quantum
circuits have the following form: we receive an $n$-bit input, which is then
padded with $p(n)$ qubits in the $\ket{0}$ state for some fixed polynomial $p$,
we apply a unitary---the \emph{unitary purification} of this circuit---to these
$n+p(n)$ qubits, and measure the first qubit in the computational basis and
consider that to be the output. In cases where we expect an $s$-bit output, we
measure the first $s$ qubits (and if $s$ is greater than the number of output
qubits, we pad zeros to the end of the output.)

\begin{definition}
    Define a \emph{$(m,s)$-classical layer} to be an $m$-input, $s$-output,
    depth-$1$ classical circuit.
\end{definition}

\begin{definition}
    Define a \emph{$(m,s)$-quantum layer} to be an $m$-input, $s$-output,
    depth-$1$ quantum circuit.
\end{definition}

\begin{definition}
    We say that two consecutive circuits are \emph{compatible} if the number of
    outputs of the first circuit is greater than or equal to the number of inputs
    of the second circuit. It is assumed that the extra outputs of the first
    circuit are traced out.
\end{definition}

\begin{definition}
    Define a \emph{$(m,s,d)$-classical tier} to be an $m$-input, $s$-output,
    depth-$d$ classical circuit. In other words, a $(m,s,d)$-classical tier
    consists of $d$ compatible classical layers composed with each other.
\end{definition}

\begin{definition}
    Define an \emph{$(m,s,d)$-quantum tier} to be an $m$-input $s$-output
    depth-$d$ quantum circuit followed by a measurement in the computational
    basis. In other words, a $(m,s,d)$-quantum tier consists of $d$ compatible
    quantum layers composed with each other, followed by a measurement in the
    computational basis.
\end{definition}

\begin{definition} \label{def:hybridcircuit}
    Define a \emph{$(n,\eta,c,q,g)$-hybrid-quantum circuit} $H$ to be a
    composition of $\eta$ circuits
    \begin{equation}
        C_1 \circ C_2 \circ \cdots \circ C_\eta
    \end{equation}
    such that the following hold.
    \begin{enumerate}
        \item $C_1$ is an $(n, g, c)$-classical tier.
        \item $C_\eta$ has at least one output.
        \item For odd $i > 1$, $C_i$ is an $(g, g, c)$-classical tier.
        \item For even $i$, $C_i$ is an $(g, g, q)$-quantum tier.
        \item $g(n)$ is the width of $H$.
    \end{enumerate}
    We define the output of this circuit to be the first bit of the output of
    $C_\eta$.
\end{definition}

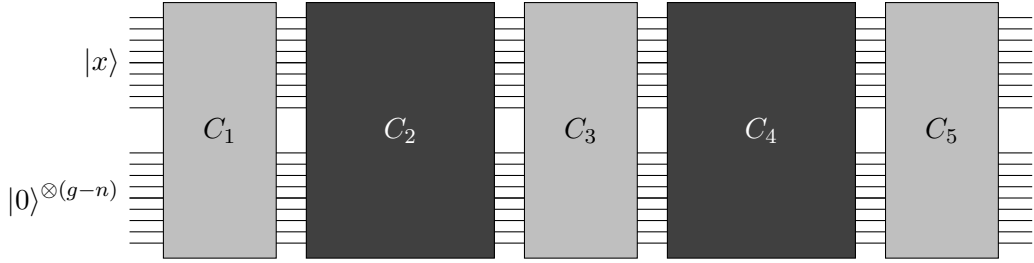
\begin{figure}[H]
    \centering
    \begin{tikzpicture}[scale=0.30,
        turn/.style={draw,
            minimum height=14mm,
            minimum width=8mm,
            fill = black,
        text=white},
        qtier/.style={draw,
            minimum height=34mm,
            minimum width=15mm,
            fill = lightgray,
        text=black},
        ctier/.style={draw,
            minimum height=34mm,
            minimum width=25mm,
            fill = darkgray,
        text=white},
        >=latex]

        \foreach \y in {-5,-4.5,...,-1} {
            \draw (-4,\y) -- (36,\y);
        }
        \foreach \y in {1,1.5,...,5} {
            \draw (-4,\y) -- (36,\y);
        }

        \draw (-4,3) node[left] {$\ket{x}$} -- (32,3);
        \draw (-4,-3) node[left] {$\ket{0}^{\otimes (g-n)}$} -- (32,-3);

        \node (Q0) at (0,0) [qtier] {$C_1$};
        \node (Q1) at (16,0) [qtier] {$C_3$};
        \node (Q2) at (32,0) [qtier] {$C_5$};

        \node (C0) at (8,0) [ctier] {$C_2$};
        \node (C0) at (24,0) [ctier] {$C_4$};
    \end{tikzpicture}
    \caption{An illustration of an $(n,5,c,q,g)$-hybrid-quantum circuit. The
    light boxes represent quantum circuits and the dark boxes represent classical
    circuits.}
    \label{fig:hybrid-quantum-circuit}
\end{figure}

%------------------------------------------------------------------------------%
\subsubsection{A Hierarchy of Hybrid Quantum Circuits}
%------------------------------------------------------------------------------%

Informally, $\HQC^i$ is the class of problems solvable by polynomial-size classical
circuits with embedded $O(\log^i(n))$-depth quantum circuits, and $\HQC$ is the
class problems solvable by polynomial-size classical circuits with embedded
$\polylog(n)$-depth quantum circuits. This notation is analogous to $\NC^i$ and
$\NC$, see Cook~\cite{Cook85}.

\begin{definition}
    $\HQC^i$ is the class of promise problems solvable by a uniform family of
    $(n,\poly(n),\poly(n),O(\log^i(n)), \poly(n))$-hybrid-quantum circuits with
    probability of error bounded by $1/3$.
\end{definition}

\begin{definition}
    $\HQC$ is the union of $\HQC^i$ over all nonnegative $i$; in symbols,
    \begin{equation}
        \HQC = \bigcup_{i \geq 0} \HQC^i.
    \end{equation}
\end{definition}

\begin{remark}\label{rem:hqc-and-bpp-bqnc}
    Observe that $\HQC$ contains $\BPP^\BQNC$ as we can restrict all the embedded
    quantum circuits in $\HQC$ to output just one bit.
\end{remark}

%------------------------------------------------------------------------------%
\subsubsection{Jozsa's Class}\label{sec:jozsas-class}
%------------------------------------------------------------------------------%

\begin{definition} \label{def:jozsacircuit}
    Define a \emph{$(n,\eta,c,q,g)$-jozsa-quantum circuit} $J$ to be a composition of
    $\eta$ circuits
    \begin{equation}
        Q_1 \circ ((\Pi \circ C_1) \otimes \mathbb{I}) \circ Q_2 \circ
        \cdots \circ Q_\eta \circ ((\Pi \circ C_\eta) \otimes \mathbb{I}) 
    \end{equation}
    such that the following hold.
    \begin{enumerate}
        \item $\Pi$ is a classical basis measurement on $g(n)/2$ qubits.
        \item $Q_1$ is an $(n, g(n), q)$-quantum tier.
        \item For all $i$, $C_i$ is an $(g(n)/2, g(n)/2, c)$-classical tier.
        \item For all $i>1$, $Q_i$ is an $(g(n), g(n), q)$-quantum tier.
        \item $g(n)$ is the width of $J$.
    \end{enumerate}
\end{definition}

\begin{definition}\label{def:jci}
    $\JC^i$ is the class of promise problems solvable by a uniform family of
    $(n, \poly(n), \poly(n), O(\log^i(n)), \poly(n))$-jozsa-quantum circuits
    with probability of error bounded by $1/3$.
\end{definition}

\begin{definition}\label{def:jcunion}
    $\JC$ is the union of $\JC^i$ over all nonnegative $i$; in symbols,
    \begin{equation}
        \JC = \bigcup_{i \geq 0} \JC^i.
    \end{equation}
\end{definition}

Define the registers \texttt{R1} and \texttt{R2} as shown in
Figure~\ref{fig:jozsa-registers}.

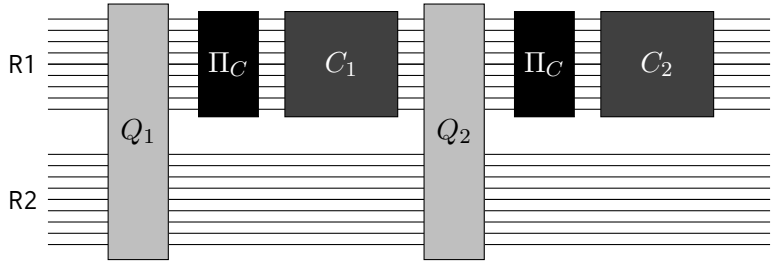
\begin{figure}[H]
    \centering
    \begin{tikzpicture}[scale=0.30,
        turn/.style={draw,
            minimum height=14mm,
            minimum width=8mm,
            fill = black,
        text=white},
        qlayer/.style={draw,
            minimum height=34mm,
            minimum width=8mm,
            fill = lightgray,
        text=black},
        measure/.style={draw,
            minimum height=14mm,
            minimum width=4mm,
            fill = black,
        text=white},
        ctier/.style={draw,
            minimum height=14mm,
            minimum width=15mm,
            fill = darkgray,
        text=white},
        >=latex]

        \foreach \y in {-5,-4.5,...,-1} {
            \draw (-4,\y) -- (28,\y);
        }
        \foreach \y in {1,1.5,...,5} {
            \draw (-4,\y) -- (28,\y);
        }

        \draw (-4,3) node[left] {\texttt{R1}} -- (28,3);
        \draw (-4,-3) node[left] {\texttt{R2}} -- (28,-3);

        \node (Q1) at (0,0) [qlayer] {$Q_1$};
        \node (Q2) at (14,0) [qlayer] {$Q_2$};

        \node (M1) at (4,3) [measure] {$\Pi_C$};
        \node (M2) at (18,3) [measure] {$\Pi_C$};

        \node (C1) at (9,3) [ctier] {$C_1$};
        \node (C2) at (23,3) [ctier] {$C_2$};
    \end{tikzpicture}
    \caption{%
        An illustration of an $(n,2,q,c)$-jozsa-quantum circuit. The light boxes
        represent quantum circuits, the black boxes represent classical basis
        measurements, and the dark boxes represent classical circuits. The width
        of the circuit $g(n)$ is split into two registers \texttt{R1} and
        \texttt{R2} of equal size.%
    }
    \label{fig:jozsa-registers}
\end{figure}

%------------------------------------------------------------------------------%
\subsubsection{Conjecture~\ref{conj:folklore}, More Formally}
%------------------------------------------------------------------------------%

With these definitions in place, we can state Conjecture~\ref{conj:folklore} as
follows.

\begin{conjecture}\label{conj:folklore-formal}
    It holds that $\HQC = \BQP$.
\end{conjecture}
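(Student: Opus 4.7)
The plan splits into two containments. The easy direction, $\HQC \subseteq \BQP$, should follow from routine simulation: a $\BQP$ machine can carry out each classical tier in polynomial time (since $\P \subseteq \BQP$), can execute each polylog-depth quantum tier directly after compiling $\{\textsf{Hadamard}, \textsf{Toffoli}, \textsf{Phase}\}$ into its own universal gate set, and can handle the intermediate computational-basis measurements either literally or via the principle of deferred measurement. I would verify that the uniformity of the hybrid family lifts to uniformity of the resulting $\BQP$ family, and that the $1/3$ error bound is preserved (or amplified by independent repetition followed by majority vote). This side of the equality is essentially a bookkeeping exercise.

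The substantive content lies in the reverse inclusion $\BQP \subseteq \HQC$. The approach is to take an arbitrary $\BQP$ language decided by a uniform family of quantum circuits $\{Q_n\}$ of width $w(n) = \poly(n)$ and depth $d(n) = \poly(n)$, and refactor each $Q_n$ as a $(n, \poly(n), \poly(n), O(\log^i(n)), \poly(n))$-hybrid-quantum circuit for some fixed $i$. The natural vehicle is a measurement-based / gate-teleportation transformation: prepare a suitable resource state in low quantum depth, then implement the gates of $Q_n$ by adaptive Pauli-basis measurements whose bases are computed by a polynomial-time classical tier from earlier outcomes. Grouping the adaptive rounds into blocks whose \emph{byproduct operators} can be commuted through $\polylog(n)$ layers at once, following strategies analogous to the Cleve--Watrous parallelization of the quantum Fourier transform and known constant-depth implementations of Clifford subcircuits, would yield a hybrid circuit with only $\polylog(n)$ quantum depth per tier and $\poly(n)$ classical processing between tiers.

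The main obstacle is handling arbitrary non-Clifford layers. For Clifford operations the Pauli byproducts from gate teleportation propagate trivially through later Cliffords and can be absorbed into the classical tiers; for $T$ gates (or any diagonal non-Clifford), the induced byproduct is itself a Clifford whose \emph{identity} depends on a fresh measurement outcome, which in the worst case forces the adaptive rounds to serialize and reintroduces $\poly(n)$ quantum depth. To close the gap I would need either (i) a resource-state construction in which $T$-layer corrections can be bulk-prepared offline and conditionally consumed in $\polylog(n)$ quantum depth, or (ii) a structural rewriting theorem showing that every $\BQP$ circuit is equivalent to one whose non-Clifford layers are sparse enough to parallelize within a single quantum tier.

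This is where the plan is genuinely stuck, and I expect it is where any honest attempt gets stuck. A general solution to (i) or (ii) would in particular imply $\BQP \subseteq \BQNC^{\BPP}$ in the unrelativized world, whereas Theorems~\ref{thm:jozsa-informal} and~\ref{thm:mainthm-informal} of the present paper exhibit an oracle relative to which no such inclusion holds --- so the load-bearing step of the plan must, of necessity, exploit non-relativizing structure of the circuits $Q_n$ (for instance, the fact that Toffoli/Hadamard/Phase admits polynomial-size classical descriptions of intermediate amplitudes in certain restricted regimes). Absent such an ingredient, the proof proposal reduces the conjecture $\HQC = \BQP$ to this parallelization problem and leaves it genuinely open.
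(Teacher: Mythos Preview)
The statement you are attempting to prove is labeled a \emph{Conjecture} in the paper, and the paper does not supply a proof of it. There is no ``paper's own proof'' to compare against: Conjecture~\ref{conj:folklore-formal} is a formalization of the folklore Conjecture~\ref{conj:folklore}, and the entire thrust of the paper is to give \emph{evidence against} its relativized form (Conjecture~\ref{conj:folklore-rel}) by exhibiting the Welded Tree oracle separation. So you have not missed a proof in the paper --- there is none.

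Your proposal is, in fact, an accurate diagnosis of why the conjecture is open. The containment $\HQC \subseteq \BQP$ is indeed routine. For the reverse containment you correctly isolate the obstruction: Clifford byproducts from measurement-based or teleportation-based schemes commute harmlessly, but non-Clifford layers force adaptive corrections whose identities depend on fresh outcomes, and no known technique collapses this dependency chain to $\polylog$ quantum depth in general. Your observation that any purported argument must be non-relativizing --- because Theorems~\ref{thm:jozsa-formal} and~\ref{thm:mainthm-formal} rule out a black-box simulation --- is exactly the point the paper is making. In short, your ``proof proposal'' is not a proof but a correct reduction of the conjecture to a well-identified open parallelization problem, which is the honest status of the question.
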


%------------------------------------------------------------------------------%
\subsection{Relativized Conjecture~\ref{conj:folklore-formal}}
%------------------------------------------------------------------------------%

In this subsection, we review oracles and state a relativized version of
Conjecture~\ref{conj:folklore-formal}.

%------------------------------------------------------------------------------%
\subsubsection{Oracles in the Quantum World}\label{sec:oracl-quant-world}
%------------------------------------------------------------------------------%

For an introduction to oracles in the quantum circuit model, see Section III.4
in \cite{Watrous09}.  We recap some definitions for setting notation.

For us, an \emph{oracle} $A$ is a collection $\{A_n: n \in \NN\}$ of functions
\begin{equation}
    A_n: \{0,1\}^n \to \{0,1\}^n,
\end{equation}
to which queries can be made at unit cost.  We define
\begin{equation}
    A(x) \coloneqq A_{\abs{x}}(x)
\end{equation}
where $\abs{x}$ denotes the length of $x$.  We use the term \emph{black box} to
refer to the restriction of an oracle to inputs of a fixed length.

We represent \emph{oracle queries} by an infinite family
\begin{equation}
    \{K_n : n \in \NN\}
\end{equation}
of gates, one for each \emph{query length}.  Each gate $K_n$ is a unitary gate
acting on $n+1$ qubits, defined on the computational basis as
\begin{equation}
    K_n \ket{x}\ket{a} \mapsto \ket{x}\ket{a \oplus A(x)}
\end{equation}
where $x \in \{0,1\}^n$, $a \in \{1,0\}^n$, and $A$ is the oracle under
consideration.

\paragraph{Multiple-bit queries versus single-bit queries.} As mentioned at the
end of Section III.4 in \cite{Watrous09}, one can use the
\emph{Bernstein–Vazirani algorithm}~\cite{BernsteinV97} to simulate multiple-bit
queries with single-bit queries (after adapting the definition of $A_n$
appropriately.) Moreover, this can be performed without any non-constant
depth-overhead, so our model (after slight modifications) is equivalent to the
traditional single-bit query model.

%------------------------------------------------------------------------------%
\subsubsection{Relativized Hybrid Quantum Computation}
%------------------------------------------------------------------------------%

A \emph{relativized circuit} is one that may include query gates, and we say
that a circuit \emph{queries} a certain oracle $A$ if the outputs to its queries
are consistent with the oracle $A$. A circuit may be consistent with many
oracles; for example, a circuit that makes no queries is consistent with every
oracle. Later on, we will show that if the queries of a circuit are consistent
with a certain oracle, then it has a very low probability of success.

We define $\HQC^A$ and $\JC^A$ by replacing the classical and quantum tiers with
relativized classical and quantum tiers respectively. Put differently, we modify
the gate set for the classical and quantum tiers to include query gates to
$A$---the classical tiers make classical queries while the quantum tiers make
quantum queries.

%------------------------------------------------------------------------------%
\subsubsection{Relativized Conjecture~\ref{conj:folklore-formal}}
%------------------------------------------------------------------------------%

Relativized Conjecture~\ref{conj:folklore-formal} states that
Conjecture~\ref{conj:folklore-formal} is true relative to all oracles.

\begin{conjecture}
    \label{conj:folklore-rel}
    For all oracles $A$, it holds that $\HQC^A = \BQP^A$.
\end{conjecture}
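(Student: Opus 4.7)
The plan is to refute Conjecture \ref{conj:folklore-rel} by exhibiting an oracle relative to which $\HQC \neq \BQP$, namely the Welded Tree oracle $A$ of Childs et al. Since \cite{ChildsCDFGS03} already places the Welded Tree Problem in $\BQP^A$, the entire task reduces to showing it is \emph{not} in $\HQC^A$. Following the proof-overview paragraphs in the introduction, I would attack this by designing, for any candidate $\HQC^A$ algorithm, a classical simulator that outputs a distribution exponentially close to the algorithm's true output while making only sub-exponentially many classical oracle queries; composing with the classical query lower bound (Theorem \ref{thm:class-lower-bound}) then forces the $\HQC^A$ algorithm's success probability to be $2^{-\Omega(n)}$, contradicting any decision of the Welded Tree Problem.

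The core simulation primitive, already sketched for the few-tier case, is to maintain a set $\Vknown$ of valid vertex labels ever seen and to answer quantum queries by an "optimistic" oracle that returns $\invalid$ on every label outside $\Vknown$. The fidelity argument is that, because valid labels are random $2n$-bit strings, any non-negligible amplitude on an unseen valid label would yield a classical randomized procedure for finding a new valid label from $\Vknown$ using sub-exponentially many queries, which is precisely what \cite{ChildsCDFGS03} forbids. For a single polylog-depth quantum tier, $\Vknown$ grows by at most $\poly(n)$ new labels (the new vertices touched with non-negligible mass are adjacent to prior ones and thus can be enumerated), so after $\sqrt{n}$ tiers the set is still sub-exponential; this is exactly Theorem \ref{thm:few-tier-informal}. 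Observing that the classical tiers of a Jozsa circuit can be pulled out of the quantum layers without changing its semantics, the same simulator handles $\JC$ as in Theorem \ref{thm:jozsa-informal}.

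The hard step, and the one requiring a genuinely new idea, is the full $\HQC$ case where the number of tiers is $\poly(n)$. Naively iterating the above growth bound makes $\Vknown$ exponential, at which point the classical lower bound no longer applies. The Information Bottleneck idea is to exploit the fact that the wire bundle between two consecutive tiers of an $(n,\eta,c,q,g)$-hybrid-quantum circuit is just $g(n)$ classical bits: only $\poly(g(n))$ bits of information about $\Vknown$ can actually propagate forward. Concretely, at each tier boundary I would identify a subset $\Vhknown \subseteq \Vknown$ of labels whose removal perturbs the classical message by only an exponentially small amount in total variation, project the simulator's state onto oracle queries restricted to $\Vhknown$, and then discard the rest. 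Iterated across all $\poly(n)$ tiers, this keeps the effective "known" set polynomial in size throughout the simulation while introducing only $\poly(n) \cdot 2^{-\Omega(n)}$ extra error, so Theorem \ref{thm:class-lower-bound} still applies and yields Theorem \ref{thm:mainthm-informal}.

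The main obstacle I anticipate is formalizing the Information Bottleneck rigorously: one must define $\Vhknown$ in a way that is both (i) computable by the classical simulator itself using only sub-exponentially many additional queries, and (ii) provably lossless up to exponentially small trace distance on the message register, despite the quantum tier's ability to compute arbitrary functions of exponentially many oracle responses in superposition. The natural strategy is to pick $\Vhknown$ as those elements of $\Vknown$ whose marginal influence on the measured classical output exceeds some $2^{-\Omega(n)}$ threshold, bound $|\Vhknown|$ by a counting argument against the $g(n)$-bit message capacity, and close the loop by showing that the truncated quantum tier's output is within total-variation $2^{-\Omega(n)}$ of the true one via a hybrid argument over the discarded labels; this is where I expect the bulk of the technical work to live.
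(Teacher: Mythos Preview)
Your plan is correct and follows the paper's overall architecture exactly: refute the conjecture via the Welded Tree oracle, build a classical simulator that maintains a set of known labels and answers out-of-set queries with $\invalid$, handle the few-tier and Jozsa cases directly, and for full $\HQC$ invoke an Information Bottleneck to keep the known set small across $\poly(n)$ tiers before applying Theorem~\ref{thm:class-lower-bound}.

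The one place your sketch diverges from the paper is the precise Bottleneck mechanism. You propose to retain those labels whose ``marginal influence on the measured classical output'' exceeds a threshold, and to bound their number by an information-capacity argument against the $g(n)$-bit inter-tier message. The paper instead works in the other direction: given the tier output $x$ (and the random seed $r$), it asks, for each label $b$ not yet in $\Vknown$, whether $\Pr_{P \in \T^i_{\Vknown,x,r_{\leq i}}}[b\text{ is valid in }P] > 2^{-n/100}$, where $\T^i_{\Vknown,x,r_{\leq i}}$ is the set of black-box trees consistent with both $\Vknown$ and the observed output. Any such $b$ is added to $\Vknown$ (aborting if $b$ was never historically queried). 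The iteration bound comes not from message capacity per se but from a sandwich on $|\T^i_{\Vknown,x,r_{\leq i}}|$: each added label shrinks this set by at most a $2^{-n}$ factor (upper bound via Lemma~\ref{lem:less-rough-upper-bound}) yet by at least a $2^{-n/100}$ factor (lower bound from the violating condition), and the starting size is at least $2^{-n(g(n)+|r|)}|\T_{\Vcknown}|$ unless we are in an event of negligible probability. These bounds collide after $O(g(n)+|r|)$ iterations. This formulation has the virtue you flagged as the main obstacle: the set $\T^i_{\Vknown,x,r_{\leq i}}$ is defined without reference to the true oracle $T$, so the simulator can compute it (in doubly exponential time) with zero additional queries. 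Your ``marginal influence'' criterion would need to be made similarly oracle-independent to close the loop; the paper's conditional-validity criterion is one concrete way to do that.
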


Aaronson~\cite{Aaronson11, Aaronson14} conjectured that a weak version of
Conjecture~\ref{conj:folklore-rel} is false.

\begin{conjecture}[Aaronson~\cite{Aaronson11, Aaronson14}]
    There exists an oracle $A$ such that $\BQP^A \not\subseteq (\BPP^\BQNC)^A$.
\end{conjecture}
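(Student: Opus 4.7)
The plan is to derive Aaronson's conjecture from the informal Theorem~\ref{thm:mainthm-informal} by taking the oracle $A$ to be the Welded Tree oracle of Childs et al. By \cite{ChildsCDFGS03}, the Welded Tree Problem lies in $\BQP^{A}$, and by Remark~\ref{rem:hqc-and-bpp-bqnc} we have $(\BPP^{\BQNC})^{A} \subseteq \HQC^{A}$, so it suffices to exhibit a welded-tree instance family which no $\HQC^A$ machine decides with bounded error. That reduction is immediate once the informal main theorem is made formal, so the entire burden of the proof lies in establishing the main theorem; the rest of the plan describes how I would prove that statement.

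The high-level strategy, following the proof overview given for the few-tier warm-up, is to take an arbitrary $\HQC^{A}$ circuit $H$ and produce a classical algorithm $S$ whose output distribution is within $2^{-\Omega(n)}$ total variation of $H$'s output distribution while making only sub-exponentially many classical queries to $A$. The classical lower bound Theorem~\ref{thm:class-lower-bound} then forces $H$'s success probability on the Welded Tree Problem to be $2^{-\Omega(n)}$. The simulator $S$ maintains a set $\Vknown$ of valid welded-tree vertex labels that it has ``learned,'' and it replaces each polylog-depth quantum tier of $H$ by an approximate tier in which the oracle is answered by $\invalid$ on every $2n$-bit string outside $\Vknown$ and its classical one-step neighborhood. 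The per-tier approximation error is exponentially small because any distinguishing mass would yield a classical query algorithm that non-trivially guesses a valid-but-unknown label in sub-exponentially many queries, contradicting the classical lower bound; this is essentially the content of the Section~\ref{sec:few-tiers} analysis.

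The main obstacle, as the authors highlight, is that $\HQC$ allows $\poly(n)$ tiers, and the naive bookkeeping of Algorithm~\ref{alg:classicalsim1} grows $\Vknown$ additively by a polynomial factor per tier, so after $\poly(n)$ tiers $\Vknown$ can become exponentially large and no longer obey the classical query budget required by Theorem~\ref{thm:class-lower-bound}. To defeat this I would implement the Information Bottleneck idea: because consecutive tiers in an $\HQC$ circuit communicate only through a classical register of width $g(n) = \poly(n)$, only a polynomially small amount of ``new'' information about $\Vknown$ can actually flow into the next tier. I would formalize this by proving that after each quantum tier one can prune $\Vknown$ down to a polynomially bounded subset, retaining only those labels whose presence observably influences the measurement statistics that feed the next classical tier, while introducing only $2^{-\Omega(n)}$ additional approximation error. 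Iterating this pruning across all $\poly(n)$ tiers keeps the running total of classical oracle queries sub-exponential.

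The hardest step, and the place where the argument genuinely departs from the few-tier analysis, will be making the Information Bottleneck rigorous: one must define precisely which labels in $\Vknown$ ``matter'' for the next tier, show that forgetting the rest is simultaneously (i) safe in terms of total variation distance, (ii) efficiently detectable by the classical simulator itself within its sub-exponential query budget, and (iii) stable under composition across many tiers so that the error does not accumulate super-polynomially. I expect this to require a hybrid argument that swaps in the pruned state tier by tier, combined with a careful accounting showing that the $g(n)$ bits of classical inter-tier communication really do certify a polynomial-size ``effective support'' for the next tier. This is the technical core that Section~\ref{sec:hqc} is devoted to, and my proposal would follow that blueprint.
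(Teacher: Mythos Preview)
Your proposal is correct and follows essentially the same route as the paper: reduce Aaronson's conjecture to Theorem~\ref{thm:mainthm-informal} via Remark~\ref{rem:hqc-and-bpp-bqnc} and \cite{ChildsCDFGS03}, then prove the main theorem by classically simulating each polylog-depth tier against a maintained set $\Vknown$, with the Information Bottleneck controlling the growth of $\Vknown$ across $\poly(n)$ tiers. The one place your description diverges slightly from the paper is the pruning criterion: you phrase it forward-lookingly (keep labels that ``influence the measurement statistics that feed the next tier''), whereas the paper's $\bottleneck$ subroutine uses a backward-looking criterion (keep exactly those labels $b$ that are \emph{guessable} with probability exceeding $2^{-n/100}$ from the tier's classical output and the random seed), which is what makes the termination bound and the connection to Lemma~\ref{lem:discovery-probability} go through cleanly; since you explicitly defer to Section~\ref{sec:hqc} for the formalization, this is not a gap.
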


%------------------------------------------------------------------------------%
\subsection{The Welded Tree Problem}\label{subsec:weldedtree}
%------------------------------------------------------------------------------%

In this subsection, we review the \emph{Welded Tree Problem} of Childs et
al.~\cite{ChildsCDFGS03}. We introduce the class of graphs we will consider,
show how to turn them into black-boxes, and finally define the black-box problem
we will consider.

%------------------------------------------------------------------------------%
\subsubsection{Welded Trees}
%------------------------------------------------------------------------------%

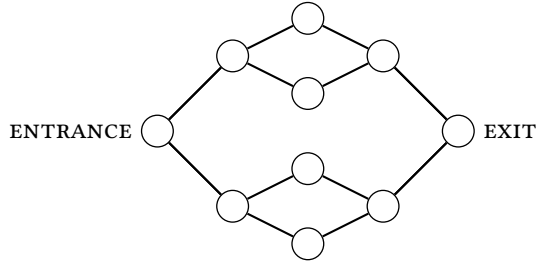
\begin{figure}[h]
    \centering
    \begin{tikzpicture}[scale=1]
        \GraphInit[vstyle=Hasse]
        \Vertex[x=0,y=0]{A}
        \node[left, xshift=-2mm] at (A){\entrance{}};
        \Vertex[x=1,y=1]{B1}
        \Vertex[x=1,y=-1]{B2}
        \Vertex[x=2,y=1.5]{C1}
        \Vertex[x=2,y=0.5]{C2}
        \Vertex[x=2,y=-0.5]{C3}
        \Vertex[x=2,y=-1.5]{C4}
        \Vertex[x=3,y=1]{D1}
        \Vertex[x=3,y=-1]{D2}
        \Vertex[x=4,y=0]{E}
        \node[right, xshift=2mm] at (E){\exit{}};
        \Edges(A,B1,C1,D1,E)
        \Edges(A,B1,C2,D1,E)
        \Edges(A,B2,C3,D2,E)
        \Edges(A,B2,C4,D2,E)
    \end{tikzpicture}
    \caption{An illustration of a $3$-welded tree.}
    \label{fig:welded-tree}
\end{figure}

\begin{definition}
    A \emph{$n$-welded tree} is a combination of two balanced binary trees $L$
    and $R$ of height $n$, with the $2^n$ leaves of $L$ identified with the
    $2^n$ leaves of $R$ in a way such that $R$ is a mirror image of $L$. For an
    illustration see Figure~\ref{fig:welded-tree}. The leftmost vertex is termed
    \entrance{} and the rightmost vertex is termed \exit{}.
\end{definition}

When $n$ is immediate from context, we will omit the $n$ and refer to the tree
as a \emph{welded tree}.

%------------------------------------------------------------------------------%
\subsubsection{Random Welded Trees}
%------------------------------------------------------------------------------%

\begin{figure}[h]
    \centering
    \begin{tikzpicture}[scale=1]
        \GraphInit[vstyle=Hasse]
        \Vertex[x=0,y=0]{A}
        \node[left, xshift=-2mm] at (A){\entrance{}};
        \Vertex[x=1,y=1]{B1}
        \Vertex[x=1,y=-1]{B2}
        \Vertex[x=2,y=1.5]{C1}
        \Vertex[x=2,y=0.5]{C2}
        \Vertex[x=2,y=-0.5]{C3}
        \Vertex[x=2,y=-1.5]{C4}
        \Vertex[x=3,y=1.5]{nC1}
        \Vertex[x=3,y=0.5]{nC2}
        \Vertex[x=3,y=-0.5]{nC3}
        \Vertex[x=3,y=-1.5]{nC4}
        \Vertex[x=4,y=1]{D1}
        \Vertex[x=4,y=-1]{D2}
        \Vertex[x=5,y=0]{E}
        \node[right, xshift=2mm] at (E){\exit{}};
        \Edges(A,B1,C1,nC4,D2,E)
        \Edges(A,B1,C2,nC1,D1,E)
        \Edges(A,B2,C3,nC3,D2,E)
        \Edges(A,B2,C4,nC2,D1,E)
        \Edges(nC4, C3)
        \Edges(nC3, C2)
        \Edges(nC1, C4)
        \Edges(nC2, C1)
    \end{tikzpicture}
    \caption{An illustration of a random $3$-welded tree $T_3$.}
    \label{fig:random-welded-tree}
\end{figure}
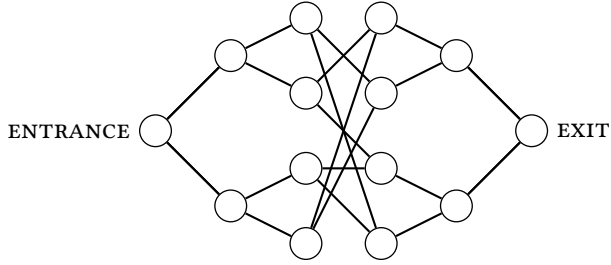

\begin{definition}
    A \emph{random $n$-welded tree} $T_n$ is a combination of two balanced
    binary trees $L$ and $R$ of height $n$ by connecting the leaves via a random
    cycle of edges which alternates between the leaves of $L$ and the leaves of
    $R$. For an illustration, see Figure~\ref{fig:random-welded-tree}.  As with
    $n$-welded trees, we term the leftmost vertex \entrance{} the rightmost
    vertex \exit{}. Notice that the \entrance{} and \exit{} vertices are
    distinguished as they are the only vertices with degree $2$.
\end{definition}

%------------------------------------------------------------------------------%
\subsubsection{Graphs with Black-Box Access}
%------------------------------------------------------------------------------%

In this paper, welded tree graphs are objects which our algorithm will only have
access to via a black-box which it can query about the neighbors of a given
vertex. To stay consistent with \cite{ChildsCDFGS03} we are also going to assume
that the graphs are edge-coloured. For this problem, we can pick an $9$-edge
coloring that does not make the problem easier (ie preserves the output
probability in expectation over colourings). Following Childs et
al.~\cite{ChildsCDFGS03}, we pick a colouring as follows.

Arbitrarily label the vertices in odd columns with colors $\{1,2,3\}$ and
arbitrarily label the vertices in even columns with colors $\{A,B,C\}$. Then
there is an induced edge coloring as follows: an edge joining an $X$-coloured
vertex to a $Y$-coloured vertex has color $XY$. For example, an edge joining a
$1$-coloured vertex and an $A$-coloured vertex has color $1A$.

\begin{definition}
    A \emph{$(n, \Xi)$-black-box graph} $G$ is a $\Xi$-edge coloured graph with
    $O(n)$ vertices whose vertices are uniquely encoded by bit strings of length
    $2n$.  We say that a $2n$-bit string is \emph{valid} with respect to $G$ if
    it is the label of a vertex in $G$.
\end{definition}

Notice that a graph may have many different corresponding black-box
graphs. Moreover, since the graph only has $O(2^n)$ vertices, $n+O(1)$ bits are
enough to give every vertex a unique label.  But we chose $2n$-bit labels so
that there are exponentially more labels than there are vertices.  Later on,
this fact is used to argue that it is hard for an adversary to guess a valid
label.

In this paper, we will only consider a restricted class of black-box graphs,
ones corresponding to random $n$-welded trees with some additional structure.

\begin{definition}
    A \emph{random $n$-welded black-box tree} $T$ is a $(n,9)$-black-box graph
    with the following additional structure.
    \begin{enumerate}
        \item $T$ is a random $n$-welded tree.
        \item The \entrance{} vertex has the label $0 \cdots 0$.
        \item The label $1 \cdots 1$ is not used for a valid vertex.  We will
            henceforth refer to this string by the name $\invalid$.
    \end{enumerate}
\end{definition}

We now define how to query a black-box.

\begin{definition}
    A \emph{query} $K_T$ to a random $n$-welded black-box tree $T$ black-box
    tree is defined as
    \begin{equation}
        K_T(x, c) \coloneqq \begin{cases}
            \text{$c$-neighbour of $x$}, & \text{$x$ is a valid vertex with a
            $c$-neighbour}\\ \invalid, & \text{otherwise}
        \end{cases}
    \end{equation}
    where $x \in \{0,1\}^{2n}$, $c$-neighbour of $x$ (the vertex joined to $x$
    by an edge with color $c$) is a $2n$-bit string, and $\invalid{} \coloneqq 1
    \cdots 1$. We define this as a unitary as
    \begin{equation}\label{eq:oracleregister}
        K_T \ket{x} \ket{c} \ket*{0^{2n}} \mapsto \ket{x} \ket{c} \ket{y}
    \end{equation}
    where $x \in \{0,1\}^{2n}$, $c \in \{1,\dots,9\}$, and $y \in
    \{0,1\}^{2n}$ is the label of the $c$-neighbour of $x$.
\end{definition}

\begin{remark}
    The quantum algorithm in Childs et al.~\cite{ChildsCDFGS03} does not make any
    queries with a superposition over the colours' register, so we can assume that
    we have $9$ unitaries---one for each colour---representing a query.
\end{remark}

\begin{definition}[querying a welded black-box tree]
    \label{def:query-black-box-tree}
    We say that a relativized circuit $C$ queries a random $n$-welded black-box
    tree $T$, denoted by $C(T)$, if all its queries $K_n$ can be replaced by
    queries $K_T$ to $T$. This is generalized to families
    \begin{equation}
        T \coloneqq \{T_n : n \in \NN\}
    \end{equation}
    of random welded black-box trees, where $T_n$ is a random $n$-welded
    black-box tree, by induction.
\end{definition}

\begin{definition}[querying a welded tree]\label{def:query-tree}
    Let $T$ be a random $n$-welded tree, and let $C(T')$ be a relativized
    circuit that queries a random $n$-welded black-box tree $T'$ corresponding
    to $T$. We define
    \begin{equation}\label{eq:output-prob}
        \Pr[C(T)] \coloneqq \dsPr_{T'}[\text{$C(T')$ returns the label of }\exit],
    \end{equation}
    where the probability is over all random $n$-welded black-box trees $T'$
    corresponding to $T$. Put differently, the probability is over all $2n$-bit
    labellings of the graph $T$. With this notation in place, we can say,
    \emph{a circuit $C(T)$ queries a random $n$-welded tree $T$} and it is
    understood that we take the output probability over all random $n$-welded
    black-box trees $T'$ corresponding to $T$.
\end{definition}

\begin{lemma}[Lemma 4 in Childs et al.~\cite{ChildsCDFGS03}]
    \label{lem:discovery-probability}
    The probability that an algorithm, which makes at most $h$ queries to a
    random $n$-welded tree, can discover the label of a vertex that was not the
    result of a query is at most
    \begin{equation}
        h \frac{2^{n+2} - 2}{2^{2n}}.
    \end{equation}
\end{lemma}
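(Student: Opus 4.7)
The plan is a union bound over the $h$ queries combined with the principle of deferred decisions applied to the random labeling. An $n$-welded tree has $2(2^{n+1}-1) = 2^{n+2}-2$ vertices, and the labeling is a uniformly random injection from the vertex set into $\{0,1\}^{2n}$, constrained so that $0^{2n}$ is assigned to the \entrance{} and $1^{2n}$ is reserved as $\invalid$. Let $S \subseteq \{0,1\}^{2n}$ denote the (random) image of this injection, so $|S| = 2^{n+2}-2$.

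First I would formalise the ``discovery'' event. Define $R_i$ to be the set of $2n$-bit strings the algorithm has already seen to be valid labels by the start of query $i$: namely $\{0^{2n}\}$ together with the non-$\invalid$ neighbour labels returned by queries $1,\dots,i-1$. Since each query returns at most one label, $|R_i| \leq h$. The algorithm \emph{discovers} a new valid label at query $i$ precisely when the query input $x_i$ itself lies in $S \setminus R_i$, and the final output may be treated as one additional guess of the same type.

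Next, after fixing the algorithm's internal coins, I would condition on the transcript of responses through step $i-1$, so that both $R_i$ and $x_i$ become deterministic functions of the transcript. The key deferred-decisions step is that, under this conditioning, the labeling restricted to the strings the algorithm has not yet \emph{probed} (either queried or received as a response) is uniform over the allowed assignments of the remaining vertices to the remaining $2n$-bit strings; this follows because the algorithm's adaptive choices depend only on prior responses, so its conditional distribution is symmetric under relabelings of un-probed strings. Hence the conditional probability that the fixed string $x_i$ lies in $S \setminus R_i$ is at most $|S|/(2^{2n} - O(h))$, which in the non-vacuous regime of the stated bound simplifies to $|S|/2^{2n}$. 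A union bound over the $h$ queries (absorbing the final output into the last ``guess'') then yields the claimed $h \cdot (2^{n+2}-2)/2^{2n}$, and the randomness internal to the algorithm is handled by taking an outer expectation.

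The step I would check most carefully is the deferred-decisions argument itself: responses of $\invalid$ also condition out strings from being in $S$, so I would explicitly track the full set of probed strings (not merely $R_i$) and verify that, under the joint conditioning on revealed-valid and revealed-invalid strings, the marginal on un-probed strings still corresponds to a uniformly random injection of the un-revealed vertices. This symmetry is the core reason each adaptive guess has only a $|S|/2^{2n}$ chance of hitting a fresh valid label, and once it is verified the union bound closes the argument immediately.
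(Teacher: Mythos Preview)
Your proposal is correct and follows essentially the same approach as the paper: count the unknown valid labels (arriving at $2^{n+2}-2$), use that they are uniformly distributed among $2n$-bit strings so each guess hits one with probability at most $(2^{n+2}-2)/2^{2n}$, and union bound over the $h$ queries. The paper's own proof is a two-sentence version of this same argument; your deferred-decisions discussion is a more careful justification of the step the paper abbreviates as ``the valid labels are uniformly distributed over the space of $2n$-bit strings.''
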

\begin{proof}
    Since we are gluing two height-$n$ binary trees each of which has $2 \cdot
    2^n$ vertices, there are $4 \cdot 2^n = 2^{n+2}$ valid labels. We know the
    location of the \entrance{} label and the label $\invalid = 1 \cdots 1$ is
    not used, so the number of unknown labels is $2^{n+2} - 2$.  Since the valid
    labels are uniformly distributed over the space of $2n$-bit strings, we get
    the desired result.
\end{proof}

%------------------------------------------------------------------------------%
\subsubsection{The Welded Tree Problem}
%------------------------------------------------------------------------------%

Given a family
\begin{equation}
    T \coloneqq \{T_n : n \in \NN\}
\end{equation}
of random welded black-box trees, where $T_n$ is a random $n$-welded black-box
tree, we define the welded tree problem relative to $T$ as follows.
\begin{table}[H]
    \centering
    $\textsc{Welded Tree Problem}(T)$\\[2mm]
    \begin{tabular}{rl}
        Input: & $0^n$ for some $n \in \NN$.\\[1mm]
        Output: & The label of the \exit{} vertex in $T_n$.
    \end{tabular}
\end{table}

\paragraph{Search versus Decision} Since the classes we want to prove lower
bounds against ($\HQC^i$ and $\JC^i$) are closed under repeating an algorithm
$O(n)$ times in parallel, the above mentioned search variant is equivalent to
the following decision variant of this problem.
\begin{table}[H]
    \centering
    $\textsc{Decision Welded Tree Problem}(T)$\\[2mm]
    \begin{tabular}{rl}
        Input: & $0^n$ for some $n \in \NN$ and $i \in \{1,\dots,n\}$.\\[1mm]
        Output: & $i$th bit of the label of the \exit{} vertex in $T_n$.
    \end{tabular}
\end{table}
So, in the remainder of the paper, we restrict our attention to the search
variant.

\paragraph{Query Length Equals Input Length} We assume that given an $n$-bit
string as input, a quantum algorithm only queries $T_n$ and not $T_m$ for any $m
\neq n$. This is without loss of generality---the idea is to replace a circuit
$Q$ with an new circuit $R$ in which all queries to $T_m$ for $m \neq n$ are
hardcoded to $\invalid$. From the description of our problem, it is immediate
that the success probability of $Q$ is no greater than $R$.

%------------------------------------------------------------------------------%
\subsubsection{Quantum Algorithm for the Welded Tree Problem}
%------------------------------------------------------------------------------%

Childs et al.~\cite{ChildsCDFGS03} gave an efficient quantum algorithm for the
\textsc{Welded Tree Problem} using quantum walks.

\begin{theorem}[Childs et al.~\cite{ChildsCDFGS03}]
    Given a family
    \begin{equation}
        T \coloneqq \{T_n : n \in \NN\}
    \end{equation}
    of random welded black-box trees, where $T_n$ is a random $n$-welded
    black-box tree, There is a quantum algorithm for the $\textsc{Welded Tree
    Problem}(T)$ which takes $\poly(n)$ time and outputs the correct answer (the
    label of the \exit{} vertex) with probability greater than $2/3$.
    Succinctly, $\textsc{Welded Tree Problem}(T) \in \BQP^T$.
\end{theorem}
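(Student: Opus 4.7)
The plan is to solve the Welded Tree Problem via a continuous-time quantum walk whose Hamiltonian is the adjacency matrix $H$ of $T_n$. First I would exploit the column symmetry of the welded tree: index each vertex by its column $j \in \{0,1,\ldots,2n\}$ (columns $0$ through $n$ lie in the left tree, $n$ through $2n$ in the right, and column $n$ is the welded leaves), and define normalized column states $\ket{\mathrm{col}_j}$ as uniform superpositions over the vertices in column $j$. A direct calculation shows that $H$ restricted to $\mathrm{span}\{\ket{\mathrm{col}_j}\}$ is a tridiagonal ``line'' Hamiltonian whose hopping weights are $\sqrt{2}$ inside each tree and take a slightly different value at the weld. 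Since $\ket{\entrance} = \ket{\mathrm{col}_0}$ is already in this subspace and $H$ preserves the subspace, the dynamics reduce to a one-dimensional walk on a line of length $2n+1$.

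The next step is to realize $e^{-iHt}$ as a quantum circuit that uses only the oracle $K_T$. Because $T_n$ has bounded degree and a proper $9$-edge-coloring, the adjacency matrix splits as $H = \sum_{c=1}^{9} H_c$, where each $H_c$ is a direct sum of $2\times 2$ blocks acting on the endpoints of color-$c$ edges. Each $e^{-iH_c t}$ can be implemented with $O(1)$ oracle queries to $K_T$ by swapping a vertex label with its $c$-neighbor into an ancilla, applying the appropriate block rotation controlled on the label data, and uncomputing. Trotter composition across the $9$ colors then yields $e^{-iHt}$ up to error $\varepsilon$ using $\poly(n,t,1/\varepsilon)$ queries and gates; this is a standard sparse-Hamiltonian-simulation argument.

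The algorithm itself then prepares $\ket{\entrance} = \ket{0^{2n}}$, evolves under $H$ for a time $t$ sampled uniformly from $[0, t_{\max}]$ with $t_{\max} = \poly(n)$, and measures in the computational basis. The analysis lower-bounds $\mathbb{E}_t \lvert \langle \mathrm{col}_{2n} \mid e^{-iHt} \mid \mathrm{col}_0 \rangle \rvert^2$ by $1/\poly(n)$; since column $2n$ contains only \exit{}, any outcome lying in that column literally returns the label of \exit{}. Repeating $O(n)$ times boosts the success probability above $2/3$, and because all intermediate labels revealed by the walk are legitimate vertices of $T_n$, the entire algorithm stays inside $\BQP^T$.

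The main obstacle, I expect, is proving the polynomial lower bound on the entrance-to-exit transition probability for the reduced line walk. The reduced Hamiltonian is not a uniform path graph because of the anomalous hopping weight across the weld, so off-the-shelf bounds on walks on $\mathbb{Z}$ do not apply directly. I would attack this by diagonalizing the tridiagonal operator via Chebyshev-like orthogonal polynomials, bounding the overlap of $\ket{\mathrm{col}_0}$ and $\ket{\mathrm{col}_{2n}}$ with a band of delocalized eigenvectors, and then showing that time-averaging over $[0, t_{\max}]$ suppresses oscillations enough to leave a $1/\poly(n)$ amplitude at the exit column; Childs et al.\ carry out exactly this spectral analysis, and I would follow their route.
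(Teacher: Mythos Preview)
The paper does not actually prove this theorem; it is stated as a quoted result of Childs et al.\ \cite{ChildsCDFGS03} and used as a black box. Your proposal is a faithful high-level sketch of the original Childs et al.\ argument (column-subspace reduction to a weighted line, sparse Hamiltonian simulation via the edge coloring, and spectral analysis of the reduced walk), so you are supplying exactly the proof the paper merely cites. One small slip: for the \emph{random} welded tree used here the leaves of the two binary trees are connected by a cycle rather than identified, so there are $2n+2$ columns (indices $0$ through $2n+1$) with the anomalous hopping weight between columns $n$ and $n+1$; your index range $\{0,\ldots,2n\}$ matches the non-random variant in Figure~\ref{fig:welded-tree} rather than the random one in Figure~\ref{fig:random-welded-tree}, but this does not affect the structure of the argument.
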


%------------------------------------------------------------------------------%
\subsubsection{Classical Lower Bound for the Welded Tree Problem}
%------------------------------------------------------------------------------%

Childs et al.~\cite{ChildsCDFGS03} also gave the first classical lower bound for
the \textsc{Welded Tree Problem}, which we use as a key tool in our proof. To be
precise, we will use the following version of the lower bound, due Fenner and
Zhang~\cite{FennerZ03}, who gave an improved analysis.

\begin{theorem}[Childs et al.~\cite{ChildsCDFGS03} and Fenner and
    Zhang~\cite{FennerZ03}]
    \label{thm:class-lower-bound}
    Given a family
    \begin{equation}
        T \coloneqq \{T_n : n \in \NN\}
    \end{equation}
    of random welded black-box trees, where $T_n$ is a random $n$-welded
    black-box tree. For sufficiently large $n$, any classical algorithm for the
    $\textsc{Welded Tree Problem}(T)$ that makes at most $2^{n/3}$ queries
    outputs the correct answer (the label of the \exit{} vertex) with
    probability at most $O(n2^{-n/3})$.
\end{theorem}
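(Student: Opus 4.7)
My plan is to establish the bound by reducing to a structural analysis of how a classical query algorithm can ``explore'' a random welded tree, and then using a deferred-randomness argument to show that reaching \exit{} within $h = 2^{n/3}$ queries is exponentially unlikely.

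First I would invoke Lemma~\ref{lem:discovery-probability} to restrict attention to algorithms that only query labels that have already been revealed by some previous query (together with the entrance label $0^{2n}$). The lemma shows that in $h = 2^{n/3}$ queries, the probability of ever successfully querying a valid label not previously returned by a query is at most $h \cdot (2^{n+2}-2)/2^{2n} = O(2^{-2n/3})$. Outside this low-probability event, the set of vertices about which the algorithm has any information forms a connected subgraph $\Vexp$ of the welded tree containing \entrance{}, with $|\Vexp| \le h + 1$, and the algorithm wins only if \exit{} appears in $\Vexp$.

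Second, I would lazily sample the random welded black-box tree conditioned on the algorithm's transcript so far. The $2n$-bit labels of valid vertices are drawn uniformly from unused strings only upon being revealed, and the random welding---a Hamiltonian cycle alternating between the leaves of $L$ and $R$---is sampled only when a query actually crosses it. Under this sampling, each query traversing an interior edge of $L$ or $R$ reveals a neighbor whose column is deterministically $\pm 1$ from the queried vertex's column, whereas each query traversing the welding reveals a uniformly random unmatched leaf on the opposite tree (consistent with the cycle constraint). The whole problem is thus reduced to analyzing a random process on the $2n+1$-column skeleton together with the lazy matching on the middle.

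Third, I would argue that discovering \exit{} requires the algorithm both to (i) cross the welded middle at least once, and (ii) then ascend from a leaf of $R$ to its root. The $9$-edge coloring provides no ``parent vs.\ child'' orientation, so ascending from a given leaf to the specific root \exit{} is essentially a tree search without a compass; combined with the fact that each welding crossing lands on a uniformly random unmatched leaf, the probability that any single ascending path of length $n$ from such a crossing terminates at \exit{} is exponentially small. A column random-walk hitting-time argument then bounds the number of ``productive'' crossings achievable in $h$ queries, and combining these two sources of difficulty produces the bound $O(n \cdot 2^{-n/3})$ via the Fenner--Zhang accounting (roughly of the form $O(h^2 \cdot n / 2^n)$), absorbing the discovery-probability contribution from step one.

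The main obstacle will be formalizing the ``no orientation information'' claim against an \emph{adaptive} algorithm that can adversarially choose which revealed vertex to query next, in which order, and which color to query. A naive random-walk bound presumes the algorithm picks its next query uniformly at the exploration frontier, whereas an adaptive strategy could in principle concentrate queries in the most informative directions. Ruling this out requires a symmetry/coupling argument: one must show that, conditioned on the algorithm's transcript and before the lazy welding is revealed, the posterior over valid completions of the welded tree is invariant under permutations of the explored leaves that respect the coarse tree skeleton, so that no adaptive choice can correlate with the hidden position of \exit{}. Executing this coupling carefully is the heart of the Fenner--Zhang argument and constitutes the main technical work.
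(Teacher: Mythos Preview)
The paper does not prove this theorem at all: it is stated as a known result imported from Childs et al.~\cite{ChildsCDFGS03} and Fenner--Zhang~\cite{FennerZ03}, and is used throughout as a black-box classical lower bound (see, e.g., the proofs of Theorems~\ref{thm:adaptive-query-lower-bound}, \ref{thm:jozsa-main}, and \ref{thm:mainthm-formal}). There is therefore no ``paper's own proof'' to compare your proposal against.

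That said, your sketch is a reasonable reconstruction of the argument in the cited references. The reduction in your first step is exactly what Lemma~\ref{lem:discovery-probability} is for, and the deferred-randomness viewpoint together with the column/random-walk accounting is the backbone of the Childs et al.\ analysis, with the Fenner--Zhang refinement tightening the quantitative bound to the form you quote. The one place where your outline is somewhat loose is the third step: in the original arguments the crux is not separately bounding ``crossings'' and ``ascents'' but rather showing that the explored subgraph, under the deferred sampling, can be embedded into a random rooted tree whose structure is independent of where \exit{} sits, so that the probability \exit{} lies in any $h$-vertex explored set is bounded directly. Your identification of the adaptive-symmetry issue as the main technical obstacle is accurate, and the coupling you describe is essentially what those papers carry out.
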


%------------------------------------------------------------------------------%
\subsection{Some Definitions and Assisting Results}
%------------------------------------------------------------------------------%

%------------------------------------------------------------------------------%
\subsubsection{Distances Between States}
%------------------------------------------------------------------------------%

\begin{definition}
    Given two quantum states
    \begin{equation}
        \ket{\psi} \coloneqq \sum_x \alpha_x\ket{x}
        \quad\text{and}\quad
        \ket{\varphi} \coloneqq \sum_x \beta_x\ket{x},
    \end{equation}
    define the \emph{1-norm distance} between them as
    \begin{equation}
        \tnorm{\ket{\psi} - \ket{\varphi}}
        \coloneqq \sum_x\abs{\alpha_x - \beta_x}.
    \end{equation}
\end{definition}

\begin{definition}
    Given two probability distributions $P$ and $Q$, define the \emph{1-norm
    distance} between them as
    \begin{equation}
        \tnorm{P - Q}
        \coloneqq \sum_x \abs{P(x) - Q(x)}.
    \end{equation}
\end{definition}

%------------------------------------------------------------------------------%
\subsubsection{Intermediate Quantum States}
%------------------------------------------------------------------------------%

We are going to define a set of quantum states corresponding to the
cross-section of quantum tiers querying a $1$-random $n$-welded tree.

\begin{definition}[state at depth $\ell$]\label{def:psi-t}
    Let $Q(T)$ be a $(m,m,d)$-quantum tier, with input state $\ket{\psi_0}$, and
    querying a random $n$-welded tree $T$.  We define the state at depth $\ell$,
    denoted by $\ket{\psi_\ell}$, to be the state produced by the first $\ell$
    consecutive layers of $Q(T)$ acting on $\ket{\psi_0}$.
\end{definition}

%------------------------------------------------------------------------------%
\section{The Case of Few Tiers}\label{sec:few-tiers}
%------------------------------------------------------------------------------%

In this section, we will give a query lower bound for $(n, \eta, 4^d, d,
\poly(n))$-hybrid quantum circuits solving the welded tree problem. But our
lower bound leads to a separation against $\BQP$ only when $d \in \polylog(n)$
and $\eta < \sqrt{n}$.  In other words, this only allows us to separate
``$\FewTierHQC$'' ($\HQC$ where the hybrid quantum circuits are restricted to
have at most $\sqrt{n}$ tiers) from $\BQP$.

Our proof has two parts. First, we show that any $(n, \eta, 4^d, d, g)$-hybrid
quantum circuit can be simulated by a classical algorithm that makes at most
$4^{\eta(d+1)} (g \cdot d)$ oracle queries. Second, we combine this result with
the classical lower bound for the welded tree problem
(Theorem~\ref{thm:class-lower-bound}) to obtain the query lower bound.

The first part of the proof is formalized in the following Theorem which is the
technical heart of our proof.

\begin{theorem}\label{thm:classical-sim}
    Let $C(T)$ be an $(n, \eta, 4^d, d, g(n))$-hybrid-quantum circuit that
    queries a random $n$-welded tree $T$. Then there exists a classical
    algorithm $\A(T)$ making
    \begin{equation}
        4^{\eta(d+1)} (g(n) \cdot d)
    \end{equation}
    queries (and running in $\exp(n)$ time) such that the output probabilities of
    $C(T)$ and $\A(T)$ differ in $1$-norm error by
    \begin{equation}
        4^{(\eta+2)(d+2)} (g(n))^2 \cdot \frac{2^{n+2}-2}{2^{2n}},
    \end{equation}
    where the output probabilities of $C(T)$ and $\A(T)$ are defined over all
    possible labellings of the tree $T$.
\end{theorem}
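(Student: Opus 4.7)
The plan is to build a classical $\exp(n)$-time simulator $\A$ that walks through the tiers of $C$ in order, maintaining an explicit classical description of a pure state $\ket{\widetilde{\psi}}$ on the $g(n)$ qubits of the circuit along with a dictionary $\Vknown$ of already-explored vertex labels, initialized by one classical query revealing the \entrance{} vertex. Each classical tier is simulated exactly, and any valid label that surfaces is adjoined to $\Vknown$ together with its neighbors (obtained by further classical queries). Each quantum tier is simulated layer-by-layer: non-query gates are applied exactly to $\ket{\widetilde{\psi}}$, and before every query gate the simulator enumerates the support of $\ket{\widetilde{\psi}}$, issues a classical query for each basis label $x \notin \Vknown$ and each color $c$, and adds everything learned to $\Vknown$. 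The query gate itself is then replaced by the truncated oracle $K'$ defined by $\Vknown$, which behaves as $K_T$ on labels in $\Vknown$ and returns $\invalid$ on every other $2n$-bit string. Because $K'$ depends only on $\Vknown$, it can be applied to the simulated state exactly, and $K'$ can disagree with the true oracle $K_T$ only on labels of valid vertices that have not yet been discovered.

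For the query bookkeeping, let $N$ denote the size of $\Vknown$ at the start of a quantum tier. Each of the $d$ layers in that tier can force classical queries to at most $N$ outlier basis labels, each of which reveals at most three new neighbors, so $|\Vknown|$ at most quadruples per layer and the tier grows $|\Vknown|$ by a factor of at most $4^d$, while contributing at most $4^d \cdot N \cdot d$ new classical queries. Classical tiers of depth $4^d$ and width $g(n)$ contribute only an additional multiplicative factor that can be absorbed into the exponent (hence the $d+1$ rather than $d$ in the exponent). Iterating across $\eta$ tiers yields the stated bound $|\Vknown| \leq 4^{\eta(d+1)}$ and a total of at most $4^{\eta(d+1)}(g(n)\cdot d)$ classical queries, and each layer admits a naive state-vector update in $\exp(g(n)) \subseteq \exp(n)$ time.

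For the error analysis, the plan is a hybrid argument over the at most $\eta\cdot 4^d$ query gates in $C$, replacing $K_T$ by $K'$ one gate at a time. At each hybrid step the $1$-norm change in the produced state is at most twice the total amplitude that the current simulated state places on valid-but-unknown labels. The key quantitative input is that, conditioned on the transcript of classical queries the simulator has made, every $2n$-bit string outside $\Vknown$ is still uniformly likely to be one of the remaining undiscovered valid labels, so, in expectation over the random labeling of $T$, the probability that any specific basis element in the support of $\ket{\widetilde{\psi}}$ corresponds to an unknown valid label is bounded by $(2^{n+2}-2)/2^{2n}$, exactly as in Lemma~\ref{lem:discovery-probability}. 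Multiplying by the at most $|\Vknown|$-sized support, by the number of hybrid steps, and by loose polynomial factors, and then invoking the triangle inequality, recovers the claimed $1$-norm error of $4^{(\eta+2)(d+2)}(g(n))^2 \cdot (2^{n+2}-2)/2^{2n}$.

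The main technical obstacle is adapting Lemma~\ref{lem:discovery-probability}, which bounds the probability that a purely classical adversary outputs a previously undiscovered valid label, into a bound on the \emph{amplitude} an adaptively constructed quantum state places on such labels. The key observation is that the simulator's entire classical query sequence is a deterministic function of $\Vknown$ and the classical description of $\ket{\widetilde{\psi}}$, and that $\ket{\widetilde{\psi}}$ itself is independent of the portion of the random labeling not yet revealed by classical queries. Making this independence rigorous — in particular, verifying that $K'$ never leaks information about undiscovered labels and therefore that the residual labeling remains uniform conditional on the simulator's state — is the bulk of the careful work, after which the per-step amplitude bound reduces cleanly to the classical counting bound of Lemma~\ref{lem:discovery-probability}.
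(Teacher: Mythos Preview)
Your overall strategy is the paper's: track an explicit state vector plus a dictionary $\Vknown$, replace the true oracle by a truncated one returning $\invalid$ outside $\Vknown$, and bound the error via Lemma~\ref{lem:discovery-probability}. But the simulator you describe does not achieve the query bound you claim. You say the simulator ``issues a classical query for each basis label $x \notin \Vknown$'' appearing in the support of $\ket{\widetilde\psi}$; since a single Hadamard layer can give the state full support over $\{0,1\}^{g(n)}$, this is exponentially many classical queries per layer, not the $O(|\Vknown|)$ your bookkeeping paragraph asserts. The paper's simulator (Algorithm~\ref{alg:SimulateOracle}) does the opposite: it \emph{never} queries a label that is not already known to be the neighbor of something in $\Vknown$, and for every other label it simply writes $\invalid$ without asking. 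That is precisely why $|\Vknown|$ at most quadruples per layer.

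The same confusion recurs in the error analysis: the phrase ``at most $|\Vknown|$-sized support'' is false in general. The paper avoids any support-size bound via the thought experiment in the proof of Lemma~\ref{lem:quantumlayersimulator-works}: sample $z$ with probability $|c_z|^2$, pick one of the at most $g(n)$ query registers uniformly, and output its contents as a guess for a valid-but-unknown label. This guessing procedure can be carried out after only $|\Vknown'|$ classical queries, so Lemma~\ref{lem:discovery-probability} upper-bounds its success probability by $|\Vknown'|\cdot(2^{n+2}-2)/2^{2n}$, which in turn upper-bounds $\frac{1}{g(n)}\sum_{z\in\Outliers}|c_z|^2$. Your final paragraph essentially sketches this reduction correctly, so the repair is local: have the simulator refrain from querying unknown labels, and replace the support-size language with the number-of-queries argument.
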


Using Theorem~\ref{thm:classical-sim} we can prove the following adaptive
quantum query lower bound via an appeal to the classical query lower bound for
the \textsc{Welded Tree Problem} (Theorem~\ref{thm:class-lower-bound}).

\begin{theorem} \label{thm:adaptive-query-lower-bound}
    Let $T$ be a random $n$-welded black-box tree, and let $C(T)$ be a
    $(n,\eta,4^d,d,g(n))$-hybrid-quantum circuit that queries a random
    $n$-welded tree $T$, such that
    \begin{equation}
        2(\eta + 2) (d + 2) + 2\log g(n) + 2 < n/2.
    \end{equation}
    Then, for sufficiently large $n$, $C(T)$ finds the $\exit$ with probability at
    most
    \begin{equation}
        O(n2^{-n/3}).
    \end{equation}
\end{theorem}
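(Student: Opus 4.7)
The strategy is to derive the quantum lower bound by combining the classical simulation from Theorem~\ref{thm:classical-sim} with the classical query lower bound from Theorem~\ref{thm:class-lower-bound} for the \textsc{Welded Tree Problem}: replace the hybrid-quantum circuit by an approximating classical algorithm, and then invoke the known classical hardness result. The stated hypothesis is calibrated precisely so that both ingredients can be applied with sufficiently sharp bounds.

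First I would apply Theorem~\ref{thm:classical-sim} to the given circuit $C(T)$ to produce a classical algorithm $\A(T)$ that issues $Q := 4^{\eta(d+1)}(g(n)\cdot d)$ oracle queries and whose output distribution sits at $1$-norm distance at most
\begin{equation*}
    \epsilon := 4^{(\eta+2)(d+2)} (g(n))^2 \cdot \frac{2^{n+2}-2}{2^{2n}}
\end{equation*}
from the output distribution of $C(T)$. Next, I would extract two consequences of the hypothesis $2(\eta+2)(d+2) + 2\log g(n) + 2 < n/2$. Taking $\log_2$ of $\epsilon$ gives $\log_2\epsilon \leq 2(\eta+2)(d+2) + 2\log g(n) + 2 - n < -n/2$, so $\epsilon < 2^{-n/2}$. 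Similarly, using $\eta(d+1) \leq (\eta+2)(d+2)$ together with the slack $2(\eta+2)(d+2) - 2\eta(d+1) = 2\eta + 4d + 8$ and the fact that $g$ and $d$ are polynomially bounded, one obtains $\log_2 Q \leq n/3$ for all sufficiently large $n$, so that $\A(T)$ makes at most $2^{n/3}$ oracle queries.

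Given those two estimates, Theorem~\ref{thm:class-lower-bound} immediately bounds the probability that $\A(T)$ outputs the label of $\exit$ by $O(n \cdot 2^{-n/3})$, and the $1$-norm closeness of the output distributions lets this bound transfer to $C(T)$ at the additive cost of at most $\epsilon$. Since $\epsilon \leq 2^{-n/2} = o(n \cdot 2^{-n/3})$, the final bound on the probability that $C(T)$ outputs the label of $\exit$ is still $O(n \cdot 2^{-n/3})$, as desired. The only real obstacle is the parameter bookkeeping in the second step: verifying that the single condition $2(\eta+2)(d+2) + 2\log g(n) + 2 < n/2$ simultaneously forces both $\epsilon < 2^{-n/2}$ and $Q \leq 2^{n/3}$, exploiting the fact that $g$ and $d$ are polynomially bounded so that the slack between $\eta(d+1)$ and $(\eta+2)(d+2)$ suffices to bridge the gap between the $n/2$ in the hypothesis and the $n/3$ required by Theorem~\ref{thm:class-lower-bound}. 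All of the genuinely quantum-mechanical content has already been packaged into Theorem~\ref{thm:classical-sim}, which does the heavy lifting.
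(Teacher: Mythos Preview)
Your approach is essentially identical to the paper's: invoke Theorem~\ref{thm:classical-sim} to produce the classical simulator $\A(T)$, bound both the approximation error $\epsilon$ and the query count $Q$ via the hypothesis, then apply Theorem~\ref{thm:class-lower-bound} and absorb the $O(2^{-n/2})$ error term into the final $O(n2^{-n/3})$ bound. The only difference is cosmetic bookkeeping on the query count---the paper simply records $Q < 2^{n/2}$ before invoking the classical lower bound, whereas you push for $Q \leq 2^{n/3}$ explicitly (though note that your appeal to ``$g$ and $d$ polynomially bounded'' is not actually part of the stated hypotheses of this theorem).
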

\begin{proof}
    Using Theorem~\ref{thm:classical-sim} we can replace $C(T)$ with the
    corresponding classical algorithm $\A(T)$ that makes at most
    \begin{equation}
        4^{\eta (d+1)} (g(n) \cdot d) \leq 2^{2\eta (d+1) + 2\log g(n)}
    \end{equation}
    queries
    with at most
    \begin{align}
        4^{(\eta+2)(d+2)} (g(n))^2 \cdot \frac{2^{n+2}-2}{2^{2n}}
        &\leq \frac{2^{2(\eta+2) (d+2) + 2\log g(n) + n + 2}}{2^{2n}}\\
        &\leq \frac{2^{n/2 + n}}{2^{2n}}\\
        &\leq \frac{1}{2^{n/2}}\\
        &= O(2^{-n/2}).
    \end{align}
    loss in acceptance probability. Also, by assumption, we have
    \begin{equation}
        2^{2\eta (d+1) + 2\log g(n)} < 2^{n/2}.
    \end{equation}
    Therefore by application of the classical lower bound
    (Theorem~\ref{thm:class-lower-bound}), it follows that, for sufficiently large
    $n$, the algorithm succeeds---that is, finds the \exit{}---with probability at
    most
    \begin{equation}
        O(n2^{-n/3}) + O(2^{-n/2}) = O(n2^{-n/3}),
    \end{equation}
    as desired.
\end{proof}

\begin{theorem}[Formal version of Theorem~\ref{thm:few-tier-informal}]
    \label{thm:few-tier-formal}
    Given a family
    \begin{equation}
        T \coloneqq \{T_n : n \in \NN\}
    \end{equation}
    of random welded trees, where $T_n$ is a random $n$-welded tree. For all but
    finitely many choices of $n$, no $\FewTierHQC$ algorithm succeeds in
    deciding the $\textsc{Welded Tree Problem}(T)$ with probability higher than
    $O(n2^{-n/3})$ (where the output probability is taken over all labellings as
    defined in Definition~\ref{def:query-tree}.)
\end{theorem}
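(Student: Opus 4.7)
The plan is to derive Theorem~\ref{thm:few-tier-formal} as an almost-immediate corollary of Theorem~\ref{thm:adaptive-query-lower-bound}, by verifying that any $\FewTierHQC$ algorithm fits inside the parameter envelope to which that theorem applies. First I would spell out the formal definition of $\FewTierHQC$ that is implicit in the informal wording: a promise problem lies in $\FewTierHQC$ iff it is decided with bounded error by a uniform family of $(n, \eta, \poly(n), \polylog(n), \poly(n))$-hybrid-quantum circuits for which the number of tiers satisfies $\eta \leq \sqrt{n}$.

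Next I would massage such a circuit into the $(n, \eta, 4^d, d, g(n))$ form required by Theorem~\ref{thm:adaptive-query-lower-bound}. Since every fixed polynomial $p(n)$ satisfies $p(n) \leq 4^{c\log n}$ for some constant $c$, and since $\log n = O(\polylog(n))$, we may pick a single $d \in \polylog(n)$ that simultaneously (i) upper-bounds the quantum tier depth and (ii) makes $4^d$ at least as large as the polynomial classical tier depth. Padding each classical tier with identity layers (and each quantum tier with trivial layers) to exactly these depths does not change the acceptance probability. After this padding the circuit is an $(n, \eta, 4^d, d, g(n))$-hybrid-quantum circuit with $\eta \leq \sqrt{n}$, $d \in \polylog(n)$, and $g(n) \in \poly(n)$.

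The next step is to check the hypothesis of Theorem~\ref{thm:adaptive-query-lower-bound}, namely
\begin{equation}
2(\eta + 2)(d + 2) + 2\log g(n) + 2 < n/2.
\end{equation}
Substituting $\eta \leq \sqrt{n}$, $d \in \polylog(n)$, and $\log g(n) = O(\log n)$, the left-hand side is $O(\sqrt{n}\cdot \polylog(n))$, which is $o(n)$, so the inequality holds for all but finitely many $n$. Applying Theorem~\ref{thm:adaptive-query-lower-bound} then yields the stated $O(n\,2^{-n/3})$ upper bound on the probability of returning the \exit{} label, which is exactly the conclusion of Theorem~\ref{thm:few-tier-formal}.

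The main conceptual obstacle was already dispatched by Theorems~\ref{thm:classical-sim} and~\ref{thm:adaptive-query-lower-bound}; at the present step the only thing that requires care is ensuring that the unified depth parameter $d$ can simultaneously accommodate the polylog quantum depth and the poly classical depth without blowing up the left-hand side of the parameter inequality. The key observation making this painless is that $\sqrt{n}\cdot\polylog(n)$ remains strictly sub-linear in $n$, so the ``few tiers'' constraint $\eta \leq \sqrt{n}$ is precisely what pushes us inside the regime where the classical simulation of Theorem~\ref{thm:classical-sim} makes sub-exponentially many queries and the classical lower bound of Theorem~\ref{thm:class-lower-bound} therefore bites.
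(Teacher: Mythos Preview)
Your proposal is correct and follows essentially the same route as the paper: unpack the definition of $\FewTierHQC$, observe that the parameters $\eta \leq \sqrt{n}$, $d \in \polylog(n)$, $g(n) \in \poly(n)$ make the left-hand side of the hypothesis of Theorem~\ref{thm:adaptive-query-lower-bound} asymptotically $o(n)$, and then invoke that theorem. If anything you are slightly more careful than the paper, which simply asserts that a $\FewTierHQC$ algorithm \emph{is} an $(n,\eta(n),4^{d(n)},d(n),g(n))$-hybrid-quantum circuit family without spelling out the padding/choice-of-$d$ argument you give.
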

\begin{proof}
    Suppose that
    \begin{equation}
        \H(T) = \{H_n : n \in \NN \},
    \end{equation}
    is a $\FewTierHQC$ algorithm. Then, by definition, $\H(T)$ is a
    $\FewTierHQC^i$ algorithm for some $i$. More precisely, $\H(T)$ is a
    polynomial-time uniform family of
    $(n, \eta(n), 4^{d(n)}, d(n), g(n))$-hybrid-quantum circuits, for some
    functions $\eta(n) < \sqrt{n}$, $g(n) \in \poly(n)$, and
    $d(n) \in O(\log^i(n))$, querying $T$. By the assumption that $\H(T)$ is a
    polynomial-time uniform family, it follows that for all but finitely many
    choices of $H_n$, the parameters $\eta(n)$, $d(n)$, and $g(n)$ satisfy the
    relation
    \begin{equation}
        2(\eta(n) + 2) (d(n) + 2) + 2\log g(n) + 2 < n/2.
    \end{equation}
    Thus by Theorem~\ref{thm:adaptive-query-lower-bound} we get that for all but
    finitely many choices of $n$, $H_n$ succeeds with probability at most
    $O(n2^{-n/3})$. Therefore it follows that with probability $1$, $\H(T)$
    fails to decide $\textsc{Welded Tree Problem}(T)$.
\end{proof}

The remainder of this section is devoted to a proof of
Theorem~\ref{thm:classical-sim}.

\begin{definition}
    Given a $(n,\eta,4^d,d,g)$-hybrid-quantum circuit $C(T)$, for $\zeta \leq \eta$,
    let $C_\zeta(T)$ be the $\zeta$th tier of $C(T)$ (whether that tier be quantum
    or classical); for $\zeta \leq \eta$, let $C^\zeta(T)$ be the hybrid-quantum
    circuit corresponding to the first $\zeta$ tiers of $C(T)$.
\end{definition}

%------------------------------------------------------------------------------%
\subsection{Outline of Proof of Theorem~\ref{thm:classical-sim}}
%------------------------------------------------------------------------------%

Given an $(n,\eta,4^d,d,g)$-hybrid-quantum circuit $C(T)$ we define a simulation
algorithm $\A(T)$ below. First, we will need to recall a few definitions.

Let $Q(T)$ be a quantum tier in $C(T)$. For a particular input bitstring $x$ to
$Q(T)$, recall that $\ket{\psi_\ell}$ denotes the state at depth $\ell \leq d$
of $Q(T)$ as in Definition~\ref{def:psi-t}.  We wish to prove that, for every
such input $x$, the output distribution of $\A(T)$ is close to the output
distribution of $C(T)$ in trace distance.  Therefore, we fix an arbitrary input
$x$ at this point and will suppress the appearance of $x$ in our notation for
the remainder of the proof.

While it may be impossible to compute a classical description of
$\ket*{\psi_{d}}$ using only polynomially many \emph{classical} queries to $T$,
the intuition behind our classical simulation $\A(T)$ of $C(T)$ will instead be,
at each depth $\ell \leq d$, to maintain a classical description of a different
quantum state $\ket{\phi_\ell}$ which will be a close approximation of the state
$\ket{\psi_\ell}$ in trace distance.  The state $\ket{\phi_\ell}$ will be
defined inductively by the algorithm $\A(T)$ beginning with the initial
condition $\ket{\phi_0} \coloneqq \ket{\psi_0} \coloneqq \ket{x}$ and proceeding
with the simple update rule that $\A(T)$ faithfully classically simulates (with
exponential time, but just a polynomial number of classical queries) everything
that $Q(T)$ does in layer $\ell$, except for the points at which $Q(T)$ queries
the black-box $T$ at an input bitstring which is not among the "previously known
vertices" (defined later), in which case $\A(T)$ refrains from querying $T$ and
simply assumes (without justification) that the output of that query will be
$\invalid$.  As we will see below, this strategy allows $\A(T)$ to maintain a
close approximation $\ket{\phi_\ell}$ of $\ket{\psi_\ell}$ while only making a
polynomial number of \emph{classical} queries to $T$.

%------------------------------------------------------------------------------%
\subsection{The Low-Tier Simulator}
%------------------------------------------------------------------------------%
\begin{algorithm}[h]
    \tcc{Simulates $C^i(T)$ by composing the individual simulations of each of the
    first $i$ tiers of $C(T)$.} 
    \SetKwInOut{Input}{Input}\SetKwInOut{Output}{Output}
    \Input{Relativized circuit $C(T)$ and blackbox $T$}
    \Output{Simulated output of $C^i(T)$, in register $\out$; set $\Vknown$ of
    currently known vertices, in register $\vout$} 

    \tcc{initialization}
    $\Vknown \gets \text{empty dictionary}$\;
    Query the $\entrance{}$ vertex to get output $S$\;
    Set $\Vknown(\entrance{}) \gets S$\;\label{line:abuse-of-notation}
    $x \gets 0^n$\;

    \tcc{main loop}
    \For{each $j \in \{0,\dots,i\}$}{
        \uIf{$C_j$ is a quantum tier} {
            $x,\Vknown \gets \QuantumTierSimulator(C_j, x, \Vknown, T$)
            }\uElse{
            $x,\Vknown \gets \ClassicalTierSimulator(C_j, x, \Vknown, T$)
        }
    }
    \Return $x, \Vknown$
    \caption{$i$th-level ClassicalSimulationWrapper: $\A^i$}
    \label{alg:classicalsim1}
\end{algorithm}

\begin{definition}
    A \emph{dictionary} data structure is a set of key-value pairs indexed by
    keys. In other words, a dictionary $D$ has the form
    \begin{equation}
        D = \{(x_1,y_1), (x_1,y_1), \dots\}.
    \end{equation}
    We could also look at the dictionary as a mapping
    \begin{equation}
        D(x_i) \coloneqq y_i,
    \end{equation}
    for each $i$. 
\end{definition}

\begin{definition}[$\Vknown$]
    The dictionary $\Vknown$ has keys $x_i$, which are $2$-tuples
    $(v,c) \in \{0,1\}^n \times \{1,...,9\}$. We store in $\Vknown(v,c)$ the
    vertex label of the $c$-neighbour of $v$. By default, the value in
    $\Vknown(v,c)$ is $\invalid$.\footnote{One could implement this data
        structure is in a succinct way using a \emph{hash map} and an
        \texttt{if}-statement---check if the passed-in index $(v,c)$ is in the
        key-set of the hash map; if in the hash map, output the corresponding
    value; otherwise, output \invalid{}.}

    Sometimes, abusing notation, we set $\Vknown(v)$ to the output of querying a
    vertex $v$ (like in Line~\ref{line:abuse-of-notation} of
    Algorithm~\ref{alg:classicalsim1}), by this we mean that we query $(v,c)$ for
    each $c$ to get the label of the $c$-neighbour of $v$ (which can be
    $\invalid$) and then set $\Vknown(v,c)$ to be that label.
\end{definition}

\begin{definition}
    Let $\A^i(T)\vert_{\out}$ denote a modification of the algorithm $\A^i(T)$
    which only outputs the value in the $\out$ register. In other words,
    $\A^i(T)\vert_{\out}$ returns only $x$, rather than $(x, \Vknown)$.
\end{definition}

Before giving the tier simulation subroutines, we need the following definition.

\begin{definition}
    Let $L$ be a classical or quantum layer in a relativized circuit $Q(T)$. We
    can divide $L$ into two disjoint layers, one called $L^{T}$ which applies all
    of the black-box query gates in $L$ in parallel, and one called $L^{G}$, which
    applies every other gate in $L$ in parallel. Moreover, one can split $L$ into
    $L^T$ and $L^G$ in linear time.
\end{definition}

%------------------------------------------------------------------------------%
\subsection{Pseudocode for the Few-Tier Simulator}
\label{subsec:pseudocode-for-few-tier-algorithm}
%------------------------------------------------------------------------------%

The pseudocode for the low-tier simulator in given in
Algorithms~\ref{alg:SimulateOracle}, \ref{alg:QuantumLayerSimulator},
\ref{alg:QuantumTierSimulator}, and \ref{alg:ClassicalTierSimulator}.  The
analysis of the algorithm and the proof of Theorem~\ref{thm:few-tier-formal}
have been relegated to Section~\ref{sec:analysis-of-few-tier-simulator}.

\begin{algorithm}[h]
    \tcc{Simulates a black-box query by making black-box queries to vertices in
        $\Vknown$ and assuming that the output to queries made to vertices not in
    $\Vknown$ is $\invalid$.}
    \SetKwInOut{Input}{Input}\SetKwInOut{Output}{Output}
    \Input{Dictionary $\Vknown$ of known vertices, blackbox $T$, and quantum
    layer $L^T$ solely composed of query gates} 
    \Output{Array $S(z)$ of bitstrings and dictionary $\Vknown$ of currently
    known vertices} 

    Initialize $\Vknown^{temp} \gets \Vknown$\;

    \For{each bitstring $z$ which has length equal to the input register of
        $L^T$} {

        Initialize $z_{temp} \gets z$\;

        \For{each query gate $K$ in $L^T$} {

            Let $z_K$ be the substring of $z$ which lies in the input register
            of $K$\;

            Let $z_{K,x}, z_{K,c}, z_{K,y}$ be the three disjoint substrings of
            $z_K$ corresponding to the $x$-register, $c$-register, and
            $y$-register (respectively) of the input to gate $K$, as defined in
            Equation \eqref{eq:oracleregister}\;

            \uIf{$\Vknown(z_{K,x}, z_{K,c})$ exists}{

                Compute $z_{out} \coloneqq K(z)$ without any queries to $T$, by
                starting with $z_{temp}$, and replacing the substring $z_{K,y}$
                in $z$ with the substring $\Vknown(z_{K,x},z_{K,c})$\;
                Set $z_{temp} \gets z_{out}$\;

            }
            \uElseIf{$z_{K,x} == \Vknown(\alpha, \beta)$ for some $\alpha, \beta$}{

                Then, use one classical query to $T$ to set
                $\Vknown^{temp}(z_{K,x},z_{K,c}) \gets T(z_{K,x},z_{K,c})$\;
                Compute $z_{out} \coloneqq K(z)$ by starting with $z_{temp}$,
                and replacing the substring $z_{K,y}$ in $z$ with the substring
                $\Vknown^{temp}(z_{K,x}, z_{K,c})$\; Set $z_{temp} \gets
                z_{out}$\;

            }
            \Else{
                Compute $z_{out} \coloneqq K(z)$ without any queries to $T$, by
                starting with $z_{temp}$, and replacing the substring $z_{K,y}$
                in $z$ with the substring $\invalid$\;
                Set $z_{temp} \gets z_{out}$\;
            }

        }

        Set $S(z) \gets z_{temp}$\;
    }
    Set $\Vknown \gets \Vknown^{temp}$\;
    \Return $S, \Vknown$
    \caption{SimulateOracle}
    \label{alg:SimulateOracle}
\end{algorithm}

\begin{algorithm}[h]
    \SetKwInOut{Input}{Input}\SetKwInOut{Output}{Output}
    \Input{Quantum layer $L$, input quantum state $\ket{\psi}$,
    dictionary $\Vknown$ of known vertices, and blackbox $T$.}
    \Output{Simulated output of $L$ on input $\ket{\psi}$ and
    dictionary $\Vknown$ of currently known vertices.} 

    Split $L$ into a query layer $L^T$ and a non-query layer $L^G$\; Compute
    $\ket{\phi} \gets L^G\ket{\psi}$ (in exponential time) without any
    queries\label{algline:phi}\;
    Expand $\ket{\phi}$ in the classical basis as $\ket{\phi} = \sum_z c_z
    \ket{z}$\;
    Compute $S, \Vknown' \gets \SimulateOracle(\Vknown, T, L^T)$%
    \label{algline:vknown-prime}\;
    Define $\ket{\psi'} \gets \sum_z c_z \ket{S(z)}$\label{algline:psi-prime}\;
    \tcc{Note that $\ket{S(z)}$ is a bitstring for all $z$, by definition of
    $\SimulateOracle$.}

    \Return $\ket{\psi'}, \Vknown'$
    \caption{QuantumLayerSimulator}
    \label{alg:QuantumLayerSimulator}
\end{algorithm}

\begin{algorithm}[h]
    \SetKwInOut{Input}{Input}\SetKwInOut{Output}{Output}
    \Input{Relativized circuit $Q(T)$, input $x$, dictionary $\Vknown$ of known vertices,
    and blackbox $T$} 
    \Output{Simulated output of $Q(T)$ on input $x$, in register $\out$, and dictionary
    $\Vknown$ of currently known vertices, in register $\vout$} 

    Initialize $\ket{\psi_0} \gets \ket{x}$\;
    Initialize $\Vknown^0 \gets \Vknown$\;
    Let $d$ be the number of layers in $Q(T)$\;
    Let $L_i$ be the $i$th layer in $Q(T)$\;
    \For{each $i \in \{1, \dots, d \}$} {
        $\ket{\psi_i}, \Vknown^i \gets
        \QuantumLayerSimulator(L_i, \ket{\psi_{i-1}}, \Vknown^{i-1}, T)$
        \label{algline:quantumlayersimulator-output}\;
    }
    Set $\Vknown \gets \Vknown^d$\;
    Let $x$ be the output of a classical basis measurement on $\ket{\psi_d}$\;
    \Return $x, \Vknown$
    \caption{QuantumTierSimulator}
    \label{alg:QuantumTierSimulator}
\end{algorithm}

\begin{algorithm}[h]
    \SetKwInOut{Input}{Input}\SetKwInOut{Output}{Output}
    \Input{Relativized circuit $C(T)$, input $x$, set $\Vknown$ of known vertices,
    and blackbox $T$} 
    \Output{Simulated output of $C(T)$ on input $x$, in register $\out$, and set
    $\Vknown$ of currently known vertices, in register $\vout$} 

    Initialize $w_0 \gets x$\;
    Let $\eta$ be the number of layers in $C(T)$\;

    Let $L_i$ be the $i$th layer in $C(T)$\;
    \For{each $i \in \{1, \dots, \eta \}$} {
        Factorize $L_i$ into a query layer $L^T_i$ and a non-query layer $L^G_i$\;
        Compute $u_i \gets L_i^Gw_{i-1}$ without any queries\;
        Compute $w_i, \Vknown \gets \SimulateOracle(u_i, \Vknown, T, L^T)$\;       
    }
    \Return $w_\eta, \Vknown$
    \caption{ClassicalTierSimulator}
    \label{alg:ClassicalTierSimulator}
\end{algorithm}

%------------------------------------------------------------------------------%
\section{Relativized Jozsa's Conjecture}\label{sec:relat-jozs-conj}
%------------------------------------------------------------------------------%

In this section, we will give a query lower bound for $(n, 2d, c(n), d,
g(n))$-jozsa quantum circuits solving the welded tree problem. The organization
of the proof is similar to the previous section.  First, we show that any $(n,
2d, c(n), d, g)$-jozsa quantum circuit can be simulated by a classical algorithm
that makes at most $4^d + c(n)g(n)$ oracle queries. Second, we combine this
result with the classical lower bound for the welded tree problem
(Theorem~\ref{thm:class-lower-bound}) to obtain the query lower bound.

The first part of the proof is formalized in the following Theorem which is the
technical heart of our proof.

\begin{theorem}\label{thm:jozsa-classical-sim}
    Let $C(T)$ be an $(n, 2d, c(n), d, g(n))$-jozsa-quantum circuit that
    queries a random $n$-welded tree $T$. Then there exists a classical
    algorithm $\A(T)$ making at most
    \begin{equation}
        4^d + c(n) g(n)
    \end{equation}
    queries (and running in $\exp(n)$ time) such that the output probabilities of
    $C(T)$ and $\A(T)$ differ in $1$-norm error by
    \begin{equation}
        8 c(n) g(n) \cdot 4^{d+1} \cdot \frac{2^{n+2}-2}{2^{2n}}.
    \end{equation}
    where the output probabilities of $C(T)$ and $\A(T)$ are defined over all
    labellings of the tree $T$. 
\end{theorem}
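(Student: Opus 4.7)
The plan is to adapt the simulator framework from Section~\ref{sec:few-tiers} (Algorithms~\ref{alg:classicalsim1}--\ref{alg:ClassicalTierSimulator}) to the Jozsa-quantum circuit setting, exploiting the principal observation stated in the proof overview: because each classical tier $C_i$ operates only on the classical bitstring produced by the intermediate measurement $\Pi$, it is ``decoupled'' from the quantum side, and its queries can be executed directly against the true oracle $T$, while only the quantum tiers require the $\SimulateOracle$ approximation developed in Section~\ref{sec:few-tiers}.

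Concretely, I would construct $\A(T)$ as an outer loop over the $\eta = 2d$ alternating $(Q_i, C_i)$ pairs. Between tiers I would carry two pieces of state: a classical description of the approximate joint quantum state $\ket{\phi_i}$ (living on registers \texttt{R1} and \texttt{R2}) and a dictionary $\Vknown$ of currently known vertex labels. For each quantum tier $Q_i$, I would invoke the $\QuantumTierSimulator$ of Algorithm~\ref{alg:QuantumTierSimulator} on $\ket{\phi_{i-1}}$ with the current $\Vknown$; this both advances the classical description of $\ket{\phi_{i-1}}$ to $\ket{\phi_i}$ and augments $\Vknown$, where by the branching argument of Section~\ref{sec:few-tiers} the net growth contributed by the quantum side is at most $O(4^d)$ entries (absorbing polynomial factors in $d$, $g(n)$, and $\eta$ into the leading exponential). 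The measurement $\Pi$ is then simulated by classical sampling from the marginal of $\ket{\phi_i}$ on \texttt{R1}, and the resulting bitstring is fed into $C_i$, which I would simulate verbatim using at most $c(n) g(n)$ direct classical queries to $T$, adding the returned labels to $\Vknown$. The output of $C_i$ is reinserted into \texttt{R1} and \texttt{R2} carries through to $Q_{i+1}$.

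The query count then follows by aggregating: the quantum tiers collectively contribute $O(4^d)$ entries to $\Vknown$, while the classical tiers contribute $O(c(n) g(n))$ direct queries, matching the claimed $4^d + c(n) g(n)$ bound. The error analysis parallels that of Theorem~\ref{thm:classical-sim}: at each quantum layer, the $1$-norm deviation introduced by responding $\invalid$ on unknown valid vertices is bounded by $|\Vknown| \cdot \frac{2^{n+2}-2}{2^{2n}}$ via Lemma~\ref{lem:discovery-probability}, and one accumulates this over all quantum layers while tracking the worst-case $|\Vknown| \leq O(c(n) g(n) \cdot 4^d)$ reached at the end of the computation; the stated bound $8 c(n) g(n) \cdot 4^{d+1} \cdot \frac{2^{n+2}-2}{2^{2n}}$ reflects exactly this product structure.

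The main obstacle will be verifying that the inductive invariant $\tnorm{\ket{\phi_i} - \ket{\psi_i}}$ is small is preserved across the measurement $\Pi$ and through the subsequent classical tier $C_i$. Because $\Pi$ only measures \texttt{R1} while \texttt{R2} remains coherent, the post-measurement object is classical-quantum, and the classical sample drawn by $\A$ on \texttt{R1} must be coupled correctly to the true oracle-generated distribution so that the approximation error does not amplify at the measurement step. The cleanest way to handle this is via the principle of deferred measurement: rewrite the entire Jozsa circuit with every $\Pi$ pushed past the remaining $C_i$'s, treating each $C_i$ as a classically-controlled reversible subroutine acting on \texttt{R1}. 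The resulting single (deep) quantum circuit can be analyzed directly by the few-tier machinery, and a final joint measurement on all \texttt{R1}'s yields the correct output distribution by linearity, closing the argument.
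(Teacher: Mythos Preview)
Your main construction (sampling from the \texttt{R1} marginal, running $C_i$ verbatim with at most $c(n)g(n)$ direct classical queries to $T$, and tracking $\Vknown$ via Lemma~\ref{lem:quantumlayersimulator-works} on the quantum layers) is exactly the paper's simulator (Algorithms~\ref{alg:JozsaClassicalSimulationWrapper}--\ref{alg:JozsaQuantumTierSimulator}), and the inductive error analysis you sketch is the content of Lemma~\ref{lem:jozsa-composition}.

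Where you diverge from the paper is in how you propose to resolve the ``main obstacle'' of propagating the invariant $\tnorm{\ket{\phi_i}-\ket{\psi_i}}$ through the measurement $\Pi$ and the subsequent classical tier. The paper does \emph{not} use deferred measurement here; it simply treats $\Pi\circ C_i$ (together with the reconstitution on \texttt{R1}) as a quantum channel and invokes monotonicity of the trace norm under channels, exactly as in the quantum-layer step. Your deferred-measurement route is not just a stylistic alternative: once you coherently embed each $C_i$ as a reversible subroutine, its oracle queries become quantum queries in a circuit of depth $c(n)$, and the $\SimulateOracle$ machinery from Section~\ref{sec:few-tiers} then grows $\Vknown$ by a multiplicative factor per layer of $C_i$. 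Since $c(n)$ is polynomial, this gives $\abs{\Vknown}$ of order $4^{c(n)}$, which is exponential and destroys both the query bound $4^d+c(n)g(n)$ and the error bound in the statement. The whole point of keeping $C_i$ classical and running it on a single sampled \texttt{R1} outcome is precisely that its contribution to $\abs{\Vknown}$ is then \emph{additive} ($+\,c(n)g(n)$) rather than multiplicative. So drop the deferred-measurement argument and instead close the obstacle the way the paper does, via the channel-monotonicity step already used in Lemma~\ref{lem:quantum-composition}.
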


Using Theorem~\ref{thm:jozsa-classical-sim} we can prove the following adaptive
quantum query lower bound via an appeal to the classical query lower bound for
the \textsc{Welded Tree Problem} (Theorem~\ref{thm:class-lower-bound}).

\begin{theorem}
    \label{thm:jozsa-main}
    Let $T$ be a random $n$-welded black-box tree, and let $C(T)$ be a $(n,
    2d, c(n), d, g(n))$-jozsa-quantum circuit that queries a $1$-random
    $n$-welded tree $T$, such that
    \begin{equation}
        2\eta d + \log c(n) + \log g(n) + 7 < n/3. 
    \end{equation}
    Then, for sufficiently large $n$, $C(T)$ finds the $\exit$ with probability at
    most
    \begin{equation}
        O(n2^{-n/3}).
    \end{equation}
\end{theorem}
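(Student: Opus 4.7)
The plan is to mirror the proof of Theorem~\ref{thm:adaptive-query-lower-bound} almost verbatim, with Theorem~\ref{thm:jozsa-classical-sim} playing the role of Theorem~\ref{thm:classical-sim}. First, apply Theorem~\ref{thm:jozsa-classical-sim} to replace the jozsa-quantum circuit $C(T)$ with a classical simulator $\A(T)$ that makes at most $4^d + c(n) g(n)$ classical queries to $T$ and whose output distribution differs from that of $C(T)$ in $1$-norm by at most $8\, c(n)\, g(n) \cdot 4^{d+1} \cdot (2^{n+2}-2)/2^{2n}$.

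Second, verify that the hypothesis $2\eta d + \log c(n) + \log g(n) + 7 < n/3$ (in which $\eta = 2d$ for the circuit at hand, so that $2d \leq 2\eta d$ for all $d \geq 0$) is strong enough both to place the simulator's query count below the $2^{n/3}$ threshold of Theorem~\ref{thm:class-lower-bound} and to drive the simulation error down to $2^{-\Omega(n)}$. For the query count, we get $4^d = 2^{2d} \leq 2^{n/3 - 7}$ and $c(n) g(n) \leq 2^{n/3 - 7}$, so $4^d + c(n) g(n) < 2^{n/3}$. For the error, the quoted bound rewrites as at most $2^{3} \cdot 2^{\log c(n) + \log g(n)} \cdot 2^{2d + 2} \cdot 2^{2 - n} = 2^{2d + \log c(n) + \log g(n) + 7 - n}$, and the hypothesis drives this exponent below $n/3 - n = -2n/3$.

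Third, apply Theorem~\ref{thm:class-lower-bound} to $\A(T)$ to bound its success probability by $O(n\, 2^{-n/3})$, and combine with the $O(2^{-2n/3})$ simulation error to conclude the same asymptotic bound $O(n\, 2^{-n/3})$ for $C(T)$, as claimed.

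There is no genuine obstacle once Theorem~\ref{thm:jozsa-classical-sim} is taken as given; the entire argument is routine parameter bookkeeping. The only step requiring any care is arranging that the modest $+7$ slack in the hypothesis really does absorb all the multiplicative constants ($8$, $4$, and the $2^{n+2}$ factor) present in the query and error bounds of Theorem~\ref{thm:jozsa-classical-sim}. The reason this direct parallel to Theorem~\ref{thm:adaptive-query-lower-bound} goes through is that the Jozsa classical simulator of Theorem~\ref{thm:jozsa-classical-sim} meets both a sub-exponential query count and an exponentially small simulation error under essentially the same kind of parameter regime as the hybrid simulator of Theorem~\ref{thm:classical-sim}.
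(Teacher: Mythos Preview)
Your proposal is correct and follows essentially the same approach as the paper: invoke Theorem~\ref{thm:jozsa-classical-sim} to obtain the classical simulator, check that the hypothesis controls both the query count (below $2^{n/3}$) and the simulation error (to $2^{-\Omega(n)}$), and then apply Theorem~\ref{thm:class-lower-bound}. Your bookkeeping is in fact tidier than the paper's in places---you track the exact query bound $4^d + c(n)g(n)$ from Theorem~\ref{thm:jozsa-classical-sim} and obtain the sharper error $O(2^{-2n/3})$, whereas the paper settles for $O(2^{-n/2})$---but the structure is identical.
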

\begin{proof}
    Using Theorem~\ref{thm:jozsa-classical-sim} we can replace $C(T')$ with the
    corresponding classical algorithm $\A(T')$ that makes $2^{2\eta d}$ queries
    with at most
    \begin{align}
        8 c(n) g(n) \cdot 4^{d+1} \cdot \frac{2^{n+2}-2}{2^{2n}}
    &\leq \frac{2^{3 + \log c(n) + \log g(n) + 2(d + 1) + n + 2}}{2^{2n}}\\
    &\leq \frac{2^{2d + \log c(n) + \log g(n) + 7 + n}}{2^{2n}}\\
    &\leq \frac{2^{n/2 + n}}{2^{2n}}\\
    &\leq \frac{1}{2^{n/2}}\\
    &= O(2^{-n/2}).
    \end{align}
    loss in acceptance probability. Also, by assumption, we have
    \begin{equation}
        2^{2\eta d} c(n)g(n) < 2^{n/3}.
    \end{equation}
    Therefore, by application of the classical lower bound
    (Theorem~\ref{thm:class-lower-bound}), it follows that, for sufficiently large
    $n$, the algorithm succeeds---that is, finds the \exit{}---with probability at
    most
    \begin{equation}
        O(n2^{-n/3}) + O(2^{-n/2}) = O(n2^{-n/3}),
    \end{equation}
    as desired.
\end{proof}

\begin{theorem}[Formal version of Theorem \ref{thm:jozsa-informal}]
    \label{thm:jozsa-formal}
    Given a family
    \begin{equation}
        T \coloneqq \{T_n : n \in \NN\}
    \end{equation}
    of random welded trees, where $T_n$ is a $1$-random $n$-welded tree. For
    all but finitely many choices of $n$, no $\JC$ algorithm succeeds in deciding
    the $\textsc{Welded Tree Problem}(T)$ with probability higher than
    $O(n2^{-n/3})$ (where the output probability is taken over all labellings as
    defined in Definition~\ref{def:query-tree}.)
\end{theorem}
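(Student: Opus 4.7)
The plan is to mirror the derivation of Theorem~\ref{thm:few-tier-formal} from Theorem~\ref{thm:adaptive-query-lower-bound}, now using Theorem~\ref{thm:jozsa-main} as the per-circuit lower bound. Let $\J(T) = \{J_n : n \in \NN\}$ be any $\JC$ algorithm. Unfolding Definitions~\ref{def:jcunion} and~\ref{def:jci}, $\J(T) \in \JC^i$ for some fixed nonnegative $i$, so it is a polynomial-time uniform family of $(n, \eta(n), c(n), d(n), g(n))$-jozsa-quantum circuits with $\eta(n), c(n), g(n) \in \poly(n)$ and $d(n) \in O(\log^i(n))$.

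The next step is to verify the hypothesis of Theorem~\ref{thm:jozsa-main}, namely $2\eta d + \log c(n) + \log g(n) + 7 < n/3$, for all but finitely many $n$. Since Theorem~\ref{thm:jozsa-main} is stated for the canonical shape $(n, 2d, c(n), d, g(n))$, we first reshape each $J_n$ into this form using the decoupling observation from the overview at the start of Section~\ref{sec:relat-jozs-conj}: because each classical tier $C_i$ only processes the classical bitstring output of the preceding measurement, the $\poly(n)$ classical tiers can be amalgamated into $O(d)$ classical tiers (at the cost of inflating $c(n)$ by a polynomial factor) without altering the input-output behavior of the circuit. After this rewrite, $2\eta d = O(\log^{2i}(n))$ and $\log c(n), \log g(n) = O(\log n)$, so the entire left-hand side is $o(n)$ and the inequality holds for sufficiently large $n$. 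Theorem~\ref{thm:jozsa-main} then yields that $J_n$ succeeds in finding the $\exit$ with probability at most $O(n 2^{-n/3})$, and therefore $\J(T)$ fails to decide $\textsc{Welded Tree Problem}(T)$ with probability $1$.

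The main obstacle is the reshaping step, because classical tiers of a Jozsa circuit cannot in general be merged across intervening quantum tiers: the input to $C_{i+1}$ depends on measurement outcomes of $Q_{i+1}$, which in turn used the output of $C_i$. The salvaging observation, already implicit in Theorem~\ref{thm:jozsa-classical-sim} (whose query count $4^d + c(n)g(n)$ is independent of $\eta$), is that the classical tiers make no oracle queries and perform only polynomial-time classical side-processing on bitstrings, so the effective ``depth parameter'' governing the lower bound is really just $d$; everything else is bookkeeping. With this in hand, the rest of the proof follows the same template as Theorem~\ref{thm:few-tier-formal} and reduces to parameter arithmetic.
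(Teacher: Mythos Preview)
Your overall scaffold---unfold the definition of $\JC$, verify the hypothesis of Theorem~\ref{thm:jozsa-main}, then invoke it---is exactly what the paper does. The paper's proof simply writes the $\JC^i$ family directly in the shape $(n, 2d, c(n), d \coloneqq \log^i(n), g(n))$, checks $2d + \log c(n) + \log g(n) + 7 < n/3$ (which holds because $d \in \polylog(n)$ and $c,g \in \poly(n)$), and applies Theorem~\ref{thm:jozsa-main}. There is no reshaping step in the paper.

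Where your proposal departs---and where it has a genuine gap---is the reshaping you insert to reconcile Definition~\ref{def:jci}'s $\eta \in \poly(n)$ with the $\eta = 2d$ shape of Theorem~\ref{thm:jozsa-main}. You yourself note that amalgamating the classical tiers ``into $O(d)$ classical tiers'' cannot be done, because each $C_{i+1}$ consumes measurement outcomes from $Q_{i+1}$, which depends on the output of $C_i$. Your salvage---that ``the classical tiers make no oracle queries''---is simply false: the paragraph defining $\JC^A$ explicitly states that the classical tiers are relativized and ``make classical queries'' to the oracle. So the justification you offer for dropping $\eta$ from the parameter arithmetic is unavailable, and with $\eta \in \poly(n)$ the inequality $2\eta d + \log c(n) + \log g(n) + 7 < n/3$ is not established by your argument. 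You are right that there is a notational mismatch between Definition~\ref{def:jci} and the hypotheses of Theorems~\ref{thm:jozsa-classical-sim}--\ref{thm:jozsa-main}; the paper handles this by fiat rather than by reshaping, treating the total quantum depth as the governing $\polylog$ parameter. If you want to close the gap properly, the route is to restate or re-derive Theorem~\ref{thm:jozsa-main} for general $\eta$ (tracking that the simulator's query count scales like $4^{\eta d}$ times polynomial factors, as in Lemma~\ref{lem:jozsa-composition}), not to attempt a circuit-level reshaping.
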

\begin{proof}
    Suppose that
    \begin{equation}
        \J(T) = \{J_n : n \in \NN \},
    \end{equation}
    is a $\JC$ algorithm. Then, by definition, $\J(T)$ is a $\JC^i$ algorithm for
    some $i$. More precisely, $\J(T)$ is a polynomial-time uniform family of
    $(n, 2d, c(n), d \coloneqq \log^i(n), g(n))$-jozsa-quantum circuits, for some
    polynomials $c(n), g(n)$, querying $T$. By the assumption that $\J(T)$ is a
    polynomial-time uniform family, it follows that for all but finitely many
    choices of $J_n$, the parameters $c(n)$, $d \coloneqq \log^i(n)$, and $g(n)$
    satisfy the relation
    \begin{equation}
        2d + \log c(n) + \log g(n) + 7 < n/3.
    \end{equation}
    This, by Theorem~\ref{thm:jozsa-main}, implies that for all but finitely many
    choices of $n$, $J_n$ succeeds with probability at most
    $O(n2^{-n/3})$. Therefore it follows that with probability $1$, $\J(T)$ fails
    to decide $\textsc{Welded Tree Problem}(T)$.
\end{proof}

%------------------------------------------------------------------------------%
\subsection{Pseudocode for The Jozsa Simulator}
%------------------------------------------------------------------------------%

The pseudocode for the Jozsa simulator in given in
Algorithms~\ref{alg:JozsaClassicalSimulationWrapper} and
\ref{alg:QuantumLayerSimulator} which make use of algorithms defined in the
previous section. The analysis of the algorithm and the proof of
Theorem~\ref{thm:jozsa-classical-sim} have been relegated to
Section~\ref{sec:analysis-of-jozsa-simulator}.

\begin{algorithm}[h]
    \SetKwInOut{Input}{Input}\SetKwInOut{Output}{Output}
    \Input{Relativized circuit $C(T)$ and blackbox $T$}
    \Output{Simulated output of $C^i(T)$, in register $\out$; set $\Vknown$ of
    currently known vertices, in register $\vout$} 

    \tcc{initialization}
    $\Vknown \gets \text{empty dictionary}$\;
    Query the $\entrance{}$ vertex to get output $S$\;
    Set $\Vknown(\entrance{}) \gets S$\;
    $\ket{\psi_0} \gets 0^n$\;

    \Return $\JozsaQuantumTierSimulator(C_i, \ket{\psi_{i-1}}, \Vknown^{i-1}, T)$\;
    \caption{JozsaClassicalSimulationWrapper}
    \label{alg:JozsaClassicalSimulationWrapper}
\end{algorithm}

\begin{algorithm}[h]
    \SetKwInOut{Input}{Input}\SetKwInOut{Output}{Output}
    \Input{Relativized circuit $Q(T)$, input $x$, dictionary $\Vknown$ of known vertices,
    and blackbox $T$} 
    \Output{Simulated output of $Q(T)$ on input $x$, in register $\out$, and dictionary
    $\Vknown$ of currently known vertices, in register $\vout$} 

    Initialize $\ket{\psi_0} \gets \ket{x}$\;
    Initialize $\Vknown^0 \gets \Vknown$\;
    Let $d$ be the number of quantum layers in $Q(T)$\;
    Let $L_\ell$ be the $\ell$th layer in $Q(T)$\;
    \For{each $i \in \{1, \dots, \zeta \}$} {
        \If{$L_i$ is a quantum layer}{
            $\ket{\psi_i}, \Vknown^i \gets \QuantumLayerSimulator(L_i, \ket{\psi_{i-1}},
            \Vknown^{i-1}, T)$\;
        }
        \Else{
            Compute $x$ by measuring the register \texttt{R1} in the classical basis\;
            Compute $x, \Vknown^i \gets \ClassicalTierSimulator(L_i,x,\Vknown^{i-1},
            T)$\;
            $\ket{\psi_i} \gets \ket{x}\ket{\psi_{i-1}^{\texttt{R2}}}$ where
            $\ket{\psi_{i-1}^{\texttt{R2}}}$ is the quantum state on register \texttt{R2}\;
        }
    }
    Set $\Vknown \gets \Vknown^{2d}$\;
    Let $x$ be the output of a classical basis measurement on $\ket{\psi_{\eta}}$\;
    \Return $x, \Vknown$
    \caption{JozsaQuantumTierSimulator}
    \label{alg:JozsaQuantumTierSimulator}
\end{algorithm}

%------------------------------------------------------------------------------%
\section{The Information Bottleneck, and the Case of Polynomial Tiers.}
\label{sec:hqc}
%------------------------------------------------------------------------------%

In this section we will introduce a proof technique, which we refer to as the
``Information Bottleneck'', which allows us to extend the results of the
previous sections to our main result about $\HQC$.

\begin{theorem}[Formal Version of Theorem~\ref{thm:mainthm-informal}]
    \label{thm:mainthm-formal}
    Given a family
    \begin{equation}
        T \coloneqq \{T_n : n \in \NN\}
    \end{equation}
    of random welded trees, where $T_n$ is a random $n$-welded tree. For all but
    finitely many choices of $n$, no $\HQC$ algorithm succeeds in deciding the
    $\textsc{Welded Tree Problem}(T)$ with probability higher than
    $O(2^{-\Omega(n)})$ (where the output probability is taken over all
    labellings as defined in Definition~\ref{def:query-tree}.)
\end{theorem}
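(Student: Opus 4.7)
The plan is to extend the classical simulation approach from Section~\ref{sec:few-tiers} so that its query complexity remains sub-exponential even when the number of tiers $\eta(n)$ is polynomial. The obstruction identified in the proof overview is that $\Vknown$, as maintained by Algorithm~\ref{alg:classicalsim1}, multiplies by roughly a factor of $9^d$ per quantum tier, so after $\poly(n)$ tiers it blows past $2^{n/3}$ and Theorem~\ref{thm:class-lower-bound} is no longer directly applicable. The remedy is to insert an Information Bottleneck after every quantum tier, collapsing $\Vknown$ down to the subset of vertices that are actually ``visible'' to the next tier through the $g(n)$-bit classical bridge that separates it from the previous one.

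Concretely, I would define a $\bottleneck$ step that receives the full $\Vhknown$ together with the classical bridge bitstring $w\in\{0,1\}^{g(n)}$ produced by the measurement at the end of a quantum tier, and forms $\Vcknown$ by scanning all $2n$-bit substrings of $w$ and retaining exactly those which $\Vhknown$ identifies as valid vertex labels, together with whatever neighbor entries $\Vhknown$ already records for them. Since $w$ has only $g(n)$ bits, $\abs{\Vcknown}\le g(n)$, independently of how large $\Vhknown$ has grown. The simulator for the next tier then starts from $\Vcknown$ in place of $\Vknown$ and rediscovers whatever further neighborhood information it needs on the fly via ordinary classical queries to $T$, so its own $\Vknown$ inflates by only a factor of $9^d$ during that one tier. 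Across all $\poly(n)$ tiers the total number of classical oracle queries is then bounded by $\poly(n)\cdot 9^{d}\cdot g(n) = \poly(n)\cdot 2^{O(\polylog n)}$, which is comfortably below the $2^{n/3}$ threshold required by Theorem~\ref{thm:class-lower-bound}.

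The key approximation lemma, and the main obstacle, is to show that replacing $\Vhknown$ by $\Vcknown$ at the start of tier $i+1$ degrades the simulator's output distribution by at most an exponentially small amount. The intuition is that any valid vertex label which $\Vhknown$ knows but which $\Vcknown$ discards is, by construction, not a $2n$-bit substring of $w$, and hence from the point of view of the subsequent tier --- whose entire non-oracle input is $w$ --- that label is information-theoretically indistinguishable from a never-visited random $2n$-bit string. Consequently, any query made in tier $i+1$ whose vertex argument happens to coincide with such a discarded label must either have been derived from $w$ by a deterministic computation (in which case, over the randomness of the Welded Tree's $2n$-bit labelling, the probability that this specific function of $w$ hits a valid-but-discarded label is $O(2^{-n})$ by Lemma~\ref{lem:discovery-probability}) or have been derived by chained queries (which the simulator handles exactly because the relevant intermediate labels are rediscovered within the current tier). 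I would formalize this with a hybrid argument interpolating between $C(T)$, the unpruned simulator of Algorithm~\ref{alg:classicalsim1} (which is exponentially close to $C(T)$ but uses too many queries), and the bottlenecked simulator, with the per-tier error at each hybrid step bounded by (total simulator queries)$\cdot 2^{-n}$.

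With this bottleneck analogue of Theorem~\ref{thm:classical-sim} in hand, the end-game mirrors the proof of Theorem~\ref{thm:few-tier-formal}: any $\HQC$ algorithm deciding $\textsc{Welded Tree Problem}(T)$ with probability $\Omega(1)$ would, via the Bottleneck simulator, yield a classical algorithm making only $2^{O(\polylog n)}$ queries and succeeding with probability $\Omega(1)-2^{-\Omega(n)}$, contradicting Theorem~\ref{thm:class-lower-bound}. The delicate part of the argument is ensuring that information ``discarded'' at one bottleneck step cannot reappear several tiers later through the oracle's randomness in a way not accounted for by the per-tier hybrid; I expect the right bookkeeping device is to retain $\Vhknown$ throughout the \emph{analysis} (pruning only in the \emph{algorithm}), and to union-bound the discovery probability globally over the total number of classical queries ever made by the simulator, rather than tier-by-tier, so that the correlations induced by the oracle's labelling are controlled uniformly.
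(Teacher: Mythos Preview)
Your substring-based $\bottleneck$ is too aggressive, and the step where you invoke Lemma~\ref{lem:discovery-probability} fails. Consider a two-tier circuit in which tier~1 makes $d$ sequential queries starting from $\entrance$ to reach a vertex $v_d$ at depth $d$, then outputs $w = v_d \oplus 1^{2n}$; tier~2 immediately computes $v_d = w \oplus 1^{2n}$ and queries it in its first layer. Your $\Vcknown$ retains only literal $2n$-bit substrings of $w$, so $v_d$ is discarded, and the tier-2 simulator (starting from $\Vcknown$, which does not contain $v_d$) returns $\invalid$ on a query whose true answer is a valid neighbor. Your justification---that a deterministic function of $w$ hits a valid-but-discarded label with probability $O(2^{-n})$ over the labelling---is wrong precisely because $w$ is itself a function of the labelling through the earlier queries; Lemma~\ref{lem:discovery-probability} bounds the probability of guessing labels \emph{not returned by any query}, not labels that were queried and then re-encoded. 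The same obstruction defeats your proposed global union bound at the end: the discrepancy between the pruned simulator and the circuit on this example is~$1$, not exponentially small, so no bookkeeping over $\Vhknown$ rescues it.

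The paper's $\bottleneck$ avoids this by using a semantic rather than syntactic criterion: it iteratively adds to $\Vknown$ every label $b$ for which $\Pr_{P \in \T^i_{\Vknown, x, r_{\le i}}}[b \text{ valid in } P] > 2^{-n/100}$, where the probability is over \emph{all} black-box trees consistent with the current $\Vknown$ and with the observed output $x$. This automatically catches $v_d$ in the example above regardless of the encoding, because conditioned on $x = w$ the label $w \oplus 1^{2n}$ is valid with probability essentially~$1$. The nontrivial work (Lemma~\ref{lem:keylemma1}) is then to show that this greedy loop terminates after at most $2n(g(n)+|r|)$ additions, via a counting argument comparing upper and lower bounds on $|\T^i_{\Vknown,x,r_{\le i}}|$ as labels are added; this is the idea your proposal is missing.
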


We begin by noting that, in the setting of $\poly(n)$ tiers, we can assume,
without loss of generality, that every tier is a quantum tier because a
classical tier of polynomial depth can always be implemented as the composition
of polynomially many quantum tiers of logarithmic depth.  Recall that the reason
that this does not necessarily contain all of $\BQP$ is that there is a required
measurement of all qubits, in the computational basis, at the end of each tier.
For simplicity of notation we will take this interpretation for the remainder of
this section.  So, we will use the following modification of Definition
\ref{def:hybridcircuit}, which we can use, without loss of generality, in this
setting of $\poly(n)$ tiers.

\begin{definition}\label{def:wlog-hybrid-quantum-circuit}
    Define a \emph{$(n,\eta,q,g)$-hybrid-quantum circuit} to be a composition of
    $\eta$ circuits
    \begin{equation}
        C_1 \circ C_2 \circ \cdots \circ C_\eta
    \end{equation}
    such that the following hold.
    \begin{enumerate}
        \item $C_1$ is an $(n, g, q)$-quantum tier.
        \item $C_\eta$ has at least one output.
        \item For $i > 1$, $C_i$ is an $(g, g, q)$-quantum tier.
    \end{enumerate}
\end{definition}

The first step in proving the oracle separation is to augment our previous
algorithm for classically simulating a relativized $\HQC$ circuit by adding a
subroutine called $\bottleneck$ which limits the growth of the set $\Vknown$ to
be polynomial in the number of tiers rather than exponential.  The challenge is
to do this while also preserving the properties of $\Vknown$ required for the
rest of the algorithm to work.  As we will see in the analysis of this algorithm
in Section~\ref{sec:analysis-of-bottleneck-simulator}, the key property of
$\Vknown$ that we wish to preserve is the property that, at any point in our
simulation algorithm, it is impossible to guess a valid label outside of
$\Vknown$ with better than exponentially small ($2^{-n/100}$) success
probability.

%------------------------------------------------------------------------------%
\subsection{The Bottleneck Algorithm}
%------------------------------------------------------------------------------%

In this Subsection, we will discuss the algorithm and provide pseudocode for the
simulator. The analysis of the algorithm and the proof of
Theorem~\ref{thm:mainthm-formal} have been moved to
Section~\ref{sec:analysis-of-bottleneck-simulator}.

Given a \emph{$(n,\eta, q(n), g(n))$-hybrid-quantum circuit} $C(\cdot)$ where
$\eta = \poly(n)$, $q(n) = \polylog(n)$, and $g(n) = \poly(n)$, the outer loop
for our new algorithm for simulating $C(T)$ is given in
Algorithm~\ref{alg:bottleneck-classicalsim}.

\begin{algorithm}[h]
    \tcc{Simulates $C^i(T)$ by composing the individual simulations of each of
    the first $i$ tiers of $C(T)$.}
    \SetKwInOut{Input}{Input}\SetKwInOut{Output}{Output}
    \Input{Relativized circuit $C(T)$ and blackbox $T$}
    \Output{Simulated output of $C^i(T)$, in register $\out$; set $\Vknown$ of
    currently known vertices, in register $\vout$}

    \tcc{base case}
    \If{$i == 0$}{
        Initialize $\Vknown \gets \text{empty dictionary}$\;
        Query the $\entrance{}$ vertex to get output $S$\;
        Set $\Vknown(\entrance{}) \gets S$\;
        Initialize $\Vhknown \gets \Vknown$\;
        $x \gets 0^n$\;
        \Return $x, \Vknown, \Vknown^{hist}$
    }

    \tcc{inductive case}
    $y, \Viknown, \Vknown^{hist} \gets \M^{i-1}(C(T), T)$\;
    $x,\Vfknown, \Vknown^{hist} \gets
    \BottleneckQuantumTierSimulator(C_i, i-1,y, \Viknown, \Vknown^{hist}, T$)\;
    $\Vknown \gets \Vfknown$\;

    \Return $x, \Vknown, \Vknown^{hist}$
    \caption{$i$th-level ClassicalSimulationWrapper with Bottleneck: $\M^i$}
    \label{alg:bottleneck-classicalsim}
\end{algorithm}

The first thing to notice about the new simulation algorithm $\M^i$ is that it
now tracks two different sets of ``known vertices'', one called $\Vknown$, and
another called $\Vhknown$.  At any point in this new algorithm $\Vhknown$ will
contain the label and neighbors of every vertex ever queried in the course of
the algorithm up to that point.  This will serve as an important reference
throughout the proof.  The set $\Vknown$, on the other hand, will be modified by
$\M^i$, after each simulated tier, to contain just the vertex labels which could
be guessed to be valid with probability higher than $2^{-n/100}$ given the
output of that tier (plus a select few other labels for technical reasons).
Then, when algorithm $\M^i$ begins to simulate each quantum layer, it only
treats the vertex labels in $\Vknown$ as known, rather than all of the labels in
the larger set $\Vhknown$.  This modification drastically decreases the amount
of ``branching'' that $\M^i$ does while simulating a quantum tier, and this is
what allows us to keep the size of $\Vknown$ (and $\Vhknown$) from growing too
large during the course of $\M^i$.  But, the process of updating the new
$\Vknown$ is delicate, and is carried out by the subroutine $\bottleneck$
(Algorithm~\ref{alg:Bottleneck}), which is itself a subroutine of the subroutine
$\BottleneckQuantumTierSimulator$
(Algorithm~\ref{alg:BottleneckQuantumTierSimulator}) in $\M^i$.

\begin{algorithm}[h]
    \SetKwInOut{Input}{Input}\SetKwInOut{Output}{Output}
    \Input{Relativized tier $Q(T)$, number $j$ of the current tier, input $x$,
        dictionary $\Vknown$ of currently known vertices, dictionary
        $\Vknown^{hist}$ of all vertices ever encountered, and
    blackbox $T$} 
    \Output{Simulated output of $Q(T)$ on input $x$, in register $\out$,
        dictionary $\Vknown$ of currently known vertices, in register $\vout$, and
    dictionary $\Vhknown$ of every vertex ever queried, in register $\vhout$.} 

    Initialize $\ket{\psi_0} \gets \ket{x}$\;
    Initialize $\Vknown^0 \gets \bottleneck(j-1, x, \Vhknown, \emptydict)$\;
    \tcc{Here $\emptydict$ represents the empty dictionary}
    Let $d$ be the number of layers in $Q(T)$\;
    Let $L_\ell$ be the $\ell$th layer in $Q(T)$\;
    \For{each $\ell \in \{1, \dots, d \}$} {

        $\ket{\psi_\ell}, \Vknown^{\text{$\ell$-temp}} \gets
        \QuantumLayerSimulator(L_\ell, \psi_{\ell-1}, \Vknown^{\ell-1},T)$%
        \label{algline:newvknown-i}\;

        Update $\Vknown^{hist} \gets \merge(\Vknown^{hist},
        \Vknown^{\ell\text{-temp}})$\;
        \tcc{Here $\merge$ performs a standard concatenation of two dictionaries.}

        $\Vknown^{\ell} \gets \bottleneck(j-1, x, \Vknown^{hist}, \Vknown^{\ell\text{-temp}})$\;

    }
    Set $\Vknown \gets \Vknown^d$\;
    Update $\Vknown^{hist} \gets \merge(\Vknown^{hist}, \Vknown)$\;
    Let $x$ be the output of a classical basis measurement on $\ket{\psi_d}$\;
    \Return $x, \Vknown, \Vknown^{hist}$
    \caption{BottleneckQuantumTierSimulator}
    \label{alg:BottleneckQuantumTierSimulator}
\end{algorithm}

One similarity between the two simulators $\A^i$ and $\M^i$ is that the only
randomness used in either of them (or in all of $\M^i$) is when they sample from
the computational basis elements $\ket{z}$ of a quantum state $\ket{\psi} =
\sum_{z}c_z \ket{z}$ according to the probability distribution given by
$|c_z|^2$.  Since the number of bits in $\ket{z}$ is at most $g(n)$, this
process can be done with $ng(n)$ random bits, approximately, but to
exponentially good precision in $n$.  For conciseness, in our analysis we will
ignore this exponentially small error in sampling as it can easily be included
in our union bounds, we will simply assume that the sampling is perfect.  This
step can easily be justified by standard techniques. Since this sampling process
is only done once per quantum layer, of which there are only $q(n)$ per quantum
tier, it follows that only $n\eta q(n)g(n)$ random bits are required for the
entire algorithm $\M^\eta$.  For concreteness in the rest of the argument we
name this seed randomness $r$ ($\abs{r} \leq n\eta q(n)g(n)$), and we consider
$\M^{i}$ to be a deterministic algorithm which is a function of $r$.  Note that
$\M^i$ only actually uses the first $n\eta g(n)\cdot i$ bits of $r$, which we
will denote by $r_{\leq i}$.  When necessary we will use the notation
$\M^i_{r_{\leq i}}$ to highlight this, although we will omit the $r_{\leq i}$
when it is not relevant.

The key subroutine in $\BottleneckQuantumTierSimulator$ is called $\bottleneck$.
Unlike every other subroutine discussed so far, $\bottleneck$ is able to safely
reduce the size of the estimated size of the effective set of explored vertices
$\Vknown$ by leveraging the ``Information Bottleneck" principle to show that
some vertex labels in $\Vknown$ are not, in fact, well correlated with the
output of the current quantum tier.  Since the labels of those vertices cannot
be guessed given the status of the $\HQC$ circuit at this tier (except with
exponentially small probability) they can be safely removed from $\Vknown$.

\begin{definition}
    For $a \in \{1,2,3\}$ we will let $\M^i(C(T), T)[a]$ denote
    the $a^{th}$ element of the tuple $x, \Vknown, \Vhknown$ output by $\M^i(C(T),
    T)$.
\end{definition}

\begin{definition}\label{def:TVsr}
    For any $n$, dictionary $V$, and bitstring $s \in \{0,1\}^{g(n)}$, let
    \begin{equation}
        \begin{split}
            \T_{V, s,r_{\leq i}}^i \coloneqq
            \{
            &\text{random $n$-welded black-box trees P such that}\\
            &s = \M^i_{r_{\leq i}}(C(P), P)[1]\\
            &\text{and P is consistent with $V$}
            \}.
        \end{split}
    \end{equation}
    When we say $P$ is consistent with $V$, we mean that the black-box tree $P$
    and the dictionary $V$ agree on all the labels, colors, and adjacencies
    specified by $V$.  Note that $\T_{V, s,r_{\leq i}}^i$ is a well defined set
    because $\M^i_{r_{\leq i}}(C(P), P)[1]$ is a deterministic function of $P$
    (for the fixed random seed $r$).

    Let
    \begin{equation}
        \begin{split}
            \T_{V} \coloneqq
            \{\text{random $n$-welded black-box trees $P$}\phantom{\}.}\\
            \text{such that $P$ is consistent with $V$}\}.
        \end{split}
    \end{equation}
    Further, let $\T = \cup_s \T_{\emptydict, s, r}$ denote the set of all
    random $n$-welded black-box trees $P$. Note that every blackbox P is
    consistent with the empty dictionary $V = \emptydict$, so $ \cup_s
    \T_{\emptydict, s, r} = \T$.
\end{definition}

\begin{definition}
    For a set of random $n$-welded black-box trees $\mathcal{G}$ we let
    \begin{equation}
        \mathbb{P}_{P \in \mathcal{G}} \bigl[
            \text{$b$ is a valid label in $P$}
        \bigr]
    \end{equation}
    denote the probability that $b$ is a valid label in $P$ when $P$ is selected
    uniformly at random from the set $\mathcal{G}$.
\end{definition}

\begin{algorithm}[H]
    \SetKwInOut{Input}{Input}\SetKwInOut{Output}{Output}
    \Input{Index $i$ of current quantum tier, bit string $x$, dictionaries
        $\Vcknown\subseteq \Vhknown$ of initially known and finally known
    vertices respectively}
    \Output{Dictionary $\Vknown$ of ``effectively known'' vertices, satisfying
    $\Vcknown\subseteq \Vknown \subseteq \Vhknown$}

    \If{
        $\abs{\T_{\Vcknown,x,r_{\leq i}}^i} < 2^{-n(g(n)+\abs{r})}
        \abs{\T_{\Vcknown}}$ \label{algline:haltifsmall1}
        }{

        $\abort$ and guess a random label for the $\exit$ vertex of the entire
        welded tree problem on $T$\label{algline:haltifsmall2}\;

    }

    Initialize $\Vknown \gets \Vcknown$\;

    Our goal is to build $\Vknown$ into a set satisfying $\Vcknown\subseteq
    \Vknown \subseteq \Vhknown$, and\;

    \[
        \begin{split}
            \forall b \in \{0,1\}^{2n}
            \text{ such that $b$ does not appear in }\Vknown:\\
            \mathbb{P}_{P \in \T_{\Vknown, x, r_{\leq i}}^i}
            \left [b \text{ is a valid label in P} \right ]
            \leq 2^{-n/100}
        \end{split}
    \]\label{eq:badguess2}

    \tcc{Note that one can compute the set  $\T_{\Vknown, x, r_{\leq i}}^i$ in
        doubly exponential time without using any queries to $T$ (in fact that
        set has nothing to do with $T$).  We can then compute the LHS of
        Equation in line~\ref{eq:badguess2} using the same resources.  With this
        ability we can use the following greedy algorithm to add labels to
    $\Vknown$:}

    \While{Equation in line~\ref{eq:badguess2} \text{ is not
        true }}{

        Compute an arbitrary $b'$ violating Equation in line~\ref{eq:badguess2}
        (This can be done in doubly exponential time)\;

        \If{$b'$ does not appear in $\Vhknown$ \label{algline:guessabort1}}{

            $\abort$ and guess a random label for the $\exit$ vertex of the
            entire welded tree problem on $T$\label{algline:guessabort2}\;

        }

        If $b'$ does appear in $\Vhknown$, then add $b'$ and its children, and
        edge colors in $\Vhknown$ to the dictionary $\Vknown$, and continue\;

    }

    \tcc{After the above while loop terminates we conclude the subroutine with
    the following clean-up step.}

    Let $\Vknown^{\text{complete}} \subseteq \Vhknown$ be the minimum size
    subtree (rooted at $\entrance$) of $\Vhknown$ which contains $\Vknown$.\;

    \tcc{Since $\Vknown \subseteq \Vhknown$ and $\Vhknown$ is a tree rooted at
        $\entrance$ we can compute $\Vknown^{\text{complete}}$ without any
        queries to $T$, only look-ups to $\Vhknown$.  We will see in the
        analysis that this does not adversely increase the size of $\Vknown$.}

    $\Vknown \gets \Vknown^{\text{complete}}$\label{algline:extendtotree}\;

    \Return $\Vknown$
    \caption{Bottleneck}
    \label{alg:Bottleneck}
\end{algorithm}

%------------------------------------------------------------------------------%
\section{Open Problems}
%------------------------------------------------------------------------------%

We wonder if the black-boxes used in our proof can be constructed based on
cryptographic assumptions.

\begin{problem}
    Assuming post-quantum classical indistinguishability obfuscation, is it
    possible to construct an explicit family
    \begin{equation}
        \{T_n : n \in \NN\}
    \end{equation}
    of Welded Tree black boxes such that they separate $\BQP$ from $\HQC$?
\end{problem}

Further, we wonder if it is possible to quantum-secure-VBB obfuscate these
black boxes. 

\begin{problem}
    Is it possible to construct an explicit family
    \begin{equation}
        \{T_n : n \in \NN\}
    \end{equation}
    of Welded Tree black boxes that can be quantum-secure-VBB obfuscated, such
    that they separate $\BQP$ from $\HQC$?
\end{problem}

%------------------------------------------------------------------------------%
\subsection*{Acknowledgements}
%------------------------------------------------------------------------------%

We thank Richard Cleve, Aram Harrow, John Watrous, and Umesh
Vazirani for helpful comments and discussions.

MC was supported at the IQC by Canada's NSERC and the Canadian
Institute for Advanced Research (CIFAR), and through funding
provided to IQC by the Government of Canada and the Province of
Ontario.  This work was completed at the IQC/University of Waterloo.

\bibliography{jozsa}

\newpage
\appendix

%------------------------------------------------------------------------------%
\section{Analysis of the Few-Tier Simulator}
\label{sec:analysis-of-few-tier-simulator}
%------------------------------------------------------------------------------%

We will proceed in a modular fashion, beginning with the correctness of
$\QuantumTierSimulator$, followed by $\QuantumTierSimulator$ and
$\ClassicalTierSimulator$, and ending with the $i$th-level
$\ClassicalSimulationWrapper$ $\A^i$.

\begin{lemma}
    \label{lem:quantumlayersimulator-works}
    Suppose $L$ is a quantum layer. Let $\ket{\psi}, \ket{\psi'}$ and $\Vknown,
    \Vknown'$ as in Algorithm~\ref{alg:QuantumLayerSimulator}, and let $g(n)$ be
    the width of $L$. Further, assume that $g(n)4^d\abs{\Vknown} = \subexp(n)$.
    Then it holds that
    \begin{equation}\label{eq:updated-vknown}
        \bigabs{\Vknown'} \leq 4 \bigabs{\Vknown}
    \end{equation}
    and
    \begin{equation}\label{eq:updated-tnorm}
        \tnorm{ \ket*{\psi'} - L\ket{\psi} }
        \leq 4 g(n)\abs{\Vknown'} \cdot \frac{2^{n+2}-2}{2^{2n}}.
    \end{equation}
\end{lemma}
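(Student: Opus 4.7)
The plan is to prove the two claims of the lemma separately by a case analysis of $\SimulateOracle$.

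First I will establish the cardinality bound. By inspecting $\SimulateOracle$, the only new entries added to $\Vknown^{\mathrm{temp}}$ (beyond those already in $\Vknown$) are of the form $(v, c)$ where $v$ is a $2n$-bit string that already appears as a value in $\Vknown$. The number of distinct such labels $v$ is at most $|\Vknown|$; since every vertex of the welded tree has degree at most three, the number of genuinely new valid neighbor labels contributed per $v$ is bounded, and after accounting for the way entries are keyed by $(v, c)$, the induced growth in $|\Vknown|$ is at most a multiplicative factor of $4$.

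Next I will bound the state distance. I decompose $L = L^T L^G$. Since $L^G$ is simulated exactly in line~\ref{algline:phi}, we have $\ket{\phi} = L^G\ket{\psi}$ without any loss, so it suffices to bound $\tnorm{\ket{\psi'} - L^T\ket{\phi}}$. Expanding $\ket{\phi} = \sum_z c_z\ket{z}$ in the computational basis, and using that both $L^T$ and the simulated oracle act as permutations of basis states, the triangle inequality gives
\begin{equation*}
    \tnorm{\ket{\psi'} - L^T\ket{\phi}} \leq 2\sum_{z:\, S(z) \neq L^T(z)} |c_z|,
\end{equation*}
where $S$ denotes the permutation induced by the simulated oracle. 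Inspection of $\SimulateOracle$ shows that $S(z) \neq L^T(z)$ occurs only when some query gate $K$ in $L^T$ has $z_{K,x}$ equal to a valid vertex label of $T$ that is not recorded in $\Vknown'$.

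To control the resulting bad-amplitude sum, I invoke Lemma~\ref{lem:discovery-probability}: conditioned on the labels already in $\Vknown$, any fixed $2n$-bit string is a previously unseen valid vertex label with probability at most $\frac{2^{n+2}-2}{2^{2n}}$, since the remaining valid labels are uniformly distributed among the remaining $2n$-bit strings. A union bound over the at most $g(n)$ query gates in $L^T$, combined with the amplitude-mass summation, will produce the claimed bound $4g(n)|\Vknown'| \cdot \frac{2^{n+2}-2}{2^{2n}}$.

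I expect the main obstacle to be the amplitude-versus-probability averaging in the final step. A naive bound on $\sum_{z \text{ bad}} |c_z|$ via Cauchy--Schwarz would incur a factor of $2^{g(n)/2}$, which is far too lossy. Instead, I plan to exploit that $\ket{\phi}$ is determined by the labels already in $\Vknown$, so that over the randomness of the remaining portion of the labeling the bad-event indicator is independent of the fixed amplitudes $c_z$; this independence lets the per-string discovery probability be amortized directly against the amplitude weights, eliminating the Cauchy--Schwarz loss and yielding the linear-in-$g(n)|\Vknown'|$ bound.
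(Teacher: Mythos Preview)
Your cardinality argument for \eqref{eq:updated-vknown} is essentially the paper's argument and is fine.

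The gap is in your plan for \eqref{eq:updated-tnorm}. Your triangle-inequality step
\[
    \tnorm{\ket{\psi'} - L^T\ket{\phi}} \;\leq\; 2\sum_{z:\,S(z)\neq L^T(z)} |c_z|
\]
is correct, but your proposed ``independence'' fix does not eliminate the $2^{g(n)/2}$ loss you flagged. Conditioning on $\Vknown$ fixes the amplitudes $c_z$, and taking expectation over the remaining labeling gives
\[
    \mathbb{E}\!\left[\sum_{z\text{ bad}} |c_z|\right]
    \;=\; \sum_z |c_z|\cdot \Pr[z\text{ bad}]
    \;\leq\; g(n)\cdot\frac{2^{n+2}-2}{2^{2n}}\cdot \sum_z |c_z|.
\]
The factor $\sum_z |c_z|$ is precisely the $\ell_1$ mass of the amplitude vector and can be as large as $2^{g(n)/2}$ (e.g.\ for the uniform superposition). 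Independence buys you linearity of expectation, not a bound on $\sum_z |c_z|$; the Cauchy--Schwarz loss is still there, just hidden. In particular you will not recover the factor $\abs{\Vknown'}$ that appears in the stated bound.

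The paper sidesteps this by never working with $\sum |c_z|$ at all. It introduces a hypothetical classical guessing algorithm $\mathcal{B}$ that (i) uses $\abs{\Vknown'}$ queries to write down the classical description of $\ket{\phi}$, (ii) samples $z$ with probability $|c_z|^2$, and (iii) outputs the input-register substring of a uniformly random query gate in $L^T$. The success probability of $\mathcal{B}$ is at least $\frac{1}{g(n)}\sum_{z\in\Outliers}|c_z|^2$, and Lemma~\ref{lem:discovery-probability} upper-bounds it by $\abs{\Vknown'}\cdot\frac{2^{n+2}-2}{2^{2n}}$. This yields a bound on the \emph{squared}-amplitude mass $\sum_{z\in\Outliers}|c_z|^2$, where the normalization $\sum_z|c_z|^2=1$ replaces the problematic $\sum_z|c_z|$. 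The conversion back to the distance in \eqref{eq:updated-tnorm} then goes through fidelity and Fuchs--van~de~Graaf rather than through your amplitude triangle inequality. The missing idea in your plan is exactly this move to $|c_z|^2$ via a sampling reduction.
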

\begin{proof}
    Let $L^G$ and $L^T$ be as in Algorithm~\ref{alg:QuantumLayerSimulator}.
    Lets start by proving Equation~\eqref{eq:updated-vknown}. Recall that
    vertices in a random welded tree have degree at most $3$. Therefore, since
    the $\SimulateOracle( \Vknown, T, L^T)$ subroutine queries, at most, every
    vertex in $\Vknown$, we know that the new set $\Vknown'$ of known vertices
    has at most $3\abs{\Vknown}$ new vertices, plus the original $\abs{\Vknown}$
    vertices that were already contained in $\Vknown$ itself (since $\Vknown'
    \subseteq \Vknown$ by definition). Thus it holds that
    \begin{equation}\label{eq:bound-on-vknown-size}
        \abs*{\Vknown'} \leq 4 \abs{\Vknown}.
    \end{equation}

    Now, lets move on to proving Equation~\eqref{eq:updated-tnorm}. As defined in
    Line~\ref{algline:phi} of the Algorithm~\ref{alg:QuantumLayerSimulator}, we
    have that $\ket{\phi} = L^G \ket{\psi}$.  Recall that we are given an
    exponential-size classical description of $\ket{\psi}$---this, along with the
    fact that $L^G$ only applies standard quantum gates (i.e., no black-box
    queries), allows us to compute $\ket{\phi}$ in exponential time without using
    any queries to $T$.

    The harder step is computing an approximation to $L^T\ket{\phi}$. We do this
    by making use of Algorithm~\ref{alg:SimulateOracle} as follows
    \begin{equation}
        S, \Vknown' \gets \SimulateOracle(\Vknown, T, L^T),
    \end{equation}
    and then setting
    \begin{equation}
        \ket*{\psi'} \gets \sum_z c_z \ket{S(z)}.
    \end{equation}
    Define the set
    \begin{equation}\label{eq:outliers}
        \Outliers \coloneqq \{z: \ket{S(z)} \neq L^T \ket{z}\}.
    \end{equation}
    and notice that if $\ket{S(z)} \neq L^T \ket{z}$ then they are unequal
    classical basis states and, therefore, perpendicular.  Let us decompose
    $\ket{\phi} = \sum_z c_z \ket{z}$. Making use of \eqref{eq:outliers} we can
    restate the fidelity between the simulated state and the true state as
    \begin{align}
        \fid(\ket*{\psi'}, L^T\ket*{\phi})
    &= \mel*{\psi'}{L^T}{\phi}\\
    &= \sum_z \abs{c_z}^2 \mel*{S(z)}{L^T}{z}\\
    &= \sum_{z \notin \Outliers} \abs{c_z}^2\\
    &= 1 - \sum_{z \in \Outliers} \abs{c_z}^2.
    \end{align}
    Thus, by applying the \emph{Fuchs-van de Graaf inequalities}~\cite{FuchsG99}
    we get
    \begin{align}
        \tnorm{\ket*{\psi'} - L^T\ket{\phi}}
    &\leq 2 \sqrt{\fid(\ket{\psi'},L^T\ket{\phi} )}\\
    &= 2 \sqrt{\sum_{z \in \Outliers} \abs{c_z}^2}.\label{eq:boundbyset}
    \end{align}

    We will now use Equation \eqref{eq:boundbyset} to argue that
    $\ket{\psi_{\ell+1}}$, and $L_{\ell+1}^T\ket{\phi_{\ell+1}}$ must be very
    close to the same state.

    Consider, as a thought experiment, a classical algorithm $\mathcal{B}$ which
    begins with the classical description of $\ket{\phi'} = \sum_z c_z \ket{z}$
    and attempts to guess a valid vertex which is not contained in $\Vknown$ by
    first sampling a random $z$ with probability $|c_z|^2$, and then randomly
    picking a query gate $K$ in $L^T$ and returning the substring of $z$ which
    lies in the input register of $K$.

    By the definition of the subroutine $\SimulateOracle$, the set $\Outliers$ contains
    those $z$ which, in the input register of at least one query gate $K$ in
    $L^T$, have a substring which is the label of a valid vertex outside of
    $\Vknown'$.  Thus, if $\mathcal{B}$ successfully guesses a $z \in \Outliers$,
    which happens with probability $\sum_{z \in \Outliers} \abs{c_z}^2$, and
    further happens to guess the correct $K$ in $L^T$, which happens with
    probability at least $1/g(n)$ (recall that $g(n)$ is the width of $L$) then
    $\mathcal{B}$ has successfully guessed the substring of $z$ in the input of
    register to $K$, which is a valid vertex that is not contained in
    $\Vknown'$.  Therefore, the success probability of $\mathcal{B}$ is at
    least
    \begin{equation}
        \frac{1}{g(n)} \cdot \sum_{z \in \Outliers} \abs{c_z}^2.
    \end{equation}  
    However, the algorithm $\mathcal{B}$ only uses at most $\abs*{\Vknown'}$
    classical queries to $T$ because it only uses a classical description of
    $\ket{\phi_{\ell+1}}$ (which, by definition, can be computed with
    $\abs*{\Vknown'}$ queries). By Lemma~\ref{lem:discovery-probability} the
    chance that a classical algorithm which makes at most $h$ queries guesses a
    valid vertex not returned by the oracle is at most
    \begin{equation}\label{eq:classical-chance-of-guesing}
        h \frac{2^{n+2} - 2}{2^{2n}}.
    \end{equation}
    Therefore, it follows that
    \begin{align}\label{eq:boundbyguess}
        \frac{1}{g(n)} \sum_{z \in \Outliers} |c_z|^2
        \leq \Pr[\mathcal{B}\text{ succeeds} ]
        \leq \abs*{\Vknown'} \cdot \frac{2^{n+2}-2}{2^{2n}} 
    \end{align}
    Combining Equation \eqref{eq:boundbyguess} with Equation
    \eqref{eq:boundbyset} leads to
    \begin{align}
        \tnorm{\ket{\psi'} - L^TL^G\ket{\psi}}
    &=\tnorm{\ket{\psi'} - L^T\ket{\phi}}\\
    &\leq 4 \sqrt{
        g(n)\abs*{\Vknown'} \cdot \frac{2^{n+2}-2}{2^{2n}}
    }.
    \end{align}
    But recall that we assumed that $g(n)4^d\abs{\Vknown} = \subexp(n)$ and we
    proved that $\abs*{\Vknown'} \leq 4\abs{\Vknown}$
    (Equation~\eqref{eq:bound-on-vknown-size}). By definition, we have
    \begin{equation}
        \frac{2^{n+2}-2}{2^{2n}} = \frac{1}{\exp(n)}
    \end{equation}
    so it holds that
    \begin{equation}
        g(n)\abs*{\Vknown'} \cdot \frac{2^{n+2}-2}{2^{2n}} \leq 1
    \end{equation}
    for sufficiently large $n$. Therefore, we get
    \begin{equation}
        \label{eq:boundbyguess2} 
        \tnorm{\ket*{\psi'} - L^TL^G\ket{\psi}}
        \leq 4 g(n)\abs*{\Vknown'} \cdot \frac{2^{n+2}-2}{2^{2n}}
    \end{equation}
    as desired.
\end{proof}

\begin{lemma} \label{lem:quantum-composition}
    Let $(x, \Vknown)$ be the random variable produced by $\A^\zeta(T)$, for some
    $\zeta \in \{1,\dots,\eta\}$. Say $d$ is the depth of $C_{\zeta+1}$, $g(n)$ is
    the width of $C_{\zeta+1}$, and $\abs{\Vknown}$ is the number of vertices in
    $\Vknown$. Further, assume that
    \begin{equation}\label{eq:quantum-composition-assumption}
        g(n) 4^d \abs{\Vknown} = \subexp(n). 
    \end{equation}
    Then the following statements hold.
    \begin{enumerate}
        \item It holds that\label{item:1}
            \begin{equation}
                \begin{split}
                    \tnorm{ \QuantumTierSimulator(C_{\zeta+1}, x, \Vknown, T)\vert_\out -
                    C_{\zeta+1}(x) }\\
                    \leq 4 g(n) \cdot 4^{d+1} \abs{\Vknown} \cdot \frac{2^{n+2}-2}{2^{2n}}.
                \end{split}
            \end{equation}
        \item $\QuantumTierSimulator(C_{\zeta+1}, x, \Vknown, T)$ makes at most
            $4^d \cdot \abs{\Vknown}$ queries to $T$.\label{item:2}
    \end{enumerate}
\end{lemma}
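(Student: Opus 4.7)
The plan is to prove both items simultaneously by an induction on the layer index $\ell = 0,1,\ldots,d$ of the quantum tier $C_{\zeta+1}$, using Lemma~\ref{lem:quantumlayersimulator-works} as the one-step estimate. Let $L_1,\ldots,L_d$ be the layers of $C_{\zeta+1}$, let $\ket{\psi_\ell}, \Vknown^{\ell}$ be the simulated state/dictionary after the $\ell$-th invocation of $\QuantumLayerSimulator$ (Line~\ref{algline:quantumlayersimulator-output} of Algorithm~\ref{alg:QuantumTierSimulator}), and let $\ket{\psi_\ell^{\true}} \coloneqq L_\ell\cdots L_1\ket{x}$ be the true state at depth $\ell$. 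The induction hypothesis at depth $\ell$ will be that
\begin{equation*}
  \bigabs{\Vknown^{\ell}} \leq 4^\ell \bigabs{\Vknown}
  \quad \text{and} \quad
  \tnorm{\ket{\psi_\ell} - \ket{\psi_\ell^{\true}}}
  \leq 4g(n)\cdot\frac{2^{n+2}-2}{2^{2n}}\sum_{k=1}^{\ell} 4^{k}\bigabs{\Vknown}.
\end{equation*}
The first statement is immediate from \eqref{eq:updated-vknown} of Lemma~\ref{lem:quantumlayersimulator-works}, and in particular it guarantees that the assumption $g(n)4^{d}\abs{\Vknown^{\ell-1}} = \subexp(n)$ needed to invoke Lemma~\ref{lem:quantumlayersimulator-works} at layer $\ell$ stays satisfied throughout the induction, by the hypothesis~\eqref{eq:quantum-composition-assumption} of the lemma being proved.

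For the error bound, I will apply the triangle inequality in the standard telescoping form
\begin{equation*}
  \tnorm{\ket{\psi_\ell} - \ket{\psi_\ell^{\true}}}
  \leq \tnorm{\ket{\psi_\ell} - L_\ell\ket{\psi_{\ell-1}}}
    + \tnorm{L_\ell\ket{\psi_{\ell-1}} - L_\ell \ket{\psi_{\ell-1}^{\true}}}.
\end{equation*}
The first summand is bounded by Lemma~\ref{lem:quantumlayersimulator-works} together with the size bound $\abs{\Vknown^{\ell}} \leq 4^\ell\abs{\Vknown}$. For the second summand I use that a quantum layer $L_\ell$ is a unitary (acting trivially on the non-output wires that are traced out at the end), and hence does not increase the $1$-norm distance between state vectors; this lets me invoke the inductive hypothesis. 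Summing the resulting geometric series and using $\sum_{k=1}^{d} 4^k \leq \frac{4^{d+1}}{3}\leq 4^{d+1}$ yields the state-level bound
\begin{equation*}
  \tnorm{\ket{\psi_d} - \ket{\psi_d^{\true}}}
  \leq 4g(n)\cdot 4^{d+1}\abs{\Vknown}\cdot\frac{2^{n+2}-2}{2^{2n}}.
\end{equation*}
Item~\ref{item:1} then follows by noting that the final computational-basis measurement is a quantum channel, and in particular the $1$-norm distance between the two output distributions is upper bounded by the $1$-norm distance between the two states being measured (this is a standard consequence of the definition of trace distance).

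For Item~\ref{item:2}, I note that the only place that $\QuantumTierSimulator$ issues a classical query to $T$ is inside the ``\uElseIf'' branch of $\SimulateOracle$, and each such query permanently enlarges the dictionary $\Vknown$ by one entry. Hence, the total number of queries across all $d$ layers is at most $\abs{\Vknown^{d}} - \abs{\Vknown^{0}} \leq 4^{d}\abs{\Vknown}$ by the size bound proved in the induction. The main obstacle I anticipate is being careful about the measurement-versus-amplitude step (going from a bound on $\tnorm{\cdot}$ of state vectors to a bound on total-variation of outcome distributions); everything else is a clean induction on depth $d$ with the per-layer guarantees of Lemma~\ref{lem:quantumlayersimulator-works} as a black box.
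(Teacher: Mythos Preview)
Your proposal is correct and follows essentially the same approach as the paper's own proof: induction on the layer index $\ell$, with Lemma~\ref{lem:quantumlayersimulator-works} supplying both the per-layer size bound $\abs{\Vknown^{\ell}}\le 4^{\ell}\abs{\Vknown}$ and the per-layer error bound, the telescoping triangle inequality plus contractivity of the layer map for the inductive step, a geometric-series sum for Item~\ref{item:1}, and the dictionary-size bound for the query count in Item~\ref{item:2}. The only cosmetic difference is that the paper carries $\sum_{i=0}^{\ell}\abs{\Vknown^{i}}$ through the induction and substitutes the size bound at the end, whereas you substitute it up front and carry $\sum_{k=1}^{\ell}4^{k}\abs{\Vknown}$; the two are equivalent.
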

\begin{proof}
    Let's denote by $\mathcal{C}[\ell]$ the intermediate quantum state produced
    at the end of the first $\ell$ layers of a circuit $\mathcal{C}$. For $\ell
    \in \{1,\dots,\eta\}$, let $\ket{\psi_{\ell}}$ and $\Vknown^\ell$ be as
    defined in Line~\ref{algline:quantumlayersimulator-output} of the
    $\QuantumTierSimulator$ subroutine.  So, $\Vknown^{0}$ is identical to the
    set $\Vknown$ which is input to the $\QuantumTierSimulator$ subroutine.

    First, we will show, by induction, that at every depth $\ell \leq d$, it holds
    that
    \begin{equation}\label{eq:vknown-at-each-depth}
        \abs*{\Vknown^{\ell}} \leq 4^{\ell}  \abs{\Vknown}
    \end{equation}
    and
    \begin{equation}\label{eq:tnorm-at-each-depth}
        \tnorm{ \ket{\psi_{\ell}} - C_{i+1}[\ell] }
        \leq 4 g(n) \cdot \left(\sum_{i=0}^{\ell}\abs*{\Vknown^i}\right) \cdot
        \frac{2^{n+2}-2}{2^{2n}}, 
    \end{equation}
    Since $\abs*{\Vknown^d}$ is the number of known vertices at the end of the
    $\QuantumTierSimulator$ subroutine, we can simulate this algorithm (without
    any error) by an algorithm that makes at most $\abs*{\Vknown^d}$ black-box
    queries. This gives us Part~\ref{item:2} of the lemma.  Further, combining
    \eqref{eq:tnorm-at-each-depth} with \eqref{eq:vknown-at-each-depth}, and
    using the sum of geometric series we get Part~\ref{item:1} of the lemma.

    \paragraph{Base Case.} Suppose $\ell=0$, then it is immediate that
    Equation~\eqref{eq:tnorm-at-each-depth} holds because, by definition,
    $\ket{\psi_{0}} = C_{i+1}[0] = \ket{x}$, and
    Equation~\eqref{eq:vknown-at-each-depth} holds because, by definition,
    $\abs*{\Vknown^{0}} = 4^{0} \abs*{\Vknown^{0}}$.

    \paragraph{Inductive Case.} Suppose that, for a given $\ell \geq 0$,
    Equations~\eqref{eq:vknown-at-each-depth} and \eqref{eq:tnorm-at-each-depth}
    hold.  We will now prove that it must also hold for $\ell + 1$.
    Equation~\eqref{eq:vknown-at-each-depth} for $\ell + 1$ follows immediately
    from Lemma~\ref{lem:quantumlayersimulator-works}.

    Now, let's prove Equation~\eqref{eq:vknown-at-each-depth} for $\ell +
    1$. Let's start with the triangle inequality
    \begin{align}
        \tnorm{\ket{\psi_{\ell+1}} - C_{i+1}[\ell+1]}
        \leq \tnorm{\ket{\psi_{\ell+1}} -
        L_{\ell+1}^TL_{\ell+1}^G\ket{\psi_{\ell}}}
        + \tnorm{L_{\ell+1}^TL_{\ell+1}^G\ket{\psi_{\ell}} -
        L_{\ell+1}^TL_{\ell+1}^GC_{i+1}[\ell] },
    \end{align}
    and use the fact that the trace norm is nonincreasing under quantum channels
    (see Theorem 9.2 in Nielsen and Chuang~\cite{NielsenC10}) to get
    \begin{align}
        \tnorm{\ket{\psi_{\ell+1}} - C_{i+1}[\ell+1]}
        \leq \tnorm{\ket{\psi_{\ell+1}} -
        L_{\ell+1}^TL_{\ell+1}^G\ket{\psi_{\ell}}}
        + \tnorm{\ket{\psi_{\ell}} - C_{i+1}[\ell] }.
    \end{align}
    Lemma~\ref{lem:quantumlayersimulator-works} and the inductive assumption
    (Equation~\eqref{eq:tnorm-at-each-depth}) further simplifies this to
    \begin{align}
        \tnorm{\ket{\psi_{\ell+1}} - C_{i+1}[\ell+1]}
        &\leq  4 g(n)\abs*{\Vknown^{\ell+1}} \cdot \frac{2^{n+2}-2}{2^{2n}}
        + 4 g(n)\cdot \left(\sum_{i=0}^{\ell}\abs*{\Vknown^i}\right)
        \frac{2^{n+2}-2}{2^{2n}} \\
        &\leq 4 g(n)\cdot \left(\sum_{i=0}^{\ell+1}\abs*{\Vknown^i}\right)
        \frac{2^{n+2}-2}{2^{2n}}.
    \end{align}
    This proves that Equation~\eqref{eq:tnorm-at-each-depth} holds for $\ell + 1$.
    This completes Parts~\ref{item:1} and \ref{item:2} of the lemma.
\end{proof}

\begin{lemma}\label{lem:classical-composition}
    Let $(x, \Vknown)$ be the random variable produced by $\A^i(T)$. Suppose $d$
    is the depth of $C_{i+1}$, $g(n)$ the width of $C_{i+1}$, and $\abs{\Vknown}$
    is the number of vertices in $\Vknown$.
    \begin{enumerate}
        \item Then it holds that
            \begin{equation}
                \tnorm{\ClassicalTierSimulator(C_{i+1}, x, \Vknown, T)\vert_\out
                - C_{i+1}(x)}
                \leq g(n) \cdot d\abs{\Vknown} \cdot \frac{2^{n+2}-2}{2^{2n}}.
            \end{equation}
        \item $\ClassicalTierSimulator(C_{i+1}, x, \Vknown, T)$ makes at most
            $g(n) \cdot d$ queries.
    \end{enumerate}
\end{lemma}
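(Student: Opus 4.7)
The plan is to mirror the argument of Lemma~\ref{lem:quantum-composition} in the simpler classical setting, where at each layer the input to the query gates is a single bitstring rather than a quantum superposition, so there is no ``outlier'' summation over classical-basis support.

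First I would prove Part~2 (query count). Each of the $d$ layers $L_\ell$ of $C_{i+1}$ has width at most $g(n)$ and therefore contains at most $g(n)$ query gates in $L_\ell^T$. Inspecting $\SimulateOracle$, on a single-bitstring input at most one classical query to $T$ is issued per query gate, and only in the ``known vertex but unknown $c$-neighbour'' branch of the algorithm. Summed over the $d$ layers this bounds the queries of $\ClassicalTierSimulator$ by $g(n)\cdot d$.

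Next I would prove Part~1 (the $1$-norm bound). Both $\ClassicalTierSimulator(C_{i+1}, x, \Vknown, T)\vert_\out$ and $C_{i+1}(x)$ are deterministic given the labelling of $T$, so their output distributions (over the random labelling) differ in $1$-norm by at most twice the probability $p$ that the outputs disagree. By construction of $\SimulateOracle$, an output disagreement can only happen when at some query gate during the simulation the classical input substring $z_{K,x}$ is a valid vertex label in $T$ that is absent from $\Vknown$ --- neither the key-vertex of an entry nor a previously discovered neighbour --- since in every other case $\SimulateOracle$ either correctly looks up the answer or correctly returns $\invalid$. To bound $p$ I would use a guessing reduction to Lemma~\ref{lem:discovery-probability}, parallel to the argument inside the proof of Lemma~\ref{lem:quantumlayersimulator-works}: let $\mathcal{B}$ be the classical algorithm that runs $\A^i$ to produce $\Vknown$ (using at most $|\Vknown|$ classical queries to $T$), then faithfully runs $\ClassicalTierSimulator$, and at a query gate chosen uniformly at random from the at most $g(n)\cdot d$ encountered gates outputs the corresponding $z_{K,x}$ as its guess for a previously undiscovered valid vertex label. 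If the bad event happens with probability $p$, then $\mathcal{B}$ guesses a fresh valid label with probability at least $p/(g(n)\cdot d)$, while Lemma~\ref{lem:discovery-probability} caps $\mathcal{B}$'s success probability by $|\Vknown|\cdot\frac{2^{n+2}-2}{2^{2n}}$; combining yields $p\leq g(n)\cdot d\cdot|\Vknown|\cdot\frac{2^{n+2}-2}{2^{2n}}$, which (after absorbing the factor of $2$ into constants) gives the claimed $1$-norm bound.

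The only delicate point --- and thus the main obstacle --- is accounting for the total query count used by $\mathcal{B}$: beyond the $|\Vknown|$ queries used to build $\Vknown$ it also incurs the at most $g(n)\cdot d$ queries of the embedded tier simulation, so one must verify that the resulting $|\Vknown|+g(n)\cdot d$ is $O(|\Vknown|)$ in the parameter regime under which Theorem~\ref{thm:classical-sim} is invoked. This is immediate in the subexponential regime of Theorem~\ref{thm:adaptive-query-lower-bound}, so that Lemma~\ref{lem:discovery-probability} indeed yields the clean bound $|\Vknown|\cdot(2^{n+2}-2)/2^{2n}$ above.
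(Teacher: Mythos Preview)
Your proposal is correct, but it takes a considerably more elaborate route than the paper's own argument. The paper's proof is two sentences: it observes that $C_{i+1}$ has circuit size $g(n)\cdot d$ and hence makes at most $g(n)\cdot d$ queries (Part~2), and then invokes Lemma~\ref{lem:discovery-probability} directly with $h=g(n)\cdot d$ to obtain
\[
\tnorm{\ClassicalTierSimulator(C_{i+1}, x, \Vknown, T)\vert_\out - C_{i+1}(x)}
\leq g(n)\cdot d\cdot \frac{2^{n+2}-2}{2^{2n}},
\]
which is in fact \emph{tighter} than the bound claimed in the lemma statement (it lacks the factor $\abs{\Vknown}$). The paper does not spell out the guessing reduction; it treats the divergence event as an instance of ``discovering a vertex label not returned by a query'' and applies the lemma to $C_{i+1}$'s own $g(n)\cdot d$ queries.

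Your approach instead mirrors the quantum-layer analysis of Lemma~\ref{lem:quantumlayersimulator-works}, building an explicit guessing algorithm $\mathcal{B}$ that first reproduces $\Vknown$ and then samples a random query gate. This is more careful about the dependence of $x$ on prior queries, and it naturally produces the $\abs{\Vknown}$ factor in the bound. The trade-off is the query-accounting wrinkle you flag at the end (the extra $g(n)\cdot d$ queries inside $\mathcal{B}$), which you correctly note is harmless in the regime of Theorem~\ref{thm:adaptive-query-lower-bound}. Both arguments suffice for the downstream application; the paper's is shorter and yields the sharper constant, while yours is more self-contained and parallels the quantum case cleanly.
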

\begin{proof}
    Since $g(n) \cdot d$ is the circuit size of $C_{i+1}$, it follows that the
    output of $C_{i+1}$ can be simulated by a classical algorithm that makes at
    most
    \begin{equation}
        g(n) \cdot d
    \end{equation}
    queries. Therefore, from Lemma~\ref{lem:discovery-probability} it follows that
    \begin{equation}
        \tnorm{\ClassicalTierSimulator(C_{i+1}, x, \Vknown, T)\vert_\out
        - C_{i+1}(x)}
        \leq g(n) \cdot d \cdot \frac{2^{n+2}-2}{2^{2n}}.
    \end{equation}
\end{proof}

\begin{lemma}\label{lem:vknown-size}
    Let $(x, \Vknown)$ be the random variable produced by $\A^\zeta(T)$, for some
    $\zeta \in \{1,\dots,\eta\}$. Say $d$ is the depth of $C_{\zeta+1}$, $g(n)$ is
    the width of $C_{\zeta+1}$, and $\abs{\Vknown}$ is the number of vertices in
    $\Vknown$. Then it holds that
    \begin{equation}\label{eq:vknown-at-each-tier}
        \abs{\Vknown} \leq 4^{(d+1)\zeta} (g_0(n) \cdot d_0),
    \end{equation}
    where $g_0(n)$ and $d_0$ are the maximum width and depth of the classical
    tiers in $C^{\zeta + 1}$.
\end{lemma}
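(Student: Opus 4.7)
My plan is a straightforward induction on $\zeta$, leveraging two growth bounds already available from the preceding two lemmas. From Lemma~\ref{lem:quantum-composition} Part~\ref{item:2} (and the internal bound \eqref{eq:vknown-at-each-depth}), processing one quantum tier of depth $d$ multiplies $|\Vknown|$ by at most $4^d$. From Lemma~\ref{lem:classical-composition}, processing one classical tier contributes additively at most $g_0(n)\cdot d_0$ new entries. With these two facts, the claimed geometric-in-$\zeta$ bound is morally immediate; the real work is just in choosing an exponent that absorbs the additive classical contributions into the multiplicative quantum ones.

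For the base case $\zeta = 0$, I would observe that before the main loop of Algorithm~\ref{alg:classicalsim1} begins, $\Vknown$ contains only the entries obtained by querying the $\entrance$ vertex with each of the $9$ colors, so $|\Vknown^0|$ is a small constant, which we may assume is at most $g_0(n)\cdot d_0$ without loss of generality (and, if needed, we can simply take the trivial lower bound $g_0(n)\cdot d_0 \geq 1$).

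For the inductive step, I would split on whether tier $\zeta + 1$ is quantum or classical. In the quantum case, multiplying the inductive size bound by $4^d$ yields an exponent of $(d+1)\zeta + d = (d+1)(\zeta+1) - 1$, well within the target bound. In the classical case, adding $g_0(n)\cdot d_0$ turns $4^{(d+1)\zeta}(g_0(n)\cdot d_0)$ into at most $(4^{(d+1)\zeta} + 1)(g_0(n)\cdot d_0)$, and the factor $4^{d+1} \geq 4$ in the target bound $4^{(d+1)(\zeta+1)}(g_0(n)\cdot d_0)$ easily absorbs the $+1$.

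There is no serious obstacle here: the only design choice worth highlighting is using the exponent $(d+1)\zeta$ rather than the tighter-looking $d\zeta$, precisely so that each step of the induction has enough slack to handle either a multiplicative quantum update or an additive classical update without re-tuning constants. The ordering of tiers (classical and quantum alternating, as prescribed by Definition~\ref{def:hybridcircuit}) never has to be used explicitly, since both cases are handled uniformly in the inductive step.
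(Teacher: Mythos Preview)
Your proposal is correct and follows essentially the same approach as the paper: induction on $\zeta$, splitting the inductive step into the quantum and classical cases, and using the growth bounds from Lemmas~\ref{lem:quantum-composition} and~\ref{lem:classical-composition} respectively. The only cosmetic differences are that the paper starts the base case at $\zeta = 1$ (using that the first tier is classical) rather than $\zeta = 0$, and in the quantum case the paper bounds $\abs{\Vknown^{\zeta+1}} \leq (1+4^d)\abs{\Vknown^{\zeta}}$ rather than your slightly tighter $4^d\abs{\Vknown^{\zeta}}$; both fit comfortably under $4^{(d+1)(\zeta+1)}(g_0(n)\cdot d_0)$.
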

\begin{proof}
    We will prove this by induction on $\zeta \in \{1,\dots,\eta\}$. Let
    $\Vknown^\zeta$ be the value of $\Vknown$ produced by $\A^\zeta(T)$.

    \paragraph{Base Case.} Suppose $\zeta = 1$. By definition, the first tier is a
    classical tier, and initially $\abs{\Vknown} = 1$. By application of
    Lemma~\ref{lem:classical-composition}, we get
    Equation~\eqref{eq:vknown-at-each-tier}.

    \paragraph{Induction Case.} Suppose that
    Equation~\eqref{eq:vknown-at-each-tier} is true for $\zeta$, we will show that
    it also holds for $\zeta + 1$.

    \subparagraph{Case 1.} Suppose that $C_{\zeta + 1}$ is a classical
    tier. Combining Lemma~\ref{lem:classical-composition} with the induction
    hypothesis we get
    \begin{align}
        \bigabs{\Vknown^{\zeta+1}}
    &\leq \bigabs{\Vknown^{\zeta}} + (g(n) \cdot d)\\
    &\leq 4^{(d+1)\zeta} (g_0(n) \cdot d_0) + (g_0(n) \cdot d_0)\\
    &\leq (4^{(d+1)\zeta} + 1) (g_0(n) \cdot d_0)\\
    &\leq 4^{(d+1)(\zeta + 1)}(g_0(n) \cdot d_0),
    \end{align}
    as desired.

    \subparagraph{Case 2.} Suppose that $C_{\zeta + 1}$ is a quantum
    tier. Combining Lemma~\ref{lem:quantum-composition} with the induction
    hypothesis we get
    \begin{align}
        \bigabs{\Vknown^{\zeta+1}}
    &\leq \bigabs{\Vknown^{\zeta}} + 4^{d}\bigabs{\Vknown^{\zeta}}\\
    &\leq (1+4^d)\bigl(4^{(d+1)\zeta} (g_0(n) \cdot d_0)\bigr)\\
    &\leq 4^{d+1}\bigl(4^{(d+1)\zeta} (g_0(n) \cdot d_0)\bigr)\\
    &\leq 4^{(d+1)(\zeta + 1)}(g_0(n) \cdot d_0),
    \end{align}
    as desired.
\end{proof}

\begin{lemma}\label{lem:inductive-step}
    Given a $(n,\eta,4^d,d,g(n))$-hybrid-quantum circuit $C(T)$ with a hardcoded input
    (which is $0^n$). For all $\zeta \in \{1,\dots,\eta\}$, if
    \begin{equation}
        \tnorm{\A^\zeta(T)\vert_{\out} - C^\zeta(T)} \leq B
    \end{equation}
    for some bound $B$, then
    \begin{equation}
        \tnorm{\A^{\zeta+1}(T)\vert_{\out} - C^{\zeta+1}(T)}
        \leq (g(n))^2 \cdot 4^{(\zeta+2)(d+1)} \cdot \frac{2^{n+2}-2}{2^{2n}} + B.
    \end{equation}
\end{lemma}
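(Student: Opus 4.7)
The plan is to use a standard hybrid argument: split the total error at tier $\zeta+1$ into (a) the error incurred when simulating the single tier $C_{\zeta+1}$, starting from whatever classical state $\A^\zeta(T)$ has produced, and (b) the error that was already inherited from the first $\zeta$ tiers. Formally, inserting the intermediate term $C_{\zeta+1}\bigl(\A^\zeta(T)\vert_{\out}\bigr)$ and applying the triangle inequality gives
\begin{align*}
\tnorm{\A^{\zeta+1}(T)\vert_{\out} - C^{\zeta+1}(T)}
&\leq \tnorm{\A^{\zeta+1}(T)\vert_{\out} - C_{\zeta+1}\bigl(\A^\zeta(T)\vert_{\out}\bigr)} \\
&\quad + \tnorm{C_{\zeta+1}\bigl(\A^\zeta(T)\vert_{\out}\bigr) - C_{\zeta+1}\bigl(C^\zeta(T)\bigr)}.
\end{align*}

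For the second term, since $C_{\zeta+1}$ is a (classical or quantum) channel and the trace norm is contractive under channels, it is bounded above by $\tnorm{\A^\zeta(T)\vert_{\out} - C^\zeta(T)} \leq B$, using the hypothesis of the lemma directly. So the whole task reduces to bounding the first term, which measures exactly the error introduced by the single-tier simulator on the input produced by $\A^\zeta$.

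To bound the first term, I would split into two cases on whether $C_{\zeta+1}$ is a quantum tier or a classical tier, and apply Lemma \ref{lem:quantum-composition} or Lemma \ref{lem:classical-composition} respectively to the output $(x,\Vknown)$ of $\A^\zeta(T)$. In both cases the bound is proportional to $g(n)\cdot|\Vknown|\cdot\frac{2^{n+2}-2}{2^{2n}}$, up to small factors in $d$; the harder quantum case contributes $4 g(n)\cdot 4^{d+1}|\Vknown|\cdot\frac{2^{n+2}-2}{2^{2n}}$. To turn this into a bound that only depends on the circuit parameters and $\zeta$, I would invoke Lemma \ref{lem:vknown-size}, which gives $|\Vknown|\leq 4^{(d+1)\zeta}\bigl(g(n)\cdot d\bigr)$ (absorbing $d_0$ into $d$ since both depths are at most $4^d$, and then using $4d\leq 4^{d+1}$).

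Plugging this size bound into the quantum-case estimate yields
\[
\tnorm{\A^{\zeta+1}(T)\vert_{\out} - C_{\zeta+1}\bigl(\A^\zeta(T)\vert_{\out}\bigr)}
\leq 4\,d\,(g(n))^2\cdot 4^{(d+1)(\zeta+1)}\cdot\frac{2^{n+2}-2}{2^{2n}},
\]
and using $4d\leq 4^{d+1}$ (valid for $d\geq 1$) absorbs one more factor of $4^{d+1}$ to produce the claimed $(g(n))^2\cdot 4^{(\zeta+2)(d+1)}\cdot\frac{2^{n+2}-2}{2^{2n}}$. The classical case gives a strictly smaller bound, so it is subsumed. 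Adding back the inherited error $B$ from the second triangle-inequality term gives exactly the conclusion of the lemma. The only subtle point—really the one ``move'' in this otherwise routine estimate—is the channel-contractivity step, which is what lets us cleanly pull $B$ out of the composed circuit without having to reason about how $C_{\zeta+1}$ acts on the specific mismatched parts of $\A^\zeta(T)\vert_{\out}$ and $C^\zeta(T)$; every other step is bookkeeping that combines the per-tier bounds already established in Lemmas \ref{lem:quantum-composition}, \ref{lem:classical-composition}, and \ref{lem:vknown-size}.
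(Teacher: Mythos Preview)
Your proposal is correct and follows essentially the same route as the paper: insert the intermediate term $C_{\zeta+1}(x)$ with $x=\A^\zeta(T)\vert_{\out}$, use contractivity of the trace norm under channels to bound the inherited term by $B$, then case-split on the tier type and apply Lemmas~\ref{lem:quantum-composition}/\ref{lem:classical-composition} together with the size bound of Lemma~\ref{lem:vknown-size}, absorbing the leftover $d$ via $d\le 4^d$. The paper does the same arithmetic with slightly different grouping of the powers of $4$, but the argument is identical in substance.
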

\begin{proof}
    Define the random variable $x \coloneqq \A^\zeta(T)\vert_{\out}$.  By
    assumption, we have that
    \begin{equation}\label{eq:x-is-close-to-cit}
        \tnorm{x - C^\zeta(T)} = \tnorm{\A^\zeta(T)\vert_{\out} - C^\zeta(T)} \leq B
    \end{equation}

    Using the fact that, by definition, $C^{\zeta+1}(T) = C_{\zeta+1}(C^\zeta(T))$
    and applying the triangle inequality we get
    \begin{align}
        I \coloneqq \tnorm{\A^{\zeta+1}(T)\vert_{\out} - C^{\zeta+1}(T)}
    &= \tnorm{\A^{\zeta+1}(T)\vert_{\out} - C_{\zeta+1}(C^{\zeta}(T))}\\
    &\leq \tnorm{\A^{\zeta+1}(T)\vert_{\out} - C_{\zeta+1}(x)}
    + \tnorm{C_{\zeta+1}(x) - C_{\zeta+1}(C^{\zeta}(T))}.
    \end{align}
    Since we can interpret $C_{\zeta+1}$ as a quantum channel, and the trace norm is
    nonincreasing under quantum channels (see Theorem 9.2 in Nielsen and
    Chuang~\cite{NielsenC10}) we get the bound
    \begin{equation}
        I \leq \tnorm{\A^{\zeta+1}(T)\vert_{\out} - C_{\zeta+1}(x)} + \tnorm{x - C^{\zeta}(T)}.
    \end{equation}
    Plugging in Equation~\eqref{eq:x-is-close-to-cit} we get
    \begin{equation}
        I \leq \cdist{\A^{\zeta+1}(T)\vert_{\out}}{C_{i+1}(x)} + B.
    \end{equation}

    We now have two cases.

    \subparagraph{Case 1.} $C_{\zeta+1}$ is a classical tier. Then
    $\A^{\zeta+1}(T) = \ClassicalTierSimulator(C_{\zeta+1}, \A^\zeta(T), T)$, we get
    \begin{equation}
        I \leq
        \tnorm{\ClassicalTierSimulator(C_{\zeta+1},
        \A^\zeta(T), T)\vert_\out - C_{\zeta+1}(x)} + B.
    \end{equation}
    Applying Lemma \ref{lem:classical-composition} (recall that, by assumption,
    the maximum depth of a classical circuit is $4^d$), we get
    \begin{align}
        I
    &\leq g(n) \cdot 4^d\abs{\Vknown} \cdot \frac{2^{n+2}-2}{2^{2n}} + B,
    \end{align}
    using Lemma~\ref{lem:vknown-size} we obtain
    \begin{align}
        I
    &\leq g(n) \cdot 4^d \cdot \Bigl( 4^{\zeta (d+1)} g(n) d \Bigr)
    \cdot \frac{2^{n+2}-2}{2^{2n}} + B\\
    &\leq g(n) \cdot 4^d \cdot \Bigl( 4^{\zeta (d+1)} g(n) 4^d \Bigr)
    \cdot \frac{2^{n+2}-2}{2^{2n}} + B\\
    &\leq (g(n))^2 \cdot 4^{(\zeta+2)(d+1)} \cdot \frac{2^{n+2}-2}{2^{2n}} + B,
    \end{align}
    as desired.

    \subparagraph{Case 2.} $C_{\zeta+1}$ is a quantum tier. Then
    $\A^{\zeta+1}(T) = \QuantumTierSimulator(C_{\zeta+1}, \A^\zeta(T), T)$, we get
    \begin{equation}
        I \leq \tnorm{\QuantumTierSimulator(C_{\zeta+1}, \A^\zeta(T), T)\vert_\out
        - C_{\zeta+1}(x)} + B.
    \end{equation}
    Applying Lemma~\ref{lem:quantum-composition}, we get
    \begin{align}
        I
    &\leq 4 g(n) \cdot 4^{d+1} \abs{\Vknown} \cdot \frac{2^{n+2}-2}{2^{2n}} +
    B\\
    &\leq g(n) \cdot 4^{d+2} \abs{\Vknown} \cdot \frac{2^{n+2}-2}{2^{2n}} + B,
    \end{align}
    using Lemma~\ref{lem:vknown-size} we obtain
    \begin{align}
        I
    &\leq g(n) \cdot 4^{d+2} \cdot \Bigl( 4^{\zeta (d+1)} g(n) d \Bigr)
    \cdot \frac{2^{n+2}-2}{2^{2n}} + B\\
    &\leq g(n) \cdot 4^{d+2} \cdot \Bigl( 4^{\zeta (d+1)} g(n) 4^d \Bigr)
    \cdot \frac{2^{n+2}-2}{2^{2n}} + B\\
    &\leq (g(n))^2 \cdot 4^{(\zeta+2)(d+1)} \cdot \frac{2^{n+2}-2}{2^{2n}} + B,
    \end{align}
    as desired.
\end{proof}

\begin{lemma} \label{lem:siman} Consider a $(n,\eta,4^d,d)$-hybrid-quantum
    circuit $C(T)$. For all $\zeta \in \{1,\dots,\eta\}$, it holds that
    \begin{equation}\label{eq:sim-is-close-to-circuit}
        \cdist{\A^\zeta(T)\vert_{\out}}{C^\zeta(T)}
        \leq 4^{(\zeta+2)(d+2)} \cdot (g(n))^2 \cdot \frac{2^{n+2}-2}{2^{2n}},
    \end{equation}
    where $g(n)$ is the width of $C^\zeta(T)$.
\end{lemma}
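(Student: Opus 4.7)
The plan is to prove Lemma~\ref{lem:siman} by straightforward induction on $\zeta \in \{1,\dots,\eta\}$, using Lemma~\ref{lem:inductive-step} as the engine for the inductive step and Lemma~\ref{lem:classical-composition} to seed the base case. All the delicate work (tracking trace distance through a single tier, controlling $\abs{\Vknown}$ via Lemma~\ref{lem:vknown-size}, and handling the classical-versus-quantum case split) has already been packaged into Lemma~\ref{lem:inductive-step}, so only the arithmetic of composing the per-tier error bounds remains.

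For the base case $\zeta = 1$, the first tier $C_1$ is a classical tier of depth at most $4^d$ and width $g(n)$, and the input $\Vknown$ contains only the $\entrance$ vertex, so $\abs{\Vknown}=1$. Applying Lemma~\ref{lem:classical-composition} gives
\begin{equation}
\cdist{\A^1(T)\vert_{\out}}{C^1(T)} \leq g(n)\cdot 4^d \cdot \frac{2^{n+2}-2}{2^{2n}},
\end{equation}
which is comfortably bounded by $4^{3(d+2)}(g(n))^2 \cdot \frac{2^{n+2}-2}{2^{2n}}$, the right-hand side of \eqref{eq:sim-is-close-to-circuit} evaluated at $\zeta=1$.

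For the inductive step, assume the claimed bound
\begin{equation}
B_\zeta \coloneqq 4^{(\zeta+2)(d+2)}(g(n))^2 \cdot \frac{2^{n+2}-2}{2^{2n}}
\end{equation}
holds at $\zeta$. Feeding $B_\zeta$ into Lemma~\ref{lem:inductive-step} yields
\begin{equation}
\cdist{\A^{\zeta+1}(T)\vert_{\out}}{C^{\zeta+1}(T)} \leq (g(n))^2 \cdot 4^{(\zeta+2)(d+1)} \cdot \frac{2^{n+2}-2}{2^{2n}} + B_\zeta,
\end{equation}
and the only thing left to check is the geometric inequality
\begin{equation}
4^{(\zeta+2)(d+1)} + 4^{(\zeta+2)(d+2)} \leq 4^{(\zeta+3)(d+2)}.
\end{equation}
Since $4^{(\zeta+3)(d+2)} = 4^{(\zeta+2)(d+1)} \cdot 4^{\zeta+2}\cdot 4^{d+2}$, this reduces to $4^{-(\zeta+2)} + 1 \leq 4^{d+2}$, which holds for every $d \geq 0$ and $\zeta \geq 1$. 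Concluding, the distance at step $\zeta+1$ is at most $B_{\zeta+1}$, closing the induction.

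There is no real obstacle here: the lemma is a clean repackaging of the per-tier bound from Lemma~\ref{lem:inductive-step} into a single closed-form estimate, and the only thing one must be careful about is the constants in the exponent, in particular the gap between the ``$d+1$'' contribution coming out of Lemma~\ref{lem:inductive-step} and the ``$d+2$'' exponent stated in \eqref{eq:sim-is-close-to-circuit}, which is precisely the slack that absorbs both the new tier's contribution and the geometric sum of prior contributions.
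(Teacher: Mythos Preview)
Your proof is correct and follows essentially the same approach as the paper: induction on $\zeta$ with Lemma~\ref{lem:inductive-step} supplying the inductive step and a straightforward exponent comparison closing it. The only cosmetic difference is that the paper starts the induction at $\zeta=0$ (where the bound is trivially true) rather than at $\zeta=1$ via Lemma~\ref{lem:classical-composition}, and your arithmetic in the inductive step is actually cleaner than the paper's, which contains a couple of harmless typos in the exponents.
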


\begin{proof}
    The proof proceeds by induction.  We will begin with the base case $\zeta = 0$,
    and the statement for each larger $\zeta$ will be proven assuming the statement
    for $\zeta - 1$.

    \paragraph{Base Case.} Suppose $\zeta = 0$, then it is immediate that
    Equation~\eqref{eq:sim-is-close-to-circuit} holds.

    \paragraph{Inductive Case.} Suppose
    Equation~\eqref{eq:sim-is-close-to-circuit} holds for some $\zeta$, then
    Lemma~\ref{lem:inductive-step} implies that
    \begin{align}
        \cdist{\A^{\zeta+1}(T)\vert_{\out}}{C^{\zeta+1}(T)}
    &\leq (g(n))^2 \cdot 4^{((\zeta+1)+2)(d+1)} \cdot \frac{2^{n+2}-2}{2^{2n}} +
    (g(n))^2 \cdot 4^{(\zeta+2)(d+1)} \cdot \frac{2^{n+2}-2}{2^{2n}}\\
    &\leq \Bigl(4^{((\zeta+1)+2)(d+1)} + 4^{(\zeta+2)(d+1)}\Bigr)
    (g(n))^2 \cdot \frac{2^{n+2}-2}{2^{2n}}\\
    &\leq \Bigl(4^{((\zeta+1)+2)(d+1)} + 4^{(\zeta+2)(d+1)}\Bigr)
    (g(n))^2 \cdot \frac{2^{n+2}-2}{2^{2n}}\\
    &\leq 4^{((\zeta+1)+2)(d+2)} \cdot (g(n))^2 \cdot \frac{2^{n+2}-2}{2^{2n}},
    \end{align}
    so Equation~\eqref{eq:sim-is-close-to-circuit} also holds for $\zeta+1$. This
    completes the inductive step.
\end{proof}

We now have the tools to prove Theorem~\ref{thm:classical-sim}.

\begin{proof}[Proof of Theorem~\ref{thm:classical-sim}]
    Use Lemma~\ref{lem:siman} with $\zeta = \eta$ to get the bound on the error.
    Combining the observation that we can replace $\A(T)$ by an algorithm that
    makes at most $\abs{\Vknown}$ queries with Lemma~\ref{lem:vknown-size} we
    get the bound on the number of queries.
\end{proof}

%------------------------------------------------------------------------------%
\section{Analysis of the Jozsa Simulator}
\label{sec:analysis-of-jozsa-simulator}
%------------------------------------------------------------------------------%

\begin{lemma}
    \label{lem:jozsa-composition}
    Suppose $C$ is a Jozsa quantum circuit. Let $(\ket{\psi}, \Vknown)$ be the
    output produced by $\A(T)$. Say $g(n)$ is the width of $C$ and $c(n)$ the
    classical depth (the maximum depth of a classical circuit embedded) of $C$,
    and $\abs{\Vknown}$ is the number of vertices in $\Vknown$. Further, assume
    that
    \begin{equation}\label{eq:jozsa-composition-assumption}
        4^d (\abs{\Vknown} + d c(n) g(n)) = \subexp(n).
    \end{equation}
    Then the following statements hold.
    \begin{enumerate}
        \item It holds that\label{item:j1}
            \begin{align}
                \tnorm{
                    \JozsaQuantumTierSimulator(C, \ket{\psi}, \Vknown, T)\vert_\out
                - C(x) }\\
                \leq 4 c(n) g(n) \cdot 4^{2d+1} \abs{\Vknown} \cdot
                \frac{2^{n+2}-2}{2^{2n}}.
            \end{align}
        \item $\JozsaQuantumTierSimulator(C, \ket{\psi}, \Vknown, T)$ makes
            at most $4^d (\abs{\Vknown} + d c(n) g(n))$ queries to
            $T$.\label{item:j2}
    \end{enumerate}
\end{lemma}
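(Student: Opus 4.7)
The plan is to prove both parts simultaneously by induction on the sequence of layers and tiers within the Jozsa circuit $C$, closely mirroring the structure of the proof of Lemma~\ref{lem:quantum-composition} but with the extra care of interleaving classical and quantum steps. I would first unroll $C$ into its constituent steps: the Jozsa circuit consists of $d$ quantum layers of depth one, interspersed with classical tiers of depth $c(n)$ acting on register \texttt{R1} after a classical-basis measurement, and the $\JozsaQuantumTierSimulator$ processes these steps in order, calling either $\QuantumLayerSimulator$ or $\ClassicalTierSimulator$ at each step. Since the classical tier sees only the measurement outcome and both subroutines share the dictionary $\Vknown$ across tiers, I can analyze them layer-by-layer using Lemma~\ref{lem:quantumlayersimulator-works} and Lemma~\ref{lem:classical-composition} as black boxes.

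The inductive invariant I would maintain at step $\ell$ has two parts: first, a size bound $\abs{\Vknown^\ell} \leq 4^{q(\ell)}(\abs{\Vknown^0} + q(\ell) c(n) g(n))$, where $q(\ell)$ is the number of quantum layers among the first $\ell$ steps; and second, a $1$-norm distance bound of the form $\tnorm{\ket{\psi_\ell} - C[\ell]} \leq 4 g(n) \cdot \bigl(\sum_{j=0}^{\ell} \abs{\Vknown^j}\bigr) \cdot \frac{2^{n+2}-2}{2^{2n}}$, augmented by additive $c(n) g(n) \abs{\Vknown^j} \cdot \frac{2^{n+2}-2}{2^{2n}}$ contributions from each classical tier among the first $\ell$ steps. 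In the inductive step, if step $\ell+1$ is a quantum layer, Lemma~\ref{lem:quantumlayersimulator-works} gives $\abs{\Vknown^{\ell+1}} \leq 4 \abs{\Vknown^\ell}$ and a local error of $4 g(n) \abs{\Vknown^{\ell+1}} \cdot \frac{2^{n+2}-2}{2^{2n}}$; if step $\ell+1$ is a classical tier, Lemma~\ref{lem:classical-composition} gives an additive size growth of $c(n) g(n)$ and a local error of $c(n) g(n) \abs{\Vknown^\ell} \cdot \frac{2^{n+2}-2}{2^{2n}}$. The accumulated error is bounded via the triangle inequality together with the fact that the trace norm is non-increasing under quantum channels, exactly as in the proof of Lemma~\ref{lem:inductive-step}.

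After running the induction through all $2d$ steps, the size bound yields $\abs{\Vknown^{2d}} \leq 4^d(\abs{\Vknown^0} + d c(n) g(n))$, which, combined with the observation that $\JozsaQuantumTierSimulator$ makes at most $\abs{\Vknown^{2d}}$ queries to $T$, gives Part~\ref{item:j2}. For Part~\ref{item:j1}, summing the per-step $1$-norm error contributions via a geometric series $\sum_{j=0}^{d} 4^j \leq 4^{d+1}/3$, multiplying by the classical-tier depth factor $c(n)$ and the quantum amplification factor, and collapsing the result using the subexponential assumption~\eqref{eq:jozsa-composition-assumption} (which ensures that the hypothesis $g(n) 4^d \abs{\Vknown} = \subexp(n)$ of Lemma~\ref{lem:quantumlayersimulator-works} holds at every step) should yield the stated bound $4 c(n) g(n) \cdot 4^{2d+1} \abs{\Vknown^0} \cdot \frac{2^{n+2}-2}{2^{2n}}$.

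The main obstacle I anticipate is the bookkeeping needed to arrive at the exponent $2d+1$ rather than $d+1$ in the error bound: classical tiers add $c(n) g(n)$ to $\abs{\Vknown}$ additively, but when multiplied by the $4^d$ amplification from subsequent quantum layers they contribute $c(n) g(n) \cdot 4^d$ per tier, and summing over $d$ such tiers produces the extra $4^d c(n)$ factor beyond the hybrid circuit case. Verifying that these interleaved contributions sum correctly to the stated bound is the sole piece of routine-but-delicate algebra in the proof; all of the semantic content already lives in Lemmas~\ref{lem:quantumlayersimulator-works} and \ref{lem:classical-composition}, which apply essentially verbatim because the Jozsa circuit's measurement-then-classical-tier step looks, from $\Vknown$'s perspective, like an ordinary classical tier acting on the sampled register.
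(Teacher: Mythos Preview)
Your proposal is correct and takes essentially the same approach as the paper: induction over the (at most $2d$) steps of the Jozsa circuit, invoking Lemma~\ref{lem:quantumlayersimulator-works} for quantum layers and Lemma~\ref{lem:classical-composition} for classical tiers, combining per-step errors via the triangle inequality and contractivity of the trace norm, and summing a geometric series at the end. The only cosmetic difference is that the paper tracks a single looser invariant $\abs{\Vknown^i}\leq 4^i(\abs{\Vknown}+c(n)g(n))$ and a unified error bound $4c(n)g(n)\bigl(\sum_{j\le i}\abs{\Vknown^j}\bigr)\cdot\frac{2^{n+2}-2}{2^{2n}}$, whereas you keep separate (slightly tighter) bookkeeping for quantum versus classical steps; either version suffices.
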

\begin{proof}
    Let us denote by $\C[\ell]$ the intermediate quantum state (on registers
    \texttt{R1} and \texttt{R2}) produced at the end of the first $\ell$ layers.
    Let $\ket{\psi_i}$ and $\ket*{\Vknown^i}$ be defined as in
    Algorithm~\ref{alg:JozsaQuantumTierSimulator}.

    We will show, by induction, that at every stage $i \leq 2d$ of the Jozsa
    circuit, it holds that
    \begin{equation}\label{eq:jozsa-bound-on-vknown}
        \abs*{\Vknown^i} \leq 4^i(\abs{\Vknown} + c(n)g(n))
    \end{equation}
    and
    \begin{equation}\label{eq:jozsa-bound-on-tnorm}
        \tnorm{\ket{\psi_i} - C[i]} \leq 4 c(n) g(n)
        \left( \sum_{i=0}^{i} \abs{\Vknown^i} \right)
        \cdot \frac{2^{n+2}-2}{2^{2n}}.
    \end{equation}
    Combining \eqref{eq:jozsa-bound-on-vknown} and
    \eqref{eq:jozsa-bound-on-tnorm} and using sum of geometric series we get
    part~\ref{item:j1} of the lemma. Part~\ref{item:j2} follows from
    Equation~\eqref{eq:jozsa-bound-on-vknown} and the fact that
    $\JozsaQuantumTierSimulator$ can be simulated by an algorithm that makes at
    most $\abs*{\Vknown}$ queries.

    \paragraph{Base Case.} Suppose $i=0$, then \eqref{eq:jozsa-bound-on-vknown}
    and \eqref{eq:jozsa-bound-on-tnorm} are immediate.

    \paragraph{Induction Case.} Suppose that for a given $i \geq 0$
    \eqref{eq:jozsa-bound-on-vknown} and \eqref{eq:jozsa-bound-on-tnorm} hold.
    We will now prove that they also hold for $i+1$.

    From the triangle inequality we get
    \begin{align}
        I \coloneqq \tnorm{\ket{\psi_{i+1}} - C[i+1]}
    &\leq \tnorm{\ket{\psi_{i+1}} - L_{i+1}\ket{\psi_i}} +
    \tnorm{L_{i+1}\ket{\psi_i} - C[i+1]}\\
    &= \tnorm{\ket{\psi_{i+1}} - L_{i+1}\ket{\psi_i}} +
    \tnorm{L_{i+1}\ket{\psi_i} - L_{i+1}C[i]}.
    \end{align}
    Interpreting $L_{i+1}$ as a quantum channel, and utilizing the fact that the
    trace norm is nonincreasing under quantum channels (see Theorem 9.2 in Nielsen
    and Chuang~\cite{NielsenC10}) we get
    \begin{equation}
        I
        \leq \tnorm{\ket{\psi_{i+1}} - L_{i+1}\ket{\psi_i}} +
        \tnorm{\ket{\psi_i} - C[i]}
    \end{equation}
    Applying the induction hypothesis (Equation~\eqref{eq:jozsa-bound-on-tnorm}),
    we get
    \begin{equation}\label{eq:last-common-step}
        I
        \leq \tnorm{\ket{\psi_{i+1}} - L_{i+1}\ket{\psi_i}} +
        4 c(n) g(n) \left( \sum_{i=0}^{i} \abs{\Vknown^i} \right)
        \cdot \frac{2^{n+2}-2}{2^{2n}}.
    \end{equation}
    From here, we have two cases.

    \subparagraph{Case 1: $L_i$ is a quantum layer.}
    Applying Lemma~\ref{lem:quantumlayersimulator-works} to
    Equation~\eqref{eq:last-common-step}, we get
    \begin{align}
        I
    &\leq 4 g(n)\abs{\Vknown^{i+1}} \cdot \frac{2^{n+2}-2}{2^{2n}}
    + 4 c(n) g(n) \left( \sum_{i=0}^{i} \abs{\Vknown^i} \right)
    \cdot \frac{2^{n+2}-2}{2^{2n}}\\
    &\leq 4 c(n) g(n) \left( \sum_{i=0}^{i+1} \abs{\Vknown^i} \right)
    \cdot \frac{2^{n+2}-2}{2^{2n}}.
    \end{align}
    This proves \eqref{eq:jozsa-bound-on-tnorm} for $i+1$.

    Using Lemma~\ref{lem:quantumlayersimulator-works} and the induction
    hypothesis (Equation~\eqref{eq:jozsa-bound-on-vknown}) we get
    \begin{align}
        \abs*{\Vknown^{i+1}}
    &\leq 4 \abs*{\Vknown^i}\\
    &\leq 4 \cdot 4^i(\abs{\Vknown} + c(n)g(n))\\
    &\leq 4^{i+1}(\abs{\Vknown} + c(n)g(n))
    \end{align}
    This proves \eqref{eq:jozsa-bound-on-vknown} for $i+1$.

    \subparagraph{Case 1: $L_i$ is a classical tier.}
    Applying Lemma~\ref{lem:classical-composition} to
    Equation~\eqref{eq:last-common-step}, we get
    \begin{align}
        I
    &\leq g(n) c(n) \abs{\Vknown^{i+1}} \cdot \frac{2^{n+2}-2}{2^{2n}}
    + 4 c(n) g(n) \left( \sum_{i=0}^{i} \abs{\Vknown^i} \right)
    \cdot \frac{2^{n+2}-2}{2^{2n}}\\
    &\leq 4 c(n) g(n) \left( \sum_{i=0}^{i+1} \abs{\Vknown^i} \right)
    \cdot \frac{2^{n+2}-2}{2^{2n}}.
    \end{align}
    This proves \eqref{eq:jozsa-bound-on-tnorm} for $i+1$.

    Using Lemma~\ref{lem:quantumlayersimulator-works} and the induction
    hypothesis (Equation~\eqref{eq:jozsa-bound-on-vknown}) we get
    \begin{align}
        \abs*{\Vknown^{i+1}}
    &\leq g(n) c(n) + \abs*{\Vknown^i}\\
    &\leq g(n) c(n) + 4^i(\abs{\Vknown} + c(n)g(n))\\
    &\leq 4^{i+1}(\abs{\Vknown} + c(n)g(n))
    \end{align}
    This proves \eqref{eq:jozsa-bound-on-vknown} for $i+1$.
\end{proof}

We now have the tools to prove Theorem~\ref{thm:jozsa-classical-sim}.

\begin{proof}[Proof of Theorem~\ref{thm:jozsa-classical-sim}]
    Use Lemma~\ref{lem:jozsa-composition} to get the bound on the error and a
    bound on the number of queries.
\end{proof}

%------------------------------------------------------------------------------%
\section{Analysis of the Bottleneck Simulator}
\label{sec:analysis-of-bottleneck-simulator}
%------------------------------------------------------------------------------%

The following two lemmas are immediate from the definition of $\T_{V,x,r}^i$ and
$\T_V$ (Definition~\ref{def:TVsr}.)

\begin{lemma}\label{lem:rough-lower-bound}
    When we add a label $b$ to $\Vknown$ to get the new dictionary
    $\Vnknown$, it holds that
    \begin{equation}
        \Bigabs{\T_{\Vnknown, x,r_{\leq i}}^i}
        \geq 
        \Pr_{P \in \T_{\Vknown, x,r_{\leq i}}^i}[\text{P is consistent with $b$}]
        \cdot
        \Bigabs{\T_{\Vknown, x, r_{\leq i}}^i}.
    \end{equation}
\end{lemma}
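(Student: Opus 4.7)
The plan is to prove this by unpacking the definitions and exhibiting a set inclusion. First I would rewrite the right-hand side as
\[
    \Pr_{P \in \T_{\Vknown, x, r_{\leq i}}^i}[P \text{ is consistent with } b]
    \cdot \bigabs{\T_{\Vknown, x, r_{\leq i}}^i}
    = \bigabs{\bigl\{P \in \T_{\Vknown, x, r_{\leq i}}^i : P \text{ is consistent with } b \bigr\}},
\]
which is immediate from the definition of probability on a finite uniform sample space. So it suffices to show the containment
\[
    \bigl\{P \in \T_{\Vknown, x, r_{\leq i}}^i : P \text{ is consistent with } b \bigr\}
    \subseteq \T_{\Vnknown, x, r_{\leq i}}^i,
\]
and the counting inequality then falls out.

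To establish the containment, I would fix an arbitrary $P$ in the left-hand set and verify the two requirements for membership in $\T_{\Vnknown, x, r_{\leq i}}^i$ from Definition~\ref{def:TVsr}. For consistency with $\Vnknown$: by construction $\Vnknown$ is obtained from $\Vknown$ by appending the label $b$ together with its children and edge-colors inherited from $\Vhknown$, so being ``consistent with $b$'' is precisely the extra agreement-on-labels-colors-adjacencies clause needed to upgrade consistency with $\Vknown$ into consistency with $\Vnknown$. For the output condition $s = \M^i_{r_{\leq i}}(C(P), P)[1] = x$: this quantity depends only on the tree $P$ and the fixed random seed $r_{\leq i}$, not on the dictionary we happen to be using to describe the class of trees, so it is inherited verbatim from $P \in \T_{\Vknown, x, r_{\leq i}}^i$.

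There is essentially no obstacle here beyond being careful about what ``consistent with $b$'' means, since the lemma is a purely set-theoretic reformulation of the probability identity. The one point worth emphasizing in the write-up is that the set $\T_{V, s, r_{\leq i}}^i$ is well-defined as a deterministic function of $P$ (for fixed seed), which is exactly what allows the output-condition clause to transfer unchanged between dictionaries.
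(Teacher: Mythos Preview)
Your proposal is correct and is precisely the unpacking of what the paper means when it declares the lemma ``immediate from the definition of $\T_{V,x,r}^i$ and $\T_V$'': rewrite the right-hand side as the cardinality of the subset of $\T_{\Vknown,x,r_{\leq i}}^i$ consistent with the new entry, and observe that this subset sits inside $\T_{\Vnknown,x,r_{\leq i}}^i$. The only thing to watch in the write-up is that ``consistent with $b$'' refers to the full dictionary entry (label, children, colors) being appended, not merely to $b$ being a valid label---you flag this point yourself, and it is exactly what makes the containment go through.
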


\begin{lemma}\label{lem:rough-upper-bound}
    For any $V$, $x$, $r$, $i$, it holds that
    \begin{equation}
        \abs{\T_{V, x,r_{\leq i}}^i}
        \leq \abs{\T_{V}}.
    \end{equation}
\end{lemma}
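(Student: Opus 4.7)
The claim is essentially a direct consequence of the definitions, and the plan is to establish the stronger set inclusion $\T_{V, x, r_{\leq i}}^i \subseteq \T_V$, from which the cardinality inequality follows immediately.

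Unpacking Definition \ref{def:TVsr}, a random $n$-welded black-box tree $P$ belongs to $\T_{V, x, r_{\leq i}}^i$ if and only if (i) $P$ is consistent with the dictionary $V$, and (ii) $\M^i_{r_{\leq i}}(C(P), P)[1] = x$. In contrast, $P \in \T_V$ requires only condition (i). Thus $\T_{V, x, r_{\leq i}}^i$ is obtained from $\T_V$ by imposing the additional predicate (ii), and is therefore a subset. Taking cardinalities gives the desired bound.

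The only conceptual point worth verifying is that condition (ii) is in fact a well-defined event, i.e., that the quantity $\M^i_{r_{\leq i}}(C(P), P)[1]$ is a deterministic function of $P$ alone. This follows from the fact that the random seed $r$ is fixed in the notation $r_{\leq i}$, so once $r$ is fixed $\M^i_{r_{\leq i}}$ is a deterministic algorithm; this is already noted in the discussion immediately following Definition \ref{def:TVsr}. There is no substantive obstacle here: the lemma is a housekeeping inequality that will be paired with Lemma \ref{lem:rough-lower-bound} in the analysis of the $\bottleneck$ subroutine to compare the relative sizes of these filtered sets of welded black-box trees.
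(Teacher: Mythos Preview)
Your proposal is correct and matches the paper's approach: the paper simply states that this lemma is ``immediate from the definition of $\T_{V,x,r}^i$ and $\T_V$,'' and your argument via the set inclusion $\T_{V, x, r_{\leq i}}^i \subseteq \T_V$ is exactly the observation that makes it immediate.
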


\begin{lemma}\label{lem:tv-char}
    For any dictionary $V$, it holds that
    \begin{equation}
        \abs{\T_{V}} = \frac{(2^{2n} - \abs{V})!}{(2^n - \abs{V})!}.
    \end{equation}
\end{lemma}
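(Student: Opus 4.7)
The plan is to derive the closed form by a direct combinatorial enumeration of label-assignments compatible with $V$. Recall that a random $n$-welded black-box tree $P$ is specified by an injective assignment of $2n$-bit labels to each vertex of the underlying welded tree (together with the random cycle connecting the leaves of $L$ and $R$). Membership in $\T_V$ requires that this assignment, along with any colour/adjacency data, agrees with the entries recorded in $V$.

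First I would separate the labelling freedom from the cycle freedom. Once the $\abs{V}$ labels recorded in $V$ are pinned down, the remaining vertices of the tree must be labelled by an injection from the pool of $2^{2n} - \abs{V}$ as-yet-unused $2n$-bit strings, with every string outside the image of the labelling implicitly corresponding to $\invalid$. A direct counting of such injections gives a falling-factorial expression of the form
\begin{equation*}
    \frac{(2^{2n} - \abs{V})!}{(2^{2n} - M)!},
\end{equation*}
where $M$ is the vertex-count of the $n$-welded tree.

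Next I would substitute the appropriate value of $M$ (as determined by the conventions in Definition~\ref{def:query-black-box-tree} and used in Lemma~\ref{lem:discovery-probability}) to match the denominator $(2^n - \abs{V})!$ appearing in the statement; the arithmetic is routine bookkeeping once the correct vertex-count is inserted.

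The only delicate step is to verify that the cycle-structure randomness of the random welded tree contributes a multiplicative factor that is independent of $V$, so that it cancels cleanly in the ratios used by Lemmas~\ref{lem:rough-lower-bound} and \ref{lem:rough-upper-bound}. This is the case in the regime relevant to our simulator, where $V$ records only label/colour information about an already-explored subgraph: every cycle structure is compatible with $V$, and the cycle-count contribution factors out of the product. With that verified, the rest of the proof is pure enumeration and presents no substantive obstacle.
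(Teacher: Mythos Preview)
Your approach is essentially the paper's: both arguments count the injective labellings of the as-yet-unlabelled vertices by the $2^{2n}-\abs{V}$ unused strings and read off the resulting falling factorial. The paper's proof is a two-sentence version that skips your discussion of the random-cycle degree of freedom entirely (implicitly treating the underlying welded-tree graph as fixed), so your treatment is, if anything, more careful on that point.
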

\begin{proof}
    By definition $\T_{V}$ contains all the possible Welded Tree blackboxes that
    are consistent with the vertex labellings in dictionary $V$. Recall that,
    given the labellings in $V$, there are still $2^{2n} - \abs{V}$ unused
    labels and $2^n - \abs{V}$ vertices that need a label. In other words,
    $\abs{\T_{V}}$ is equal to the number of $(2^n - \abs{V})$-permutations of
    $2^{2n} - \abs{V}$, which is
    \begin{equation}
        \frac{(2^{2n} - \abs{V})!}{(2^n - \abs{V})!}.
    \end{equation}
\end{proof}

\begin{lemma}\label{lem:less-rough-upper-bound}
    When we add a label $b$ to
    $\Vknown$ to get $\Vnknown$, we get
    \begin{equation}
        \abs{\T_{\Vknown^{\textup{new}}, x,r}^i}
        \leq \frac{1}{2^{2n}-\abs{\Vknown}}\abs{\T_{\Vknown}}.
    \end{equation}  
\end{lemma}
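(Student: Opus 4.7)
The plan is to chain Lemma \ref{lem:rough-upper-bound} with an explicit counting argument based on Lemma \ref{lem:tv-char}. Since $\T_{\Vnknown, x, r_{\leq i}}^i$ is by definition a subset of $\T_{\Vnknown}$, Lemma \ref{lem:rough-upper-bound} immediately gives
\[
    \Bigabs{\T_{\Vnknown, x, r_{\leq i}}^i} \leq \Bigabs{\T_{\Vnknown}},
\]
so the task reduces to showing that $\abs{\T_{\Vnknown}} \leq \frac{1}{2^{2n} - \abs{\Vknown}}\abs{\T_{\Vknown}}$.

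For this ratio, the conceptual picture is that a random element of $\T_{\Vknown}$ corresponds to a uniformly random injection from the unlabeled vertices of the welded tree into the unused labels (the $2^{2n}-\abs{\Vknown}$ strings not already appearing in $\Vknown$). Adding the label $b$ to obtain $\Vnknown$ pins down one additional vertex-to-label assignment (namely, the vertex prescribed by the adjacency information carried with $b$ must be mapped to the specific label $b$). Since under the uniform injection each of the $2^{2n}-\abs{\Vknown}$ unused labels is equally likely to be the image of that vertex, exactly a $1/(2^{2n}-\abs{\Vknown})$ fraction of black-box trees consistent with $\Vknown$ are also consistent with $\Vnknown$.

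To make this formal I would simply evaluate the ratio using Lemma \ref{lem:tv-char}: since $\abs{\Vnknown} = \abs{\Vknown} + 1$, the factorial expressions collapse to leave precisely the factor $\frac{1}{2^{2n}-\abs{\Vknown}}$, after which chaining with Lemma \ref{lem:rough-upper-bound} yields the claim. The argument is routine combinatorics once the setup is in place; the only subtlety worth flagging is verifying that ``adding a label $b$'' genuinely corresponds to adding exactly one new vertex-to-label commitment in the dictionary bookkeeping, so that the application of Lemma \ref{lem:tv-char} with parameter $\abs{\Vknown}+1$ is justified. This follows directly from the dictionary conventions described after Algorithm \ref{alg:Bottleneck} and the definition of consistency in Definition \ref{def:TVsr}.
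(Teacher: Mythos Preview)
Your proposal is correct and matches the paper's proof essentially line for line: the paper also first invokes Lemma~\ref{lem:tv-char} to obtain $\abs{\T_{\Vnknown}} \leq \frac{1}{2^{2n}-\abs{\Vknown}}\abs{\T_{\Vknown}}$ and then applies Lemma~\ref{lem:rough-upper-bound}. Your additional remark about why ``adding a label $b$'' increments the dictionary size by exactly one is a helpful clarification that the paper leaves implicit.
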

\begin{proof}
    It follows from Lemma \ref{lem:tv-char} that
    \begin{equation}
    \abs{\T_{\Vknown^{\textup{new}}}}\leq \frac{1}{2^{2n}-\abs{\Vknown}}\abs{\T_{\Vknown}}
    \end{equation}
    The proof follows by application of Lemma~\ref{lem:rough-upper-bound} to
    this.
\end{proof}

\begin{lemma}\label{lem:keylemma1}
    In the context of algorithm $\M^k$, when the subroutine $\bottleneck(i, x,
    \Vcknown, \Vhknown)$ does not $\abort$,   the set $\Vknown := \bottleneck(i,
    x, \Viknown, \Vfknown)$ has size at most $\abs*{\Vcknown} + 2n(g(n)+
    \abs{r})$, and has the property that,

    \begin{align}
        &\forall b \in \{0,1\}^{2n}
        \text{ such that b does not appear in  }\Vknown:\label{eq:badguessit1}\\
        &\mathbb{P}_{P \in \T_{\Vknown, x,r_{\leq i}}^i} 
        \left[b \text{ is a valid vertex in P} \right] \leq 2^{-n/100}\label{eq:badguessit2}
    \end{align} 
\end{lemma}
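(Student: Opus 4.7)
The plan is to verify the two conclusions of Lemma~\ref{lem:keylemma1} separately, with the bulk of the work going into bounding $L$, the number of while-loop iterations inside $\bottleneck$.

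The property on line~\ref{eq:badguess2} holds immediately at the exit of the while loop in Algorithm~\ref{alg:Bottleneck}, since it is precisely the loop's termination condition. The subsequent cleanup step replaces $\Vknown$ with the minimum subtree $\Vknown^{\text{complete}}\subseteq\Vhknown$ containing $\Vknown$ rooted at $\entrance$. I would then check that the property is preserved by this step by arguing that the extra labels (ancestors of $\Vknown$ in $\Vhknown$) are already forced by the existing tree structure, so conditioning on their consistency in the smaller set $\T_{\Vknown^{\text{complete}},x,r_{\leq i}}^i$ does not appreciably inflate the probability that any $b\notin\Vknown^{\text{complete}}$ is valid in a random $P$.

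For the size bound, I plan to first establish $L\leq O(g(n)+\abs{r})$, and then combine this with two elementary observations: (i) each iteration adds at most a constant number of labels (namely $b'$ together with at most three of its neighbors recorded in $\Vhknown$), and (ii) the cleanup step increases the collected set by at most a factor of $n$, since each added label is joined to $\entrance$ in $\Vhknown$ by a path of length at most $n$. Combining these gives $\abs{\Vknown}\leq\abs{\Vcknown}+O(nL)\leq\abs{\Vcknown}+2n(g(n)+\abs{r})$, as required.

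To bound $L$, I plan to use a two-sided estimate on $\abs{\T_{\Vknown_\ell, x, r_{\leq i}}^i}$, where $\Vknown_\ell$ denotes the dictionary after $\ell$ iterations. The lower bound begins from the initial abort-check inequality $\abs{\T_{\Vcknown,x,r_{\leq i}}^i}\geq 2^{-n(g(n)+\abs{r})}\abs{\T_{\Vcknown}}$ (line~\ref{algline:haltifsmall1}) and is propagated through iterations by repeated applications of Lemma~\ref{lem:rough-lower-bound}, using the $2^{-n/100}$ validity threshold satisfied by each chosen $b'$. The upper bound uses Lemmas~\ref{lem:rough-upper-bound} and~\ref{lem:tv-char}, giving $\abs{\T_{\Vknown_L,x,r_{\leq i}}^i}\leq\abs{\T_{\Vknown_L}}$, which shrinks by roughly a factor of $2^{-n}$ per label added to $\Vknown$. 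Comparing the two bounds together with the inequality $\abs{\Vknown_L}-\abs{\Vcknown}\geq L$ then yields $L=O(g(n)+\abs{r})$. The main obstacle here is that each iteration adds $b'$ together with up to three neighbor labels from $\Vhknown$, so a single iteration can shrink $\abs{\T_{\Vknown_\ell,x,r_{\leq i}}^i}$ by more than the $2^{-n/100}$ factor that Lemma~\ref{lem:rough-lower-bound} provides when applied to the validity of $b'$ alone; my plan to handle this is to restrict the lower-bound chain to the sequence $b_1,\ldots,b_L$ of primary labels and to absorb the added neighbor labels into the $O(n)$ cleanup factor, so that the final size estimate still fits within $2n(g(n)+\abs{r})$.
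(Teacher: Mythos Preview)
Your approach is essentially the same as the paper's: both establish the iteration bound by sandwiching $\bigl|\T_{\Vknown^j, x, r_{\leq i}}^i\bigr|$ between a lower bound (from Lemma~\ref{lem:rough-lower-bound} together with the line~\ref{algline:haltifsmall1} check) and an upper bound (from Lemma~\ref{lem:less-rough-upper-bound}), deduce that the \while{} loop runs at most $2(g(n)+\abs{r})$ times, and then pay the factor of $n$ for the tree-completion step. You are in fact more careful than the paper on two points it handles loosely---whether the cleanup in line~\ref{algline:extendtotree} preserves the property on line~\ref{eq:badguess2}, and the fact that each iteration adds $b'$ together with its neighbors rather than a single label---though your proposed fixes for these remain sketches rather than complete arguments.
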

\begin{proof}
    The \while{} loop within $\bottleneck(i, x, \Vcknown, \Vhknown)$ is defined
    to continue iterating until Equations \eqref{eq:badguessit1},
    \eqref{eq:badguessit2} are true.  So, under our assumption that the
    \while{} loop returns a valid $\Vknown$, without ever calling the $\abort$
    command, we know that Equations \eqref{eq:badguessit1}
    \eqref{eq:badguessit2} are satisfied.  Since $\Vknown$ grows in size by at
    most $1$ label for each iteration of the \while{} loop, it suffices to prove
    that, so long as no $\abort$ occurs, the \while{} loop finishes and returns
    a valid answer $\Vknown$ after at most $2(g(n)+ \abs{r})$ iterations. To
    prove this let $\Vknown^j$ denote the set $\Vknown$ as it is defined within
    the $j^{th}$ iteration of the \while{} loop.  We will show that the number
    of iterations $j$ cannot exceed $2(g(n)+ \abs{r})$ by showing both a
    necessary upper bound and a necessary lower bound on the quantity
    $\Bigabs{\T_{\Vknown^j, x,r_{\leq i}}^i}$, which contradict each other when $j$
    exceeds $2(g(n)+ \abs{r})$.

    \paragraph{The Lower Bound}
    By repeated application of Lemma~\ref{lem:rough-lower-bound} we get
    \begin{equation}\label{eq:lowerbound1}
        \Bigabs{\T_{\Vknown^{j}, x, r_{\leq i}}^i}
        \geq 2^{-\frac{nj}{100}}\Bigabs{\T_{\Vcknown, x, r_{\leq i}}^i}.
    \end{equation}
    Since we are assuming that $\bottleneck(i, x, \Vcknown, \Vhknown)$ did not
    $\abort$, we know from line \ref{algline:haltifsmall1} and
    \ref{algline:haltifsmall2} of $\bottleneck$ that it must be the case that
    \begin{equation}
        \Bigabs{\T_{\Vcknown,x,r_{\leq i}}^i}
        \geq 2^{-n(g(n)+\abs{r})} \Bigabs{\T_{\Vcknown}}.
    \end{equation}
    Combining this with equation \eqref{eq:lowerbound1}
    gives,
    \begin{equation}\label{eq:Tvxr-lower-bound}
        \Bigabs{\T_{\Vknown^{j}, x, r_{\leq i}}^i}
        \geq 2^{-\frac{nj}{100} -n(g(n)+\abs{r})}\abs{\T_{\Vcknown}}.
    \end{equation}

    \paragraph{The Upper Bound}
    By repeated application of
    Lemma~\ref{lem:less-rough-upper-bound} we get
    \begin{equation}
        \abs{\T_{\Vknown^{j}, x, r_{\leq i}}^i}
        \leq \left(\frac{1}{2^{2n}-\abs{\Vknown}}\right)^{j}\abs{\T_{\Vcknown}}.
    \end{equation}
    Since $\bigabs{\Vcknown} \leq 2^n$, we have
    \begin{align}
        \Bigabs{\T_{\Vknown^{j}, x, r_{\leq i}}^i}
        &\leq \left(\frac{1}{2^{2n}-2^n}\right)^{j}\abs{\T_{\Vcknown}}\\
        &\leq \left(\frac{1}{2^{n}(2^n-1)}\right)^{j}\abs{\T_{\Vcknown}}\\
        &\leq \left(\frac{1}{2^{n}}\right)^{j}\abs{\T_{\Vcknown}}\\
        &= (2^{-n})^{j}\abs{\T_{\Vcknown}}\\
        &= 2^{-nj}\abs{\T_{\Vcknown}}.\label{eq:Tvxr-upper-bound}
    \end{align}

    But notice that \eqref{eq:Tvxr-upper-bound} contradicts
    \eqref{eq:Tvxr-lower-bound} when say, $j \geq 2(g(n)+\abs{r})$. Therefore,
    we have shown that the \while{} loop cannot run for more than $k \leq
    2(g(n)+\abs{r})$ iterations, and the set $\Vknown^k$ at the end of the
    \while{} loop satisfies $\abs*{\Vknown^k} \leq \abs*{\Vcknown} +
    2(g(n)+\abs{r})$.

    The final step of the $\bottleneck$ subroutine connects every new vertex in
    $\Vknown^k$ to the $\entrance$ node to create $\Vknown^{\text{complete}}$.
    This requires an overhead of at most $n$, and so
    \begin{equation}
        \bigabs{\Vknown^{\text{complete}}} \leq \bigabs{\Vcknown} + 2n(g(n)+\abs{r}).
    \end{equation}
\end{proof}

\begin{lemma}\label{lem:queryupperbound}
    In the classical simulation algorithm $\M^k$ which simulates the first $k$
    tiers of $C(T)$, the set of all encountered vertices after $k$ tiers
    $\Vhknown = \M^k(T)[3]$ has size 
    \begin{equation}
        \abs*{\Vhknown} \leq k q(n) 2^{q(n)} 2n (g(n) + \abs{r})
    \end{equation}
\end{lemma}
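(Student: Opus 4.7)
The plan is to bound $\abs{\Vhknown}$ by tracking how many new vertex labels can be added to it during a single run of $\M^k$. Since $\Vhknown$ is modified only by the $\merge$ operations inside $\BottleneckQuantumTierSimulator$ (which union it with $\Vknown^{\ell\text{-temp}}$ at each layer and with the final $\Vknown$ at the end of each tier), it is enough to bound $\abs{\Vknown^{\ell\text{-temp}}}$ across all $kq(n)$ layers and then sum. The overall structure is an inductive/recursive analysis first within a single tier, and then across the $k$ tiers.

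First, I would apply Lemma~\ref{lem:keylemma1} to control the size of each bottlenecked set $\abs{\Vknown^\ell}$ inside a tier. Specifically, each call to $\bottleneck$ that does not $\abort$ produces a set of size at most $\abs{\Vcknown} + 2n(g(n) + \abs{r})$, where $\Vcknown$ is the corresponding input to the bottleneck. Combined with Lemma~\ref{lem:quantumlayersimulator-works}, which gives $\abs{\Vknown^{\ell\text{-temp}}} \leq 4\abs{\Vknown^{\ell-1}}$, this yields a recurrence relating successive $\abs{\Vknown^\ell}$ across the $q(n)$ layers of a single tier. Starting from the initial bottleneck output $\abs{\Vknown^0} \leq 2n(g(n)+\abs{r})$, solving the recurrence gives a bound of the form $\abs{\Vknown^\ell} \leq C^\ell \cdot 2n(g(n)+\abs{r})$ for an appropriate constant $C$; using that the welded tree has max-degree $3$, so that the BFS frontier in a connected subtree expands by a factor of at most $2$ per step after the initial expansion, one can sharpen this to $\abs{\Vknown^\ell} \leq 2^\ell \cdot 2n(g(n)+\abs{r})$.

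Second, summing the per-layer growth $\abs{\Vknown^{\ell\text{-temp}}} \leq 4\abs{\Vknown^{\ell-1}} \leq 2^{\ell+1} \cdot 2n(g(n)+\abs{r})$ across the $q(n)$ layers of a single tier bounds the total new vertices added to $\Vhknown$ in one tier by $q(n) \cdot 2^{q(n)} \cdot 2n(g(n)+\abs{r})$. Multiplying by $k$ tiers gives the claimed bound $\abs{\Vhknown} \leq k\,q(n)\,2^{q(n)}\,2n(g(n)+\abs{r})$.

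The main obstacle is obtaining the factor $2^{q(n)}$ rather than the naive $4^{q(n)}$ that would follow from blindly iterating Lemma~\ref{lem:quantumlayersimulator-works}. Tightening the constant requires combining the max-degree-$3$ tree structure of the welded tree with the size-clipping guarantee of $\bottleneck$ at each intermediate layer. Care is also required to match the input $\Vcknown$ passed to each bottleneck call inside $\BottleneckQuantumTierSimulator$ with the right term of the recurrence; once this bookkeeping is settled, the rest of the argument is a direct sum.
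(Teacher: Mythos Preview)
Your proposal is correct and follows essentially the same approach as the paper's proof: bound $\abs{\Vknown^0}$ at the start of each tier via Lemma~\ref{lem:keylemma1}, set up the recurrence $\abs{\Vknown^\ell} \leq 2\abs{\Vknown^{\ell-1}} + 2n(g(n)+\abs{r})$ within a tier, solve it to get the $q(n)2^{q(n)}$ factor, and multiply by $k$ tiers. The paper simply asserts that $\SimulateOracle$ ``at most doubles'' $\Vknown$ without elaboration, whereas you correctly identify that sharpening the naive factor $4$ from Lemma~\ref{lem:quantumlayersimulator-works} down to $2$ relies on the fact that $\bottleneck$ (via line~\ref{algline:extendtotree}) always returns a connected subtree rooted at $\entrance$, so that one neighbor of each queried vertex is already known; your accounting of the per-layer merges into $\Vhknown$ is also slightly more faithful to the pseudocode than the paper's, which only tracks the final $\Vknown^{q(n)}$.
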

\begin{proof}
    Lemma \ref{lem:keylemma1} shows that, within each call of
    $\BottleneckQuantumTierSimulator$, $\abs*{\Vknown^{0}} \leq 2n(g(n) +
    \abs{r})$.  Since $\SimulateOracle$ at most doubles the size of the set $\Vknown$
    that it acts on, and since, by Lemma \ref{lem:keylemma1}, $\bottleneck$ adds
    at most $2n(g(n) + \abs{r})$ to the size of the set $\Vknown$ that it acts
    on, we have that, within $\BottleneckQuantumTierSimulator$,
    $\abs*{\Vknown^{t}} \leq 2 \abs*{\Vknown^{t-1}} + 2n(g(n) + \abs{r})$.
    Following this recursion through $q(n)$ quantum layers yields
    \begin{equation}
        \abs{\Vknown^{q(n)}} \leq q(n) 2^{q(n)}2n(g(n) + \abs{r}).
    \end{equation}
    At the end of
    $\BottleneckQuantumTierSimulator$, $\Vknown^{q(n)}$ is merged with
    $\Vhknown$, thus increasing $\Vhknown$ in size by at most 
    \begin{equation}
        q(n) 2^{q(n)}2n(g(n) + \abs{r}).
    \end{equation}
    The subroutine
    $\BottleneckQuantumTierSimulator$ is run $k$ times in $\M^k$, and so, by the
    end of $\M^k$, we have that
    \begin{equation}
        \abs{\Vhknown} \leq k q(n) 2^{q(n)}2n(g(n) + \abs{r}),
    \end{equation}
    as desired.
\end{proof}

\begin{lemma}\label{lem:abort-probability}
    In the context of algorithm $\M^i$, the subroutine $\bottleneck(i, x,
    \Vcknown, \Vhknown)$ does not $\abort$ with probability higher than
    $2^{-n/8}$.
\end{lemma}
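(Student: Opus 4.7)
\emph{Proof proposal.} I would separately bound the probabilities of the two distinct abort events and then union-bound. The first abort is the size check on line~\ref{algline:haltifsmall1}: it triggers exactly when $\abs{\T_{\Vcknown,x,r_{\leq i}}^i} < 2^{-n(g(n)+\abs{r})}\abs{\T_{\Vcknown}}$. The second abort is inside the \while{} loop, on line~\ref{algline:guessabort1}: it triggers when the greedy search finds a label $b'$ violating the $2^{-n/100}$-guessing bound that does not appear in $\Vhknown$.

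For the first abort event, I would do a direct counting argument. Conditioned on $\Vcknown$, the tree $P$ is uniform on $\T_{\Vcknown}$, so the joint probability (over $P$ and the seed $r$) of producing a specific pair $(x,r_{\leq i})$ equals $2^{-|r_{\leq i}|}\cdot\abs{\T_{\Vcknown,x,r_{\leq i}}^i}/\abs{\T_{\Vcknown}}$. Summing this only over the ``bad'' pairs that trigger the size check, and bounding the number of such pairs by $2^{g(n)+\abs{r}}$, the total mass is at most
\begin{equation}
    2^{g(n)+\abs{r}}\cdot 2^{-\abs{r}}\cdot 2^{-n(g(n)+\abs{r})} \leq 2^{-n},
\end{equation}
which is comfortably below $2^{-n/8}$.

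For the second abort event, I would reduce to the classical hardness of label discovery (Lemma~\ref{lem:discovery-probability}). The key conceptual point is that, by the definition of $\T^i_{\Vknown,x,r_{\leq i}}$ and Bayes, the conditional distribution of the true tree $P$ given that $\M^i$ has reached the current state of $\bottleneck$ is exactly uniform over $\T^i_{\Vknown,x,r_{\leq i}}$. Therefore, whenever the greedy loop produces a violating $b'$, the actual unknown tree $P$ has $b'$ as a valid label with probability greater than $2^{-n/100}$. Now consider the classical algorithm that runs $\M^k$ faithfully and, upon the second abort, outputs $b'$ as its guess. Each call to the simulator uses at most $\abs{\Vhknown}$ classical queries to $T$, which by Lemma~\ref{lem:queryupperbound} is sub-exponential. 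Since the output $b'$ lies outside $\Vhknown$ and is therefore never the result of a query, Lemma~\ref{lem:discovery-probability} forces
\begin{equation}
    \Pr[\text{abort 2}]\cdot 2^{-n/100}
    \;\leq\; \abs{\Vhknown}\cdot \frac{2^{n+2}-2}{2^{2n}}
    \;\leq\; \poly(n)\cdot 2^{-n+3},
\end{equation}
so $\Pr[\text{abort 2}]\leq 2^{-(99/100)n + O(\log n)}$, which is again well below $2^{-n/8}$. Summing the two contributions proves the claim.

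The main obstacle I anticipate is the Bayesian step in the second bound: one must justify that conditioning on ``the simulator has reached this exact point in $\bottleneck$ with dictionary $\Vknown$, input $x$, and seed $r_{\leq i}$'' does not introduce any information about $P$ beyond membership in $\T^i_{\Vknown,x,r_{\leq i}}$. This requires checking that every branching decision made by $\M^i$ and by $\bottleneck$ prior to this point is a deterministic function of $(x, r_{\leq i}, \Vknown)$ (and the labels recorded in $\Vhknown$, all of which are consistent with $P$ by construction), so that no additional oracle information about $P$ has leaked. Once this is established, the reduction to Lemma~\ref{lem:discovery-probability} is clean and the numeric bound falls out.
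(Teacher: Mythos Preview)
Your proposal is correct and matches the paper's approach: split into the two abort cases, handle the size-check abort (line~\ref{algline:haltifsmall1}) by a direct counting argument over $(x,r)$ pairs, and handle the guess abort (line~\ref{algline:guessabort1}) by a reduction to Lemma~\ref{lem:discovery-probability}. The only notable differences are that the paper additionally union-bounds over the at-most-$2(g(n)+|r|)$ iterations of the \texttt{while} loop in the second case, and resolves your flagged Bayesian obstacle not by claiming the posterior is uniform on $\T^i_{\Vknown,x,r_{\leq i}}$ (which is not literally true, since the simulator's state also fixes $\Vhknown$) but by observing that $\Vknown$ and $x$ are deterministic functions of $\Vhknown$.
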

\begin{proof}	
    We bound the probability that $\bottleneck(i, x, \Viknown, \Vfknown)$ calls
    $\abort$ by considering the two separate $\abort$ cases.  

    \paragraph{Case 1. \abort{} in line~\ref{algline:haltifsmall2}}
    Let
    \begin{align}
        N \coloneqq \left\{ 
            x,r:
            \bigabs{\T_{\Vcknown,x,r_{\leq i}}^i} 
            < 2^{-n(g(n)+\abs{r})} \bigabs{\T_{\Vcknown}} 
        \right\}
    \end{align}
    denote the set of bitstring, seed tuples which are unlikely. In other words,
    $x,r \in N$ implies that, under seed $r$, the set of black-box trees
    consistent with $\M^i$ outputting $x$ and $\Vcknown$ is exponentially
    smaller than the set of all black-box trees consistent with the $\Vcknown$.

    By definition, probability of an $\abort$ in line \ref{algline:haltifsmall2}
    of $\bottleneck$ is the probability of receiving an unlikely bitstring.
    \begin{align}
        \mathbb{P}_{x,r|\Vcknown} \left [ \Bigabs{\T_{\Vcknown,x,r_{\leq i}}^i} <
        2^{-n(g(n)+\abs{r})} \abs{\T_{\Vcknown}} \right ] 
    &= \sum_{x,r \in N} \mathbb{P}\Bigl[x,r \,\Big\vert\,
    \Vcknown\Bigr]\\
    & = \sum_{x,r \in N} \frac{\Bigabs{\T_{\Vcknown,x,r}^i}}{\abs{\T_{\Vcknown}}}\\
    &\leq \sum_{x,r \in N} 2^{-n(g(n)+\abs{r})} \\
    & \leq \sum_{x,r} 2^{-n(g(n)+\abs{r})}\\
    &\leq 2^{\abs{x} + \abs{r}}2^{-n(g(n)+\abs{r})}\\
    &\leq 2^{g(n) + \abs{r}}2^{-n(g(n)+\abs{r})}\\
    &\leq 2^{-(n-1)(g(n)+\abs{r})}\\
    &\leq 2^{-(n-1)}
    \end{align}

    \paragraph{Case 2. \abort{} in line~\ref{algline:guessabort2}}
    In this case we must have a label $b'$ which is does not appear in $\Vknown
    \subseteq \Vhknown$, or $\Vhknown$, and yet satisfy
    \begin{equation}
        \mathbb{P}_{P \in \T_{\Vknown, x, r_{\leq i}}^i}
        \left[b' \text{ is a valid label in P} \right]
        \geq 2^{-n/100}.
    \end{equation}
    Since the entire algorithm up to this point has only made $\abs*{\Vhknown}$
    classical queries, we know from Lemma~\ref{lem:discovery-probability}, that
    with our fixed $r$ but over the randomness in the black-box, the probability
    of guessing a valid label outside of $\Vhknown$ is at most
    \begin{align}
        \abs*{\Vhknown} \cdot \frac{2^{n+2}-2}{2^{2n}} \leq  2^{-n/2}.
    \end{align}
    Here the inequality follows for sufficiently large $n$ because, by Lemma
    \ref{lem:queryupperbound} we have that $\abs*{\Vhknown} \leq k q(n)
    2^{q(n)}2n(g(n) + \abs{r})$ which is less than $ 2^{-n/2}$ for sufficiently
    large $n$ because $q(n) = \polylog(n)$, and $\abs{r} = \poly(n)$ (recall
    that $\abs{r} \leq nkq(n)g(n)$).  However, since (for our fixed $r$)
    $\Vknown$ and $x$ are deterministic functions of $\Vhknown$, it follows that
    the probability of arriving in the above situation can be at most
    $\frac{2^{-n/2}}{2^{-n/100}} \leq 2^{-n/4}$.  Any higher probability of this
    event would yield an algorithm for guessing a valid label outside of
    $\Vhknown$ that contradicts Lemma~\ref{lem:discovery-probability}.

    Note that lines \ref{algline:guessabort1} and \ref{algline:guessabort2}
    are run at most $2(g(n) + \abs{r})$ times.  Thus, union bounding over the
    two different $\abort$ cases gives an upper bound on the total abort
    probability of $2(g(n) + \abs{r})2^{-n/4} + 2^{-(n-1)} \leq 2^{-n/8}$ where
    the inequality holds for sufficiently large $n$ because $g(n)$ and $\abs{r}$
    scale polynomially.
\end{proof}

\begin{lemma}\label{lem:keylemma2}
    Consider the states $\ket{\phi_\ell}$ and $\ket{\psi_\ell}$ in the
    subroutine $\BottleneckQuantumTierSimulator(C_i, i-1,y, \Viknown,
    \Vknown^{hist}, T$) in $\M^i$.  As long as the subroutine $\bottleneck$ does
    not $\abort$ in the entire course of $\M^i$ up to this point, then these two
    states satisfy:
    \begin{align}
        \|\ket{\psi_\ell} - L_\ell^T \ket{\phi_\ell} \|_1  \leq 2^{-n/100}
    \end{align}
\end{lemma}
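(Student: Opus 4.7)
The plan is to mirror the proof of Lemma~\ref{lem:quantumlayersimulator-works}, but with the crucial difference that the use of Lemma~\ref{lem:discovery-probability} (the classical-query label-guessing bound) is replaced by the Information Bottleneck guarantee of Lemma~\ref{lem:keylemma1}, which is precisely what the $\bottleneck$ subroutine was engineered to provide. Inside $\QuantumLayerSimulator$ the state $\ket{\phi_\ell} = L_\ell^G \ket{\psi_{\ell-1}}$ is expanded in the classical basis as $\ket{\phi_\ell} = \sum_z c_z \ket{z}$, and $\ket{\psi_\ell} = \sum_z c_z \ket{S(z)}$ where $S$ is the oracle-simulation function returned by $\SimulateOracle$. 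Defining $\Outliers \coloneqq \{z : \ket{S(z)} \neq L_\ell^T\ket{z}\}$ as before, the Fuchs--van de Graaf calculation from Lemma~\ref{lem:quantumlayersimulator-works} gives $\tnorm{\ket{\psi_\ell} - L_\ell^T\ket{\phi_\ell}} \leq 2\sqrt{\sum_{z \in \Outliers}|c_z|^2}$, and the whole task reduces to bounding $\sum_{z \in \Outliers}|c_z|^2$ in expectation over the posterior distribution of $T$ given the run of $\M^i$ so far.

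From the structure of $\SimulateOracle$, a basis state $z$ lies in $\Outliers$ only if, for some query gate $K$ of $L_\ell^T$, the $x$-register substring $z_{K,x}$ is a valid label in $T$ but does not appear in $\Vknown^{\ell-1}$. The key observation is that, conditional on the random seed $r$ and the algorithmic state up to the start of layer $\ell$ (which determines $\ket{\phi_\ell}$, the circuit $L_\ell^T$, and $\Vknown^{\ell-1}$), each $z_{K,x}$ is a fixed bitstring and not a random variable in $T$. Because $\bottleneck$ did not $\abort$, the dictionary $\Vknown^{\ell-1}$ returned by its most recent invocation satisfies the bottleneck property of Lemma~\ref{lem:keylemma1}: for every $b \notin \Vknown^{\ell-1}$, the conditional probability $\Pr_{P \in \T_{\Vknown^{\ell-1},x,r_{\leq i}}^i}[b \text{ is valid in } P] \leq 2^{-n/100}$. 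Applying this bound to each $z_{K,x}$ and taking a union bound over the at most $g(n)$ query gates $K$ in $L_\ell^T$ gives that, for every fixed $z$, the probability over $T$ that $z \in \Outliers$ is at most $g(n)\cdot 2^{-n/100}$.

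Summing over $z$ weighted by $|c_z|^2$ and using $\sum_z|c_z|^2 = 1$ yields the expected-value bound $\mathbb{E}_T[\sum_{z \in \Outliers}|c_z|^2] \leq g(n)\cdot 2^{-n/100}$, and Jensen's inequality then produces $\mathbb{E}_T[\tnorm{\ket{\psi_\ell} - L_\ell^T\ket{\phi_\ell}}] \leq 2\sqrt{g(n)\cdot 2^{-n/100}}$, which is $2^{-\Omega(n)}$ and, for sufficiently large $n$ with $g(n) = \poly(n)$, can be absorbed into the stated $2^{-n/100}$ bound (after an inconsequential adjustment of the universal constant in the exponent, which is also the reason Lemma~\ref{lem:keylemma1} was stated with the particular constant $n/100$).

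The main obstacle will be making the conditioning completely precise: one must verify that after fixing $r$, the dictionaries $\Vknown^{\ell-1}$, $\Vhknown$ and the tier-input $x$ are deterministic functions of $T$, and that the posterior distribution of $T$ conditional on these quantities is exactly the set $\T_{\Vknown^{\ell-1}, x, r_{\leq i}}^i$ appearing in Lemma~\ref{lem:keylemma1}, so that the bottleneck property applies to each $\Vknown^{\ell-1}$ produced \emph{inside} the tier and not only to the dictionary at its start. A secondary subtlety is that the bound is naturally a statement in expectation over the posterior on $T$, which is the form in which it will feed into the error-propagation analysis across the $\poly(n)$ tiers of $\M^\eta$.
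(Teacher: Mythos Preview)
Your proposal follows essentially the same route as the paper: rerun the Outliers/Fuchs--van de Graaf calculation from Lemma~\ref{lem:quantumlayersimulator-works}, recycle the guessing algorithm $\mathcal{B}$, and replace the classical-query bound of Lemma~\ref{lem:discovery-probability} by the Information-Bottleneck guarantee of Lemma~\ref{lem:keylemma1}. The paper does exactly this.

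The one substantive difference is where the bottleneck property is anchored and how the conditioning obstacle you flag is discharged. You apply Lemma~\ref{lem:keylemma1} to $\Vknown^{\ell-1}$ and phrase the argument as an expectation over a posterior on $T$ given the run so far. The paper instead applies it to $\Vknown^{\ell}$ (the dictionary \emph{after} the bottleneck call at layer $\ell$) and never speaks of a posterior: it argues directly that every input to $\mathcal{B}$---namely $\ket{\phi_\ell}$, $L_\ell^T$, $\Vknown^{\ell-1}$, and $\Vknown^{\ell\text{-temp}}$---can be reconstructed from $(\Vknown^{\ell}, x)$ alone, using the nesting $\Vknown^k \subseteq \Vknown^{\ell}$ for all $k \leq \ell$. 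The choice of $\Vknown^{\ell}$ rather than $\Vknown^{\ell-1}$ matters here, because $\SimulateOracle$ at layer $\ell$ queries the frontier of $\Vknown^{\ell-1}$ and produces $\Vknown^{\ell\text{-temp}}$, which sits inside $\Vknown^{\ell}$ but not inside $\Vknown^{\ell-1}$; only the former therefore pins down $\mathcal{B}$ as a function independent of the rest of $T$. This is precisely the resolution of your ``main obstacle,'' and it sidesteps the worry that the true posterior is conditioned on all of $\Vhknown$ rather than on $\Vknown^{\ell-1}$.
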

\begin{proof}
    It follows from Lemma \ref{lem:keylemma1} that, so long as $\bottleneck$
    does not $\abort$, no algorithm whose knowledge of the blackbox $T$ is
    limited to 
    \begin{equation}
        \Vknown^{\ell} \coloneqq \bottleneck(j-1, x, \Vknown^{hist},
        \Vknown^{\ell\text{-temp}})
    \end{equation}
    and the string $x = \M^{i-1}(C(T), T)[1]$, can guess a valid label outside
    of $\Vknown^{\ell}$ with probability greater than $2^{-n/100}$.  There is
    simply too great a variety of blackboxes $P$ consistent with
    $\Vknown^{\ell}$ and $x = \M^{i-1}(C(P), P)[1]$ for such a guessing scheme
    to be possible.  

    Consider the algorithm $\mathcal{B}$ described in the proof of Lemma
    \ref{lem:quantumlayersimulator-works}.  Let $\mathcal{B}$ use the classical
    description of $\ket{\phi_\ell} \equiv \sum_z c_z \ket{z}$, the layer
    $L_\ell^T$, and the sets  $\Vknown^{\ell-1}$, $\Vknown^{\ell\text{-temp}}$
    in place of $\ket{\phi}$,  $L^T$,  $\Vknown$, $\Vknown'$ (respectively) in
    its original definition.  As shown in Lemma
    \ref{lem:quantumlayersimulator-works}, this algorithm guesses a valid label
    outside of $\Vknown^{\ell}$ with probability  at least 
    \begin{equation}
    \frac{1}{g(n)}
    \sum_{z \in \Outliers} |c_z|^2 
    = \fid(\ket{\psi_\ell}, L_\ell^T \ket{\phi_\ell}) 
    \geq \sqrt{\frac{1}{2}\|\ket{\psi_\ell} - L_\ell^T \ket{\phi_\ell} \|_1}.
    \end{equation}
    Furthermore, we will show below that this algorithm's knowledge of the
    blackbox $T$ is limited to $\Vknown^{\ell} \coloneqq \bottleneck(j-1, x,
    \Vknown^{hist}, \Vknown^{\ell\text{-temp}})$ and the string $x =
    \M^{i-1}(C(T), T)[1]$.    It follows by Lemma \ref{lem:keylemma1} that
    $\|\ket{\psi_\ell} - L_\ell^T \ket{\phi_\ell} \|_1  \leq 2^{-n/50+1} \leq
    2^{-n/100}$ (for sufficiently large $n$), which is the desired result.  

    To complete the proof: The reason that $\mathcal{B}$ is only a function of
    $\Vknown^{\ell} \coloneqq \bottleneck(j-1, x, \Vknown^{hist},
    \Vknown^{\ell\text{-temp}})$ and the string $x = \M^{i-1}(C(T), T)[1]$ is
    that $\mathcal{B}$ only takes input ($\ket{\phi_\ell}$,  $L_\ell^T$,
    $\Vknown^{\ell-1}$, $\Vknown^{\ell\text{-temp}}$) and each of these can be
    constructed from $\Vknown^{\ell}$ and $x$.  To see this, note that
    $L_\ell^T$ is just the circuit diagram of a layer, it is not a function of
    the blackbox $T$ at all, it simply illustrates where queries to $T$ are made
    in the circuit.  Secondly,  $\Vknown^{\ell-1}$, and
    $\Vknown^{\ell\text{-temp}}$ are subsets of $\Vknown^{\ell}$ by definition
    (see the pseudocode of $\bottleneck$ and $\BottleneckQuantumTierSimulator$).
    Finally, the classical description of $\ket{\phi_\ell}$ is computed by
    starting with $x$ and proceeding through $\ell$ iterations of the loop in
    $\BottleneckQuantumTierSimulator$.  Since, within that loop, $\Vknown^{k}
    \subset \Vknown^{\ell}$ for $k \leq \ell$, the desired results follows.
\end{proof}

\begin{theorem}\label{thm:maintechthm}
    Let $\T$ be the set of all Welded Tree black-boxes.  Given a
    \emph{$(n,\ell(n),q(n), g(n))$-hybrid-quantum circuit} $C(T)$, which solves
    the Welded Tree problem with probability $p$ we have that 
    \begin{equation}
        \mathbb{P}_{T \in \T}[C(\M^{\ell(n)}(C(T), T)) 
        \text{ solves the Welded Tree problem}] 
        \geq p - 2^{-n/400}
    \end{equation}
\end{theorem}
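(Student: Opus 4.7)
The plan is to bound the total variation distance between the output distribution of $\M^{\ell(n)}(C(T), T)$ and that of $C(T)$ itself by $2^{-n/400}$; the conclusion on success probability then follows because total variation distance upper-bounds the difference in probabilities of any event. Two sources of discrepancy must be controlled: (i) the event $\mathcal{E}$ that some call to $\bottleneck$ during $\M^{\ell(n)}$ triggers $\abort$, and (ii) the accumulated trace-distance error, conditioned on $\neg\mathcal{E}$, between the state the simulator maintains and the true state of $C(T)$ at each layer.

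For (i), I would apply Lemma~\ref{lem:abort-probability} together with a union bound. Since $\bottleneck$ is invoked at most $O(\ell(n) q(n))$ times in total (at most once at the start of each tier and once per quantum layer within a tier), and each invocation aborts with probability at most $2^{-n/8}$, the overall abort probability is at most $\poly(n)\cdot 2^{-n/8}$, which is $\leq 2^{-n/10}$ for sufficiently large $n$.

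For (ii), conditioned on $\neg\mathcal{E}$, I would apply Lemma~\ref{lem:keylemma2} layer-by-layer. That lemma says $\tnorm{\ket{\psi_\ell} - L_\ell^T \ket{\phi_\ell}} \leq 2^{-n/100}$, where $\ket{\phi_\ell}$ is obtained by faithfully applying the non-query gates $L_\ell^G$ to the simulator's previous state $\ket{\psi_{\ell-1}}$. Using the triangle inequality together with the fact that trace distance is nonincreasing under quantum channels (each $L_\ell^G$ and each $L_\ell^T$), the per-layer deviation accumulates additively, so after all $\ell(n)\cdot q(n)$ quantum layers the simulator's intermediate state differs from the true state of $C(T)$ in trace distance by at most $\ell(n)q(n)\cdot 2^{-n/100}$. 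Classical-basis measurements at the end of each quantum tier translate this trace-distance bound into a total-variation bound on classical outcomes, and the classical processing performed between tiers (which in the modified definition amounts to additional quantum layers, by the reduction in Definition~\ref{def:wlog-hybrid-quantum-circuit}) can only decrease the distance further by data processing. This gives a total TV deviation, conditioned on $\neg\mathcal{E}$, of at most $\poly(n)\cdot 2^{-n/100} \leq 2^{-n/200}$ for sufficiently large $n$.

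Combining, the TV distance between the output of $\M^{\ell(n)}(C(T), T)$ and that of $C(T)$ is at most $\Pr[\mathcal{E}] + 2^{-n/200} \leq 2^{-n/10} + 2^{-n/200} \leq 2^{-n/400}$ for sufficiently large $n$, giving the claimed bound. The main subtlety is that the inductive invariant in step (ii) is stated in terms of the simulator's own $\Vknown$ (shrunk by $\bottleneck$), rather than $\Vhknown$: each successful invocation of $\bottleneck$ provably preserves the property needed to invoke Lemma~\ref{lem:keylemma2}, namely that no label outside $\Vknown$ can be guessed with probability greater than $2^{-n/100}$ given the information that has crossed the information bottleneck. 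This, together with Lemma~\ref{lem:queryupperbound} guaranteeing the classical query count remains subexponential so that Lemma~\ref{lem:discovery-probability} applies in the analysis of Lemma~\ref{lem:abort-probability}, is what makes the telescoping of errors across $\poly(n)$ tiers possible.
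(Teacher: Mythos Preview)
Your proposal is correct and follows essentially the same approach as the paper's proof: bound the total discrepancy as the sum of (i) the probability that some call to $\bottleneck$ aborts (via Lemma~\ref{lem:abort-probability} and a union bound over the $O(\ell(n)q(n))$ invocations) and (ii) the layer-by-layer trace-distance error conditioned on no abort (via Lemma~\ref{lem:keylemma2}, the triangle inequality, and contractivity of trace distance under channels). The paper phrases this as reusing the analysis of $\A^i$ with Lemma~\ref{lem:keylemma2} substituted for Lemma~\ref{lem:quantumlayersimulator-works}, but the structure and the lemmas invoked are the same as what you describe.
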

\begin{proof}
    Recall the analysis of algorithm $\A^i$.  That same analysis could be used
    to analyze $\M^i$ except for one point, at Equation \eqref{eq:boundbyguess}
    in Lemma \ref{lem:quantumlayersimulator-works}, where we use that the set
    $\Vknown^{\ell+1}$ has the property that the probability of guessing a valid
    label outside of $\Vknown^{\ell+1}$ is at most $\abs*{\Vknown^{\ell+1}}
    \cdot \frac{2^{n+2}-2}{2^{2n}}$.  In the analysis of $\M^i$, within the
    subroutine $\BottleneckQuantumTierSimulator$ the set $\Vknown^{\ell+1}$ is
    defined differently than in $\A^i$ (it is modified by the subroutine
    $\bottleneck$), and so Equation \eqref{eq:boundbyguess} no longer holds.  To
    fix this, we replace Lemma \ref{lem:quantumlayersimulator-works} with
    Lemma~\ref{lem:keylemma2}, and continue the rest of the proof as originally
    specified in Section \ref{sec:few-tiers}. However, it follows from Lemma
    \ref{lem:keylemma1} that, so long as $\bottleneck$ does not $\abort$, the
    appropriate analog of the key Equation \eqref{eq:updated-tnorm} in the
    statement of Lemma \ref{lem:quantumlayersimulator-works} still holds in our
    new context.  Therefore, in the case that $\bottleneck$ never calls $\abort$
    in the entire course of $\M^i$, we may use the same analysis for $\M^i$ as
    we did for $\A^i$, except replacing Equation~\eqref{eq:updated-tnorm} with
    the statement of Lemma~\ref{lem:keylemma2}.  It follows that, in the absence
    of an $\abort$, an error in the simulation of at most $2^{-n/200}$ (in the
    trace norm) is incurred for every quantum layer.  Since there are $k q(n)$
    quantum layers in $C^k(T)$, this analysis gives
    \begin{equation}
        \cdist{\M^{k}(T)[1]}{C^k(T)}
        \leq k q(n) 2^{-n/200} + \mathbb{P}[\M^{k} \text{ calls } \abort],
    \end{equation}

    We know from Lemma~\ref{lem:abort-probability} that $\bottleneck$ has less than
    $2^{-n/8}$ probability of $\abort$ every time that it is called in $\M^k$.
    Since $\bottleneck$ is only called once per layer, and  $k q(n)$ quantum
    layers in $C^k(T)$ it follows by union bound that
    \begin{equation}
        \mathbb{P}[\M^{k} \text{ calls } \abort] \leq k q(n) 2^{-n/8}
    \end{equation}
    Thus,
    \begin{align}
        \cdist{\M^{k}(T)[1]}{C^k(T)}
 &\leq k q(n) 2^{-n/200}+\mathbb{P}[\M^{k} \text{ calls } \abort]\\
 &\leq  k q(n) 2^{-n/200}+k q(n) 2^{-n/8}\\
 &\leq 2 k q(n) 2^{-n/200}\\
 &\leq \ell(n) q(n) 2^{-n/200},
    \end{align}
    and the desired result follows for sufficiently large $n$.
\end{proof}

We now have the tools to prove Theorem~\ref{thm:mainthm-formal}.

\begin{proof}[Proof of Theorem~\ref{thm:mainthm-formal}]
    Suppose that
    \begin{equation}
        \H(T) = \{H_n : n \in \NN \},
    \end{equation}
    is a $\HQC$ algorithm. Then, by definition, $\H(T)$ is a polynomial-time
    uniform family of $(n, \eta, q(n), g(n))$-hybrid-quantum circuits (as
    defined in Definition~\ref{def:wlog-hybrid-quantum-circuit}) for some
    polynomials $\eta, g(n)$ and polylogarithm $q(n)$, querying $T$.

    Note that, by Lemma \ref{lem:queryupperbound} we know that the size of
    $\Vhknown$ at the end of $\M^k$ is at most $k q(n) 2^{q(n)}2n(g(n) +
    \abs{r})$.  This quantity is pseudopolynomial because $q(n) = \polylog(n)$,
    and $\abs{r} = \poly(n)$ (recall that $\abs{r} \leq n\eta q(n)g(n)$).  With
    this observation, the desired result follows for sufficiently large $n$ by
    combining Theorem \ref{thm:maintechthm} with the classical lower bound
    Theorem \ref{thm:class-lower-bound}.
\end{proof}

\end{document}